\patchcmd{\@outputpage@head}{\@ifx{\LS@rot\@undefined}{}{\LS@rot}}{}{}{}
\colorlet{Changes@Color}{magenta}
\newcommand{\stkout}[1]{\ifmmode\text{\sout{\ensuremath{#1}}}\else\sout{#1}\fi}
\def\maxwidth{\ifdim\Gin@nat@width>\linewidth\linewidth\else\Gin@nat@width\fi}
\def\maxheight{\ifdim\Gin@nat@height>\textheight\textheight\else\Gin@nat@height\fi}
\def\fps@figure{htbp}
\newcommand{\matr}[1]{\mathord{\buildrel{\lower3pt\hbox{\scriptsize$\leftrightarrow$}}\over{ \mathbf{#1}}}}
\newmdtheoremenv[innertopmargin=0pt,innerbottommargin=7pt,skipabove=15pt,skipbelow=15pt]{theorem}{Theorem}
\newmdtheoremenv[innertopmargin=0pt,innerbottommargin=7pt,skipabove=15pt,skipbelow=15pt]{corollary}{Corollary}
\newmdtheoremenv[innertopmargin=0pt,innerbottommargin=7pt,skipabove=15pt,skipbelow=15pt]{definition}{Definition}
\begin{document}

\title{Non-Abelian gauge field optics}

\author{Yuntian Chen}\thanks{These authors contributed equally to this work.}%
\affiliation{School of Optical and Electronic Information, Huazhong University of Science and Technology, Wuhan 430074, China}
\affiliation{Wuhan National Laboratory of Optoelectronics, Huazhong University of Science and Technology, Wuhan 430074, China}

\author{Ruo-Yang Zhang}\thanks{These authors contributed equally to this work.}
\affiliation{Department of Physics, The Hong Kong University of Science and Technology, Clear Water Bay, Hong Kong}

\author{Zhongfei Xiong}
\affiliation{School of Optical and Electronic Information, Huazhong University of Science and Technology, Wuhan 430074, China}

\author{Zhi~Hong Hang}
\affiliation{School of Physical Science and Technology and Institute for Advanced Study, Soochow University, Suzhou 215006, China}

\author{Jensen Li}
\affiliation{Department of Physics, The Hong Kong University of Science and Technology, Clear Water Bay, Hong Kong}

\author{Jian Qi Shen}
\email{jqshen@zju.edu.cn}
\affiliation{Centre for Optical and Electromagnetic Research, State Key Laboratory of Modern Optical Instrumentations, Zhejiang University, Hangzhou 310058, China;}

\author{C. T. Chan}
\email{phchan@ust.hk}
\affiliation{Department of Physics, The Hong Kong University of Science and Technology, Clear Water Bay, Hong Kong}


\begin{abstract}
The concept of gauge field is a cornerstone of modern physics and the synthetic gauge field has emerged as a new way to manipulate particles in many disciplines. In optics, several schemes of Abelian synthetic gauge fields have been proposed. Here, we introduce a new platform for realizing synthetic $\mathrm{SU}(2)$ non-Abelian gauge fields acting on two-dimensional optical waves in a wide class of anisotropic materials and discover novel phenomena. We show that a virtual non-Abelian Lorentz force arising from material anisotropy can induce light beams to travel along \textit{Zitterbewegung} trajectories even in homogeneous media. We further design an optical non-Abelian Aharonov–-Bohm system which results in the exotic spin density interference effect. We can extract the Wilson loop of an arbitrary closed optical path from a series of gauge fixed points in the interference fringes. Our scheme offers a new route to study $\mathrm{SU}(2)$ gauge field related physics using optics.
\end{abstract}

\maketitle

Gauge fields originated from classical electromagnetism, and have become the kernel of fundamental physics after being extended to non-Abelian by Yang and Mills~\cite{yang1954conservation}. Apart from real gauge bosons, emergent gauge fields in either real~\cite{mead1979determination} or parameter spaces~\cite{berry1984quantal,wilczek1984appearance} have recently been widely used to elucidate the complicated dynamics in a variety of physical systems~\cite{aidelsburger2018artificial}, including electronic~\cite{xiao2010berry,fujita2011gauge}, ultracold atom~\cite{lin2009synthetic,dalibard2011colloquium,Goldman2014}, and photonic~\cite{onoda2004hall,BliokhPRE2004,onoda2006geometrical,bliokh2007non,bliokh2008geometrodynamics,ma2016spin,sawada2005optical,wang2008reflection,fang2013effective,Hafezi2011NaturePhyscs,fang2012photonic,FangNP2012,fang2013controlling,rechtsman2013strain,jia2019observation,ChenWj2014,JensenLi2015PRL,liu2015polarization,liu2017invisibility,liu2018realizing,schine2016synthetic} systems. The geometric nature~\cite{Wuyang} of gauge theory makes it a powerful tool for studying the topological phases of matter~\cite{Qi2011,senthil2015symmetry,lu2014topological,ozawa2018topological}.

The concept of emergent gauge fields has offered us new insights in optics and photonics, such as the manifestation of the gauge structure (Berry connection and curvature) in momentum space~\cite{onoda2004hall,onoda2006geometrical,BliokhPRE2004,bliokh2007non,bliokh2008geometrodynamics,ma2016spin}. Artificial gauge fields realized by breaking time reversal symmetry with magnetic effects~\cite{sawada2005optical,wang2008reflection,fang2013effective} or dynamic modulation~\cite{fang2012photonic,FangNP2012,fang2013controlling} have given rise to new paradigms for controlling light trajectories in real space. Even for time-reversal-invariant systems,  a pair of virtual magnetic fields -- each being the time-reversed partner of the other -- can be generated using methods such as coupled optical resonators~\cite{Hafezi2011NaturePhyscs}, engineering  lattices with strain~\cite{rechtsman2013strain,jia2019observation}, or reciprocal metamaterials~\cite{ChenWj2014,JensenLi2015PRL,liu2015polarization,liu2017invisibility,liu2018realizing}.
However, except for a few works revealing the non-Abelian gauge structure in momentum space~\cite{onoda2006geometrical,bliokh2007non,ma2016spin}, all of these schemes of synthetic gauge fields in real space are restricted to the Abelian type.  

Recently,  anisotropic metamaterials were used to manipulate light through artificial Abelian gauge fields~\cite{JensenLi2015PRL,liu2015polarization,liu2017invisibility,liu2018realizing}. It was demonstrated that the off-diagonal components of permittivity and permeability appear as a pair of ``spin-dependent'' vector potentials  in the 2-dimensional (2D) wave equation for certain anisotropic media. Though the material parameters are subjected to strong restriction in this scheme, the internal pseudo-spin degree of freedom implies the possible generalization to a synthetic non-Abelian gauge field theory for light by coupling the spin-up and spin-down states. 

In this work, we discover that the transport of optical waves in a wide class of anisotropic media can be associated with an emergent 2D non-Abelian $\mathrm{SU}(2)$ gauge interaction in real space, enabling us to obtain the first scheme for realizing synthetic non-Abelian gauge field for classical waves. Contrary to intuition, we show that a more exotic general $\mathrm{SU}(2)$ gauge framework can manifest in 2D optical dynamics, provided the restriction on the material parameters employed in refs.~\cite{JensenLi2015PRL,liu2015polarization,liu2017invisibility,liu2018realizing} is relaxed. Our platform presents broader applicability and allows the study of novel optical phenomena not found in Abelian synthetic gauge field systems.
We illustrate our idea with two examples. The first example is the \emph{Zitterbewegung} (ZB) of light in homogeneous non-Abelian media, which refers to the trembling motion of wave packets~\cite{zawadzki2011zitterbewegung}. ZB has been realized in systems possessing Dirac dispersion~\cite{vaishnav2008observing,gerritsma2010quantum,zhang2008observing,dreisow2010classical,fan2015plasmonic,chenjing2016OE}, but we will see that ZB of light can arise from a distinctly different mechanism: emergent non-Abelian Lorentz force. In the second example, we propose for the first time a concrete design of a genuine non-Abelian Aharonov-Bohm (AB) system~\cite{Raman1986} using two synthetic non-Abelian vortices, and reveal that the noncommutativity of winding around the two vortices gives rise to nontrivial interference results. In particular, we show that there exists a series of fixed points in the interference fringes invariant under gauge transformation, from which we can obtain the Wilson loops of the closed path concatenated by the two interfering optical paths. As evidenced by the examples, our scheme offers a fresh angle to understand the dynamic effects of light in anisotropic media, and also suggests an optical approach to probe new physics accompanied by $\mathrm{SU}(2)$ gauge fields.

\textbf{Results}\\[3pt]
\textbf{Non-Abelian gauge fields acting on light. }
Our scheme focuses on 2D propagating optical waves in nondissipative anisotropic media characterized by the permittivity and permeability tensors:
\begin{equation}\arraycolsep=4pt\def\arraystretch{1.5}\label{material parameters}
\matr{\varepsilon}/\varepsilon_0=\left(\begin{array}{c|c}  \matr{\varepsilon}_T  & \mathbf{g}_1 \\\hline
\mathbf{g}_1^\dagger & \varepsilon_z
\end{array}\right),\quad
\matr{{\mu}}/\mu_0=\left(\begin{array}{c|c}
\matr{{\mu}}_T & \mathbf{g}_2 \\\hline
\mathbf{g}_2^\dagger & \mu_z
\end{array}\right).
\end{equation}
Here, all of the parameters depend on $x,y$; the diagonal blocks $\matr{\varepsilon}_T$, $\matr{\mu}_T$, $\varepsilon_z$, $\mu_z$ are real numbers, while the off-block-diagonal components $\mathbf{g}_i=(g_{i\,x},g_{i\,y})^\intercal= g_{i\,x}{\mathbf{e}_x}+g_{i\,y}{\mathbf{e}_y}$ ($i=1,2$)
are in-plane complex vectors whose imaginary parts could be induced by the gyrotropic effect with in-plane gyration vectors. The only constraint on the media is the ``in-plane duality'', $\matr{\varepsilon}_T=\alpha\,\matr{\mu}_T$, where $\alpha$ is a positive constant. For simplicity, we set $\alpha=1$ in the following, and $\alpha\neq 1$ results can be  obtained directly by redefining $\varepsilon_0\rightarrow \alpha\,\varepsilon_0$. Under this constraint, the in-plane monochromatic wave equation of frequency $\omega$ can be written as
\begin{equation}\label{wave equation}
  \hat{H}|\psi\rangle=\left[\frac{1}{2}\big(\hat{\mathbf{p}}-\hat{\mathcal{A}}\big)\cdot\matr{m}^{-1}\cdot\big(\hat{\mathbf{p}}-\hat{\mathcal{A}}\big)-\hat{\mathcal{A}}_0+V_0\right]
  |\psi\rangle=0.
\end{equation}
Here $|\psi\rangle=(E_z,\,\eta_0 H_z)^\intercal$ ($\eta_0=\sqrt{\mu_0/\varepsilon_0}$ ) serves as a two-component wave function, and $\hat{H}$ resembles the Hamiltonian of a non-relativistic spin-1/2 particle traveling in $\mathrm{SU}(2)$ non-Abelian gauge potentials~\cite{wong1970field}, where $\hat{\mathbf{p}}=-\mathrm{i}\hat{\sigma}_0\partial_i\mathbf{e}^{i}$ ($i=1,2$) is the canonical momentum operator with $\hat{\sigma}_0$ being the 2D identity matrix,  $\matr{m}=\matr{\varepsilon}_T^{-1}\det{(\matr{\varepsilon}_T)}/2$
represents an effective anisotropic mass, in particular,  $\hat{\mathcal{A}}=\mathcal{A}^1\hat{\sigma}_1+\mathcal{A}^2\hat{\sigma}_2$ and $\hat{\mathcal{A}}_0=\mathcal{A}_0^a\hat{\sigma}_a$ ($\hat{\sigma}_a$ ($a=1,2,3$) are Pauli matrices) can be interpreted as emergent non-Abelian vector and scalar potentials respectively, and $V_0$ is an additional Abelian scalar potential. As shown in Table~\ref{gauge potentials}, the emergent gauge potentials are determined by the material parameters, especially, the vector potential directly corresponds to the off-diagonal terms of $\matr{\varepsilon}$ and $\matr{\mu}$. This correspondence can be intuitively understood from the $\mathrm{SU}(2)$ gauge covariance of 2D Maxwell's equations (see Methods), and the detailed derivation of Eq.~(\ref{wave equation}) is given in the Supplementary Note 1. Thereby, in this broad class of anisotropic media, the materials' influence on the 2D optical waves imitates a $\mathrm{SU}(2)$ gauge interaction. Furthermore, if the background media are extended to be bi-anisotropic materials, a complete construction of $\mathrm{U}(2)=\mathrm{SU}(2)\rtimes \mathrm{U}(1)$ gauge fields for light can be achieved (see Supplementary Note 1).

The emergent $\mathrm{SU}(2)$ gauge potential $\{\hat{\mathcal{A}}_\mu\}=\{\hat{\mathcal{A}}_0,\hat{\mathcal{A}}\}$ induces a synthetic $\mathrm{SU}(2)$ gauge field acting on light:
\begin{equation}
\hat{\mathcal{F}}_{\mu\nu}=\mathrm{i}[\hat{\mathcal{D}}_\mu,\hat{\mathcal{D}}_\nu]=\partial_\mu\hat{\mathcal{A}}_\nu-\partial_\nu\hat{\mathcal{A}}_\mu-\mathrm{i}[\hat{\mathcal{A}}_\mu,\hat{\mathcal{A}}_\nu],
\end{equation}
where $\hat{\mathcal{D}}_\mu=\hat{\sigma}_0\partial_\mu-\mathrm{i}\,\hat{\mathcal{A}}_\mu$ ($\mu=0,1,2$) is the covariant derivative.
Analogous to real electromagnetic (EM) fields, the synthetic $\mathrm{SU}(2)$ gauge field can be separated into a non-Abelian magnetic field $\hat{\mathcal{B}}=\frac{1}{2}\epsilon^{ij}\hat{\mathcal{F}}_{ij}{\mathbf{e}}_z$  along the $z$ axis and a non-Abelian in-plane electric field $\hat{\mathcal{E}}=-\hat{\mathcal{F}}_{0i}{\mathbf{e}}_i$, which are associated with the gauge potential as
\begin{equation}\label{nonabelian fields}
\hat{\mathcal{B}}=\nabla\times\hat{\mathcal{A}}-\mathrm{i}\hat{\mathcal{A}}\times\hat{\mathcal{A}},\quad\  \hat{\mathcal{E}}=\nabla\hat{\mathcal{A}}_0+\mathrm{i}[\hat{\mathcal{A}}_0,\hat{\mathcal{A}}].
\end{equation}
The second terms of $\hat{\mathcal{B}}$, $\hat{\mathcal{E}}$ cannot be found in the Abelian case since they are induced entirely by the noncommutativity of the non-Abelian gauge potential.
Indeed, a matrix-valued gauge potential would not be regarded as (apparently) non-Abelian, unless some of its components do not commute with each other $[\hat{\mathcal{A}}_\mu,\hat{\mathcal{A}}_\nu]\neq0$~\cite{Goldman2014}.  For instance,  the scheme in ref.~\cite{JensenLi2015PRL} is actually a specific reduction of ours with the strict constraints on the media that (i) $\mathbf{g}_1=-\mathbf{g}_2$ being real and (ii) $\varepsilon_z=\mu_z$. In this case, the vector potential only has $\hat{\sigma}_1$ component  $\hat{\mathcal{A}}=\mathcal{A}^1\hat{\sigma}_1$ and the scalar potential $\hat{\mathcal{A}}_0$ vanishes. As such, $[\hat{\mathcal{A}}_i,\hat{\mathcal{A}}_j]\equiv0$, and the gauge group is reduced to the Abelian subgroup $\mathrm{U}(1)$ of $\mathrm{SU}(2)$. In general, if Eq.~(\ref{wave equation}) has any  $\mathrm{U}(1)$ spin rotation symmetry, which means $\hat{U}\hat{H}\hat{U}^\dagger=\hat{H}$ for $\hat{U}=\exp\left(\mathrm{i}\phi\,\vec{n}\cdot\vec{\hat\sigma}\right)$ with a parameter $\phi$, the gauge potential would be reducible. Hence, only for those materials that can imitate irreducible $\mathrm{SU}(2)$ gauge potentials, we call them non-Abelian media.

\begin{table}[t!]
    \caption{\label{gauge potentials}The expressions of the synthetic $\mathrm{SU}(2)$ and $\mathrm{U}(1)$ gauge potentials in terms of the anisotropic parameters of the media.}\vspace{5pt}
 
\def\arraystretch{1.5} 
  \begin{tabular}{Sc|Sc|Sc}\hline\hline
  \multirow{8}*{$\mathrm{SU}(2)$} & \multirow{3}*{\makecell[c]{vector\\[-0.75 ex] potential\\[3pt]$\hat{\mathcal{A}}=\mathcal{A}^a\hat{\sigma}_a$}} &
  ${\mathcal{A}^1}=k_0\mathrm{Re}\left(\mathbf{g}_-\right)\times\mathbf{e}_z$  \footnote{Here $k_0=\omega/c$ is  vacuum wave number and $\mathbf{g}_\pm=(\mathbf{g}_1\pm\mathbf{g}_2^*)/2$.}\\
   & & ${\mathcal{A}^2}=k_0\mathrm{Im}\left(\mathbf{g}_-\right)\times\mathbf{e}_z$  \\
   & & ${\mathcal{A}^3}=0$   \\ \cline{2-3}
  &\multirow{4}*{\makecell[c]{scalar\\[-0.75 ex] potential\\[3pt]$\hat{\mathcal{A}}_0=\mathcal{A}_0^a\hat{\sigma}_a$}}   &  $\displaystyle
  {\mathcal{A}_0^1}= k_0 \mathbf{e}_{z}\cdot\left[\nabla\times\left(\tensor{\varepsilon}_T^{-1}\cdot\mathrm{Im}\left(\mathbf{g}_+\right)\right)\right]$   \rule{0pt}{4ex}\\[3pt]
  & & $\displaystyle{\mathcal{A}_0^2}=- k_0\ \mathbf{e}_{z}\cdot\left[\nabla\times\left(\tensor{\varepsilon}_T^{-1}\cdot\mathrm{Re}\left(\mathbf{g}_+\right)\right)\right] $ 
  \\[3pt]
  & &  $\displaystyle{\mathcal{A}_0^3}= k_0^2\left[\frac{\varepsilon_z-\mu_z}{2}-2\,\mathrm{Re}\left(\mathbf{g}_-^\dagger\cdot\tensor{\varepsilon}_T^{-1}\cdot\mathbf{g}_+\right)\right]$ \rule[-2.6ex]{0pt}{0pt}\\\hline
  $\mathrm{U}(1)$ &   \makecell[c]{scalar\\[-0.75 ex] potential} &$\displaystyle {V_0}= k_0^2\left[\left(\mathbf{g}_+^\dagger\cdot\tensor{\varepsilon}_T^{-1}\cdot\mathbf{g}_+\right)-\frac{\varepsilon_z+\mu_z}{2}\right]$ \rule[-2.6ex]{0pt}{6.8ex} 
   \\\hline\hline
\end{tabular}
\end{table}

The two-component wave function of light $|\psi\rangle$ behaves like a spin-1/2 spinor with  the pseudo-spin at a local point
\begin{equation}
\vec{s}=\langle\psi|\vec{\hat{\sigma}}|\psi\rangle/|\psi|^2,
\end{equation}
where the overhead arrow indicates a vector in the pseudo-spin space, and $\langle\psi|\vec{\hat{\sigma}}|\psi\rangle$ gives the local spin density. The frame $\{\vec{e}_a\}$ in the pseudo-spin space can be chosen arbitrarily. The rotation of the frame corresponds to a gauge transformation of spinor $|\psi'\rangle=\hat{U}(\mathbf{r})|\psi\rangle$, where in general $\hat{U}(\mathbf{r})$ is a space-varying $\mathrm{SU}(2)$ matrix. By substituting $|\psi'\rangle$ into Eq.~(\ref{wave equation}), one can easily check that the wave equation is gauge covariant as long as the material is transformed accordingly (see Supplementary Note 2), while the synthetic gauge potentials and fields obey the gauge transformations
\begin{gather}
  \hat{\mathcal{A}}^{\prime}_\mu=\hat{U}\hat{\mathcal{A}}_\mu\hat{U}^\dagger+\mathrm{i}\,\hat{U}\partial_\mu\hat{U}^\dagger,\\
  \hat{\mathcal{B}}'=\hat{U}\hat{\mathcal{B}}\hat{U}^\dagger,\quad \hat{\mathcal{E}}'=\hat{U}\hat{\mathcal{E}}\hat{U}^\dagger.
\end{gather}

In addition, it is worth comparing the present idea of non-Abelian gauge field optics (NAGFO) with the transformation optics (TO)~\cite{pendry2006controlling,schurig2006calculation,chen2010transformation,leonhardt2006general,leonhardt2009transformation}. 
When TO is applied to design invisibility cloaks, it results in anisotropic media whose permittivity and permeability are real and equal $\matr{\varepsilon}=\matr{\mu}$~\cite{pendry2006controlling,schurig2006calculation}.
Due to the equivalence of the constitutive tensor and  the metric of a curved spacetime for light,  such kind of duality symmetric materials can also be used to mimic gravitational effects~\cite{leonhardt2006general,leonhardt2009transformation,genov2009mimicking,chen2010Schwarzschild,sheng2013trapping}.  
In contrast to TO, NAGFO involves a more general class of complex-valued media respecting in-plane duality symmetry. The in-plane block $\matr{\varepsilon}_T$ of permittivity, which determines the effective mass  in Eq.~(\ref{wave equation}), can alternatively be equated to the metric of a virtual 2D curved space as with TO, whereas, apart from $\matr{\varepsilon}_T$, all the remaining components of $\matr{\varepsilon}$ and $\matr{\mu}$ contribute to the synthetic $\mathrm{SU}(2)$ gauge potentials. Therefore, NAGFO proposes an optical way to simulate the 2D spinor systems under both a $\mathrm{SU}(2)$ gauge interaction and the influence of a curved space. To highlight the effects stemming purely from the non-Abelian gauge interaction, we will hereinafter concentrate on the simplified scenario that $\matr{\varepsilon}_T=\varepsilon_T\matr{I}_{2\times2}$ is isotropic and homogeneous. As such, the virtual 2D background space is trivialized to be flat, and the effective mass is reduced to $m=\varepsilon_T/2$.

\begin{figure*}
\includegraphics[width=0.83\textwidth]{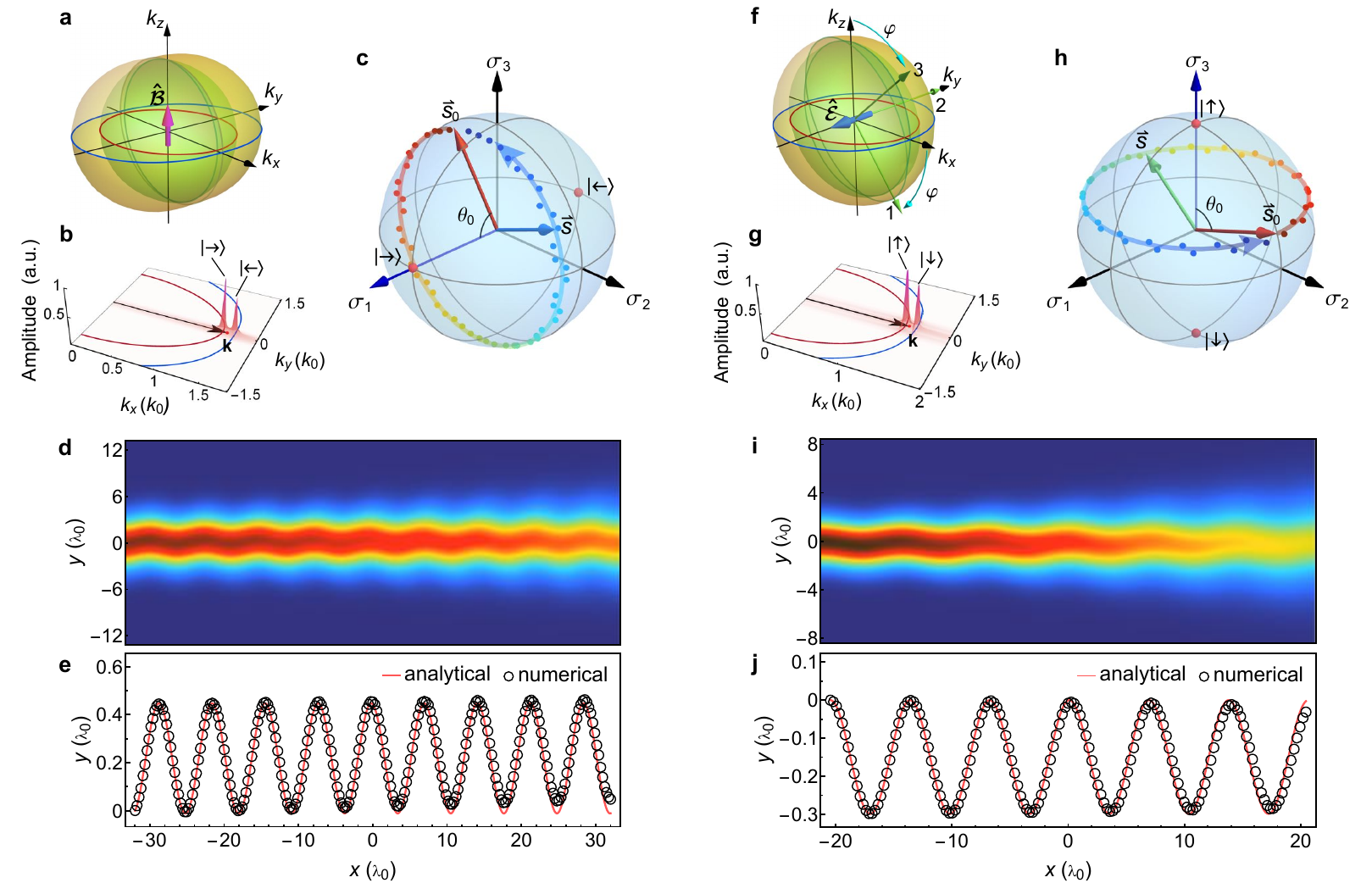}
\caption{\label{zitterbewegung} Zitterbewegung effect in homogeneous non-Abelian media. \textbf{a}--\textbf{e}\ ZB induced by a synthetic non-Abelian magnetic field in a gyrotropic medium with the parameters $\matr{\varepsilon}_T=\matr{\mu}_T=1.5\matr{I}_{2\times2}$, $\varepsilon_z=\mu_z=1.5$, $\mathbf{g}_1=-\mathbf{g}_2^*=(0.3\,\mathrm{i},-0.07)^\intercal$. This medium produces a synthetic $\mathrm{SU}(2)$ magnetic field along $z$ direction, $\hat{\mathcal{B}}=-k_0^2\,0.042\,{\mathbf{e}_z}\hat{\sigma}_3$, with a null $\mathrm{SU}(2)$ electric field $\hat{\mathcal{E}}=0$. 
 \textbf{f}--\textbf{j}\  ZB  induced by a synthetic non-Abelian electric field in a biaxial non-magnetic medium with the parameters $\varepsilon_1=1.65$, $\varepsilon_2=2.45$, $\varepsilon_3=3$,  and $\mu/\mu_0=1$.  The synthetic $\mathrm{SU}(2)$ electric field, $\hat{\mathcal{E}}=-k_0^3\,0.08919\, {\mathbf{e}_y} \hat{\sigma}_2$, is along the $y$-aixs, while the $\mathrm{SU}(2)$ magnetic field vanishes $\hat{\mathcal{B}}=0$.
 \textbf{a},\textbf{f}\ The isofrequency surfaces and their $xy$ cross sections (red and blue curves) of both cases. The green arrows in \textbf{f} are the three principal axes 1,\,2,\,3 of permittivity tensor. \textbf{b},\textbf{g}\ Fourier spectra in $k$-space of the beams in the two media. In each case, the two peaks in the spectrum correspond to the two eigenmodes with wave vectors in the $x$ direction.  And the average wave vectors $\mathbf{k}$ are marked by the black arrows. \textbf{c},\textbf{h}\ The spin precession along each beam on the Bloch sphere. The colored dots are the numerical data within one ZB period. \textbf{d},\textbf{i}\ Full-wave simulated intensity distributions, where the beam waists equal $4.4\lambda_0$ and $6.2\lambda_0$ respectively ($\lambda_0=2\pi/k_0$ is the wavelength in vacuum).
 \textbf{e},\textbf{j} Numerical (black circles) and analytical (red curves) trajectories of the intensity centroid.}\vspace{-7pt}
\end{figure*}

\textbf{Zitterbewegung of optical beams.}
The wave packet dynamics in homogeneous media can give the most straightforward effect distinguishing the non-Abelian media from the Abelian type. The effective Abelian electric and magnetic fields vanish in homogeneous media~\cite{JensenLi2015PRL}, whereas the non-Abelian fields persist due to the noncommutativity of $\hat{\mathcal{A}}_\mu$. In our case, $\hat{\mathcal{B}}=\mathcal{B}\hat{\sigma}_3$ with $\mathcal{B}=\mathrm{i} k_0^{\,2}(\mathbf{g}_-\times\mathbf{g}^*_-)$, and $\hat{\mathcal{E}}=2\mathcal{A}^3_0\left(\mathcal{A}^2\hat{\sigma}_1-\mathcal{A}^1\hat{\sigma}_2\right)$. 
We consider the propagation of 2D optical beams in homogeneous non-Abelian media. In general, there are two non-degenerate branches of plane wave eigenstates. Because the two eigenstates of a certain direction of  wave vector $\mathbf{k}$ are always orthogonal, their pseudo-spins correspond to a pair of antipodal points on the Bloch sphere. Generally speaking, the non-degenerate eigenmodes would evolves independently along different semiclassical trajectories. However, if the two eigenstates for a particular direction of $\mathbf{k}$ are quasi-degenerate, in the overlapped region, their superposed wave can be viewed as an intact ``semiclassical particle'' with an internal spin degree of freedom, whose centroid trajectory follows the Hamilton's canonical equations (see Methods)
\begin{align}\label{momentum equation}
\frac{d}{d\tau}\langle\hat{\mathbf{p}}\rangle &= \mathrm{i}\,\langle[\hat{H},\hat{\mathbf{p}}]\rangle\equiv0\quad \Rightarrow  \quad \langle\hat{\mathbf{p}}\rangle\equiv \mathbf{k},\\\label{velocity}
\frac{d}{d\tau}\langle\hat{\mathbf{r}}\rangle &=  \langle\hat{\mathbf{v}}\rangle=\frac{1}{m}\big(\mathbf{k}-\mathcal{A}^a\langle\hat{\sigma}_a\rangle\big).
\end{align}
Here $\hat{\mathbf{v}}=\frac{d}{d\tau}\hat{\mathbf{r}}=\mathrm{i}[\hat{H},\hat{\mathbf{r}}]=(\hat{\mathbf{p}}-\hat{\mathcal{A}})/m$ is the velocity operator, $\tau$ represents path parameter along the beam, and $\langle\hat{a}\rangle(\mathbf{r}_0)=\int d\mathbf{r}_\perp\,\psi^\dagger(\mathbf{r}_0+\mathbf{r}_\perp)\hat{a}(\mathbf{r}_0+\mathbf{r}_\perp)\,\psi(\mathbf{r}_0+\mathbf{r}_\perp)$ means the expectation value of an operator $\hat{a}$ averaged over the transverse cross section of a point $\mathbf{r}_0$ along an optical beam, differing from the local expectation value $\langle\psi| \hat{a}|\psi\rangle(\mathbf{r})=\psi^\dagger(\mathbf{r})\hat{a}(\mathbf{r})\,\psi(\mathbf{r})$. According to Eq.~(\ref{momentum equation}), the canonical momentum along the beam is conserved, and is equal to the quasi-degenerate eigen wave vector $\mathbf{k}$ (see  Eq.~(\ref{quasi-degeneratre k})). Moreover, it turns out that the in-plane projection of the total energy flux over the transverse cross section of the beam is always parallel to the velocity given by Eq.~(\ref{velocity}) (see proof in Methods), therefore the canonical equations do describe the path of energy propagation. 
Along the optical beam, the pseudo-spin $\vec{s}=\langle\vec{\hat{\sigma}}\rangle$ undergoes precession as follows:
\begin{equation}\label{spin precession}
  \frac{d}{d\tau}\vec{s}=\mathrm{i}\langle[\hat{H},\vec{\hat{\sigma}}]\rangle=
\vec{\Omega}\times\vec{s},
\end{equation}
where
$  \vec{\Omega}=-
2\Big(\mathcal{A}^a_0+\frac{1}{m}\mathbf{k}\cdot\mathcal{A}^a\Big)\vec{e}_a
$ 
is the precession angular velocity. 
During precession, the pseudo-spin component parallel to $\vec{\Omega}$ is conserved. 

In terms of Eqs.~(\ref{momentum equation})--(\ref{spin precession}), we arrive at the Newton-type equation of motion where a virtual non-Abelian Lorentz force~\cite{wong1970field,Goldman2014} associated with the  non-Abelian fields emerges 
\begin{equation}\label{EOM}
  \begin{split}
m\frac{d^2}{d\tau^2}\langle\hat{\mathbf{r}}\rangle=&\ \frac{1}{2}\langle\hat{\mathbf{v}}\times\hat{\mathcal{B}}+\hat{\mathcal{B}}\times\hat{\mathbf{v}}\rangle+\langle\hat{\mathcal{E}}\rangle\\
=&\ \langle\hat{\mathbf{j}}_{\hat{\sigma}_3}\rangle\times\mathcal{B}  +   \mathcal{E}^a\langle\hat{\sigma}_a\rangle,
\end{split}
\end{equation}
Here, $\hat{\mathbf{j}}_{\hat{\sigma}_3}=\frac{1}{2}\left(\hat{\mathbf{v}}\,\hat{\sigma}_3+\hat{\sigma}_3\hat{\mathbf{v}}\right)=\frac{1}{m}\hat{\mathbf{p}}\,\hat{\sigma}_3$ represents the $\hat{\sigma}_3$-component of the linear spin current operator~\cite{shen2005prl},
thus the non-Abelian Lorentz force  can also be regarded as a spin-induced force with a magnetic part acting on the spin current and an electric part acting on the average spin over the transverse cross sections of the beam.
In particular, the magnetic part of the force, $ \mathbf{f}_{\hat{\sigma}_3}=\big\langle\hat{\mathbf{j}}_{\hat{\sigma}_3}\big\rangle\times\mathcal{B}$, duplicates the ``spin transverse force'' in electronics which acts on an electronic spin current exerted by a vertical electric field~\cite{shen2005prl}.

The integration of either the canonical equations or Eq.~(\ref{EOM})  yields the intensity centroid trajectory of the beam
\begin{equation}\label{trajectory}
  \begin{split}
\langle\hat{\mathbf{r}}\rangle=&\frac{1}{m}\left[\mathbf{k}-\mathcal{A}^a{s_0}_a+\frac{1}{\Omega^2}\mathbf{F}^a\epsilon_{abc}{\Omega}^b s_0^{\,c}\right]\tau\\
 -&\frac{\mathbf{F}^a}{m\Omega^2}\left[\big(\cos(\tau\,\Omega)-1\big)\delta_{ac}
+\frac{\sin(\tau\,\Omega)}{\Omega}\epsilon_{abc}{\Omega}^b\right]s_0^{\,c},
  \end{split}
\end{equation}
where $\mathbf{F}^a=(\vec{\Omega}\times\hat{\mathcal{A}})^a=\mathcal{E}^a+{\mathbf{k}}\times\mathcal{B}^a/m$, $\Omega=|\vec{\Omega}|$, $\vec{s}_0$ represents the initial spin, $\epsilon_{abc}$ is the Levi-Civita symbol, and the initial position of the beam is assumed at $\langle\hat{\mathbf{r}}\rangle_0=0$.
The first line of the equation refers to a straight path, while the second line shows that the beam oscillates around the equilibrium path periodically.
As a result, the emergent non-Abelian Lorentz force may lead to wavy trajectories for optical beams propagating in the non-Abelian media. 
This phenomenon resembles the ZB effect of Dirac particles~\cite{zawadzki2011zitterbewegung}. 
According to Eq.~(\ref{trajectory}), the trembling motion of light depends not only on the non-Abelian gauge fields but also on the initial spin $\vec{s}_0$ of the beam. If the initial state is purely one of the eigenmodes with the wave vector in $\mathbf{k}$ direction, i.e., $\vec{s}_0$ is along $\vec{\Omega}(\mathbf{k})$, the trembling term in Eq.~(\ref{trajectory}) will vanish. This implies
the present ZB effect stems from the interference of the two quasi-degenerate eigenmodes just as electronic ZB is caused by the superposition of positive and negative energy components (see Supplementary Note 3).

In recent years, ZB has been investigated for spin-orbit coupled atoms~\cite{vaishnav2008observing,gerritsma2010quantum} and photons~\cite{zhang2008observing,dreisow2010classical,fan2015plasmonic,chenjing2016OE}. However, unlike most schemes of ZB for light realized in periodic systems~\cite{zhang2008observing,dreisow2010classical,fan2015plasmonic}, our result shows that light can travel along curved paths even if the background medium is homogeneous. 
At first glance, this counterintuitive curved trajectory seems to violate the momentum conservation in translation invariant systems. 
However, it is well known that the kinetic momentum associated with centroid movement can be different from the canonical momentum for a particle traveling in a background vector potential. This conclusion is also valid for our situation. 
As shown in Eqs.~(\ref{momentum equation}) and (\ref{velocity}), the semiclassical canonical momentum $\langle\hat{\mathbf{p}}\rangle$  is always conserved in homogeneous media, while the kinetic momentum $m\langle\hat{\mathbf{v}}\rangle$ deviates from $\langle\hat{\mathbf{p}}\rangle$ and can change along the path by virtue of the synthetic non-Abelian potential $\hat{\mathcal{A}}$. A more rigorous analysis shows that the conserved quantity protected by space translational symmetry in generic non-Abelian media is the time-averaged Minkowski-type momentum $\int d^3x\, \mathrm{Re}\left(\mathbf{D}^*\times\mathbf{B}\right)$, while the centroid motion corresponds to the Abraham-type momentum $\int d^3x\, \mathrm{Re}\left(\mathbf{E}^*\times\mathbf{H}\right)/c^2$.

\textbf{Example I: ZB induced by non-Abelian magnetic field. }
According to the theory, the ZB effect for monochromatic beams can be generated by either non-Abelian magnetic fields or non-Abelian electric fields.
In Fig.~\ref{zitterbewegung}a--e, we first show an example of ZB induced solely by a non-Abelian magnetic field. To realize nonzero $\hat{\mathcal{B}}$ but vanishing $\hat{\mathcal{E}}$, we let the medium satisfy $\varepsilon_z=\mu_z$, $\mathbf{g}_1=-\mathbf{g}_2^*=(-\mathrm{i}\mathcal{A}^2_y/k_0,\,\mathcal{A}^1_x/k_0)^\intercal$, then the synthetic $\mathrm{SU}(2)$ magnetic field in this medium is given by $\hat{\mathcal{B}}=2{\mathcal{A}}_x^1{\mathcal{A}}_y^2\,\mathbf{e}_z\hat{\sigma}_3$.
The isofrequency surfaces of eigenmodes are illustrated in Fig.~\ref{zitterbewegung}a. Along the $k_x$ direction, the two eigenstates are $|\mathord{\rightarrow}\rangle=(1,1)^\intercal/
\sqrt{2}$ and $|\mathord{\leftarrow}\rangle=(1,-1)^\intercal/\sqrt{2}$ with the wave vectors $\mathbf{k}_{\pm}=\big[\sqrt{{k_0}^2\varepsilon_T\varepsilon_z-(\mathcal{A}^2_y)^2}\pm \mathcal{A}^1_x\big]\mathbf{e}_x$, and their pseudo-spins are polarized along the $\hat{\sigma}_1$-axis, as labeled on the Bloch sphere in Fig.~\ref{zitterbewegung}(c). As long as $|\mathcal{A}^1_x|\ll k=\sqrt{{k_0}^2\varepsilon_T\varepsilon_z-(\mathcal{A}^2_y)^2}$, the quasi-degenerate approximation is valid for beams incident from $x$ direction. 
In this case, the precession angular velocity is $\vec{\Omega}=-4k\mathcal{A}^1_x/\varepsilon_T\,\vec{e}_1$, so the pseudo-spin will precess about the $\hat{\sigma}_1$-axis. For the initial spin $\vec{s}_0=(\cos\theta_0,\,\sin\theta_0\cos\phi_0,\,\sin\theta_0\sin\phi_0)^\intercal$ at an angle $\theta_0$ from $\hat{\sigma}_1$-axis, we can obtain the centroid trajectory of the beam by eliminating the ray parameter $\tau$ in Eq.~(\ref{trajectory}),
\begin{equation}\label{trajectory mag}
   y(x)=Y_{\mathrm{ZB}}\left[\sin\big(k_{\rm ZB}( x-x_0)-\phi_0\big) +\sin\phi_0\right],
\end{equation}
where $x,y$ are the coordinates of centroid. The ZB amplitude
\begin{equation}\label{ZB amplitude}
Y_{\mathrm{ZB}}=\frac{-{\mathcal{A}}_y^2\,\sin\theta_0}{2\,k\,{\mathcal{A}}_x^1}=\frac{-{\mathcal{A}}_y^2\,\sin\theta_0}{2{\mathcal{A}}_x^1\sqrt{k_0^{\ 2}\varepsilon_T\varepsilon_z-(\mathcal{A}^2_y)^2}}
\end{equation}
is proportional to $\sin\theta_0$, so ZB reaches the maximum when the initial spin $\vec{s}_0$ is perpendicular to $\vec{\Omega}$, corresponding to the equal-weighted superposition of the two eigenmodes. Meanwhile, the ZB wave number
\begin{equation}
  k_{\rm ZB}=\frac{2k{\mathcal{A}}_x^1}{k-{\mathcal{A}}_x^1\cos\theta_0}\approx 2{\mathcal{A}}_x^1=k_+-k_-
\end{equation}
is  equal to the difference of the two eigen wave vectors, showing that ZB originates from the beating between the two eigenstates. Yet we should emphasize the phase beating is not a sufficient condition to realize ZB, and the ZB amplitude cannot be obtained without the knowledge of the non-Abelian dynamics. For instance, if $\mathcal{A}^2_y=0$ in the present medium, the beat of the two states persists, however, as the medium is relegated to the Abelian-type with $\hat{\mathcal{B}}=0$, ZB just disappears.

We have performed the full-wave simulation of a Gaussian beam propagating in this medium using COMSOL Multiphysics.  The beam is emitted along $x$-direction and the angle between its initial spin and $\hat{\sigma}_1$-axis is set as $\theta_0=0.43\pi$. Fig.~\ref{zitterbewegung}b shows the $k$-space Fourier  amplitude of the simulated wave function $\psi$, the two peaks in the spectrum manifest that the beam is mainly comprised of the two eigenstates $|\mathord{\rightarrow}\rangle$ and $|\mathord{\leftarrow}\rangle$.
The numerical time-averaged energy densities plotted in Fig.~\ref{zitterbewegung}d show clearly a transverse tremor along the beam. As shown in Fig.~\ref{zitterbewegung}e, the centroid trajectory  extracted from the full-wave result agrees perfectly with the analytical expression in Eq.~(\ref{trajectory mag}). And according to the numerical data of the pseudo-spins in one ZB period shown in Fig.~\ref{zitterbewegung}c, the spin precession about the $\hat{\sigma}_1$-axis is also demonstrated.

\textbf{Example II: ZB induced by non-Abelian electric field. }
In the previous example, the non-Abelian medium contains both gyroelectric and gyromagnetic components.
In fact, the synthetic non-Abelian gauge fields as well as ZB can be simply realized with reciprocal  media without gyrotropy. Here, we elaborate on synthesizing non-Abelian electric field with a biaxial non-magnetic material and the ZB effect in it.

We consider a non-magnetic material with the biaxial permittivity   $\tilde{\matr{\varepsilon}}/\varepsilon_0=\mathrm{diag}(\varepsilon_1,\varepsilon_2,\varepsilon_3)$
($\varepsilon_1<\varepsilon_2<\varepsilon_3$) along the principal axis and the permeability $\mu/\mu_0=1$. If the second principal axis of $\matr{\varepsilon}$ is fixed along the $y$-axis, while the first principal axis is rotated by an  angle $\varphi$ with respect to the $x$-axis such that $\cos^2\varphi\,\varepsilon_1+\sin^2\varphi\,\varepsilon_3=\varepsilon_2$ ($|\varphi|<\pi/2$) as shown in Fig.~\ref{zitterbewegung}f, the permittivity tensor in the $xyz$ coordinate system is given by
\begin{equation}\label{gyroelectric}
  \matr{\varepsilon}/\varepsilon_0=\left(\begin{array}{cc|c} \varepsilon_2 &  0 & g\\
0 & \varepsilon_2  & 0 \\\hline
g & 0 & \varepsilon_z
\end{array}\right),\
\end{equation}
with $\varepsilon_z=\varepsilon_1+\varepsilon_3-\varepsilon_2$ and $g=\mathrm{sgn}(\varphi)\sqrt{(\varepsilon_2-\varepsilon_1)(\varepsilon_3-\varepsilon_2)}$.
Since the in-plane duality condition is satisfied as $\varepsilon_T=\varepsilon_2\mu_T$ ($\mu_T=1$), by rescaling the vacuum permittivity $\varepsilon'_0=\varepsilon_2\varepsilon_0$, 
we obtain the synthetic gauge potentials
\begin{equation}
    \hat{\mathcal{A}}=-\frac{k_0\, g}{2\sqrt{\varepsilon_2}}\mathbf{e}_y\,\hat{\sigma}_1,\qquad
    \hat{\mathcal{A}}_0={k_0^{\ 2}}\,\frac{\varepsilon_1\varepsilon_3-\varepsilon_2^{\ 2}}{2\varepsilon_2}\hat{\sigma}_3,   
\end{equation}
and a uniform non-Abelian electric field polarized along the second principal axis
\begin{equation}
      \hat{\mathcal{E}}=k_0^{\ 3}\frac{g\,(\varepsilon_1\varepsilon_3-\varepsilon_2^{\ 2})}{2\ \varepsilon_2^{\ 3/2}}\mathbf{e}_y\,\hat{\sigma}_2.
\end{equation}
The two eigenstates in the $x$-direction are $|\mathord{\uparrow}\rangle=(1,0)^\intercal$ and $|\mathord{\downarrow}\rangle=(0,1)^\intercal$ corresponding to the two poles along the $\hat{\sigma}_3$-axis on the Bloch sphere (see Fig.~\ref{zitterbewegung}h), the eigen wave vectors are $\mathbf{k}_\uparrow=\sqrt{\varepsilon_1\varepsilon_3/\varepsilon_2}\,k_0\,\mathbf{e}_x$ and $\mathbf{k}_\downarrow=\sqrt{\varepsilon_2}\,k_0\,\mathbf{e}_x$ respectively. Providing that $|\sqrt{\varepsilon_1\varepsilon_3}/\varepsilon_2-1|$ is small enough, the centroid trajectory of a beam mainly consisting of these two states satisfies
\begin{equation}\label{trajectory elec}
     y(x)=Y_{\mathrm{ZB}}\left[\sin\big(k_{\rm ZB}( x-x_0)+\phi_0\big) -\sin\phi_0\right],
\end{equation}
where $\theta_0$, $\phi_0$ are the Euler angles of the initial spin $\vec{s}_0=(\sin\theta_0\cos\phi_0,\,\sin\theta_0\sin\phi_0,\,\cos\theta_0)^\intercal$, the ZB amplitude is
\begin{equation}\label{ZB amplitude NAElectric}
  Y_{\mathrm{ZB}}=\frac{\mathcal{A}^1_y\,\sin\theta_0}{\mathcal{A}_0^3}=\frac{\sqrt{\varepsilon_2}\,g\,\sin\theta_0}{k_0(\varepsilon_2^{\ 2}-\varepsilon_1\varepsilon_3)},
\end{equation}
and the ZB wave number
\begin{equation}\label{ZB wavenumber NAElectric}
  k_{\mathrm{ZB}}=\frac{k_0^{\ 2}(\varepsilon_2-\varepsilon_1\varepsilon_3/\varepsilon_2)}{2k}=\frac{{k_\downarrow}^2-{k_\uparrow}^2}{2k}\approx k_\downarrow-k_\uparrow
\end{equation}
is still determined by the beating of the two eigenstates. 
In the full-wave simulation of Fig.~\ref{zitterbewegung}i, we obtained a trembling beam (also see Fig.~\ref{zitterbewegung}g for its Fourier spectrum) where the decay of intensity along the beam is due to the beam divergence, the extracted centroid trajectory faithfully reproduces the analytic prediction of Eq.~(\ref{trajectory elec}), shown by Fig.~\ref{zitterbewegung}j. In Fig.~\ref{zitterbewegung}h, the numerical spin trajectory on the Bloch sphere also verifies that the pseudo-spin precesses about the $\hat{\sigma}_3$-axis.
\begin{figure}[t!]
\includegraphics[width=0.48\textwidth]{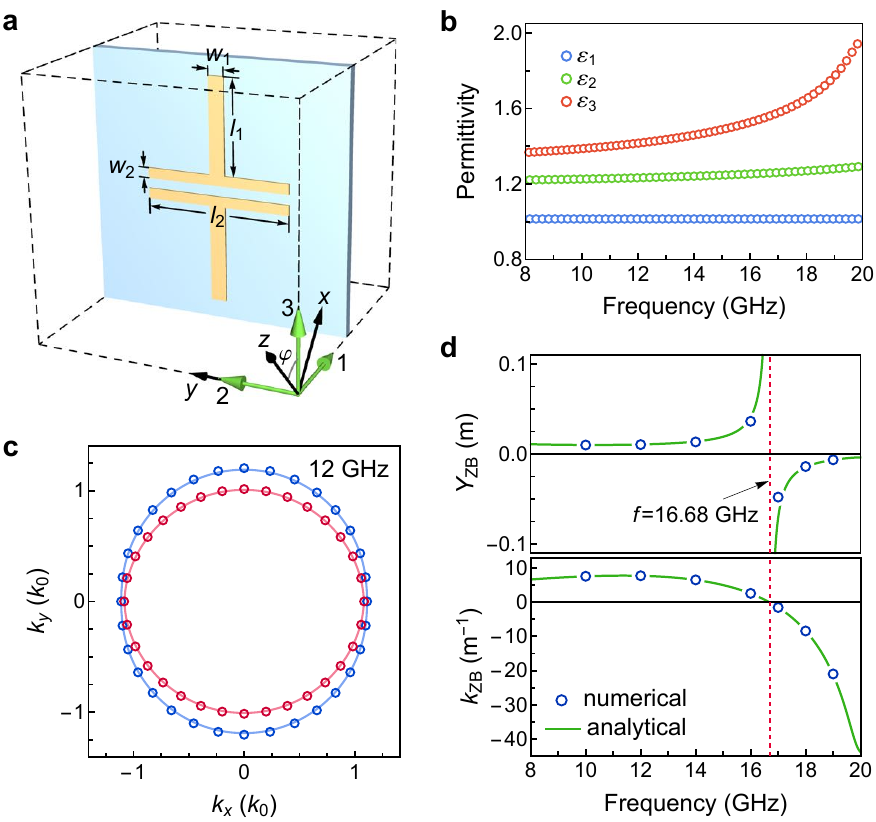}
\caption{\label{ZBmetamaterial} Design of a biaxial metamaterial. \textbf{a} Unit cell of the simple cubic structure with lattice constants $d=5\,\mathrm{mm}$, where a FR4 PCB slab (light blue) with thickness $0.2\,\mathrm{mm}$ and relative permittivity $\varepsilon_{\mathrm{pcb}}=3.3$ fills the coronal plane, and a copper structure of thickness $0.035\,\mathrm{mm}$ is printed on the PCB slab with the geometric parameters $l_1=2\,\mathrm{mm}$, $l_2=2.8\,\mathrm{mm}$, $w_1=0.3\,\mathrm{mm}$, and $w_2=0.2\,\mathrm{mm}$. \textbf{b} Dispersions of the retrieved relative permittivities along the three principal axes. \textbf{c} Isofrequency contours of the metamaterial in the $xy$-plane at $12\,\mathrm{GHz}$, where the circular markers and solid curves correspond, respectively, to the real structure in \textbf{a} and the retrieved homogeneous medium. \textbf{d} Full-wave simulated (blue circles) and analytical (green curves) ZB amplitude $Y_{\mathrm{ZB}}$ and ZB wave number $k_{\mathrm{ZB}}$ in the homogenized media varying with frequency. }
\end{figure}

In principle, the ZB effect induced by non-Abelian electric field can be observed in any natural and artificial biaxial non-magnetic materials. Here, we designed a simple metamaterial structure with the unit cell shown in Fig.~\ref{ZBmetamaterial}a for realizing ZB in microwave regime. The copper strips on printed circuit board (PCB) layers support strong and anisotropic electric dipole resonances along  principal axes labelled as 1,  2. Consequently, all the three principal values $\varepsilon_i\ (i=1,2,3)$ of the effective permittivity are different, and their dispersions obtained by $S$-parameter retrieval approach~\cite{liu2007description} are plotted in Fig.~\ref{ZBmetamaterial}b. According to our theory, the ZB beams should travel in the $xy$ plane whose orientation is determined by $\varepsilon_i$ and thus is frequency-dependent. As an example, we compared in Fig.~\ref{ZBmetamaterial}c the isofrequency contours in $xy$-plane of the real structure and that of the homogenized medium at 12~GHz. Their perfect consistency confirms the retrieval result. To test the ZB effect in the metamaterial, we numerically simulated the ZB beams with a constant waist of $0.2\,\mathrm{m}$ propagating along $x$-direction in the retrieved media at some discrete frequencies and extracted the ZB amplitudes $Y_{\mathrm{ZB}}$ and ZB wave numbers $k_\mathrm{ZB}$. We find good agreement with the theoretical predictions given by Eqs.~(\ref{ZB amplitude NAElectric}) and (\ref{ZB wavenumber NAElectric}) as shown in Fig.~\ref{ZBmetamaterial}d. Notably, both of the ZB amplitude and period tend to infinity at a singular frequency $f=16.68\,\mathrm{GHz}$, due to the fact that $\varepsilon_1\varepsilon_3=\varepsilon_2^{\,2}$ is accidentally satisfied at the frequency such that the material is reduced to Abelian type with $\hat{\mathcal{E}}=0$, and the beam splits into two branches~\cite{liu2018realizing}. We have also analyzed the finite width effect of the beam in the $z$-direction, and the analysis demonstrates that the 2D theory works well in the region where the two eigenmodes do not split away along the $z$-axis (see Supplementary Note 4).

\begin{figure*}
\includegraphics[width=0.85\textwidth]{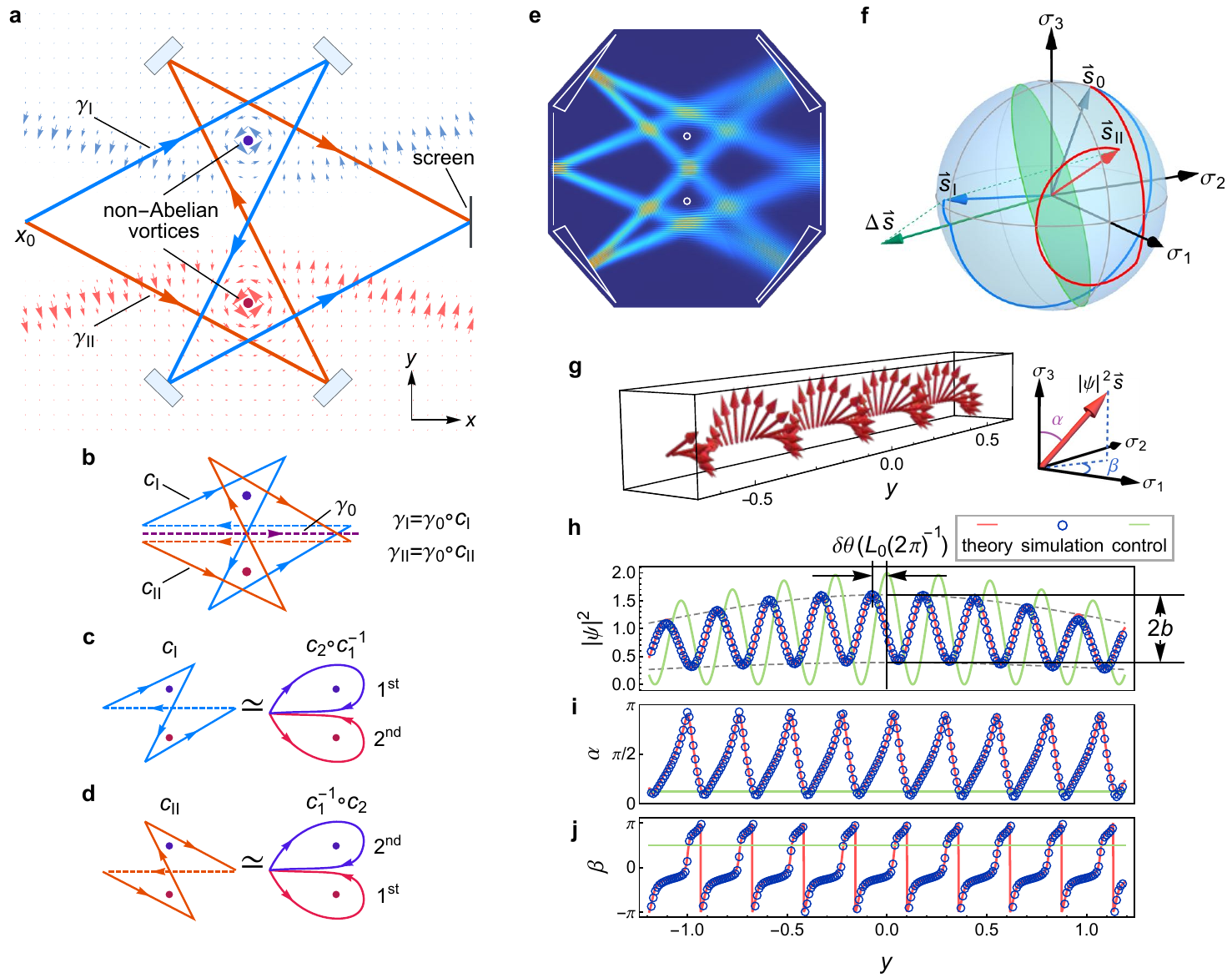}
\caption{\label{fig3} Genuine non-Abelian AB effect for light. \textbf{a}\ Sketch of the non-Abelian AB system with two optical paths $\gamma_{\mathrm{I}}$, $\gamma_{\mathrm{II}}$ interfering on the screen, where the background light blue (red) arrows denote the $\hat{\sigma}_1$ ($\hat{\sigma}_2$) component $\mathcal{A}^1$ ($\mathcal{A}^2$) of the non-Abelian vector potential.  \textbf{b}\ $\gamma_{\mathrm{I}}$ ($\gamma_{\mathrm{II}}$) can be divided into a closed loop $c_{\mathrm{I}}$ ($c_{\mathrm{II}}$) and a common path $\gamma_0$.
\textbf{c},\textbf{d}\ $c_{\mathrm{I}}$ and $c_{\mathrm{II}}$ can, respectively, deform continuously into a closed path that winds around the two vortices successively but in opposite sequences.
\textbf{e}\ Snapshot of the simulated field intensity for the proposed non-Abelian optical interferometer with incident spinor $(1,\mathrm{i}1/5)^\intercal$ for both beams and the vortex fluxes $\Phi_1=-{2\pi}/{3}$, $\Phi_2=-{\pi}/{3}$. \textbf{f}\ Spin evolution on the Bloch sphere along two beams $\gamma_{\mathrm{I}}$, $\gamma_{\mathrm{II}}$, which share the same initial spin $\vec{s}_0$ but achieve different final spins $\vec{s}_{\mathrm{I}}$ and $\vec{s}_{\mathrm{II}}$.
\textbf{g}\ Spin density interference corresponding to \textbf{e}, where each arrow denotes the local pseudo-spin density $|\psi|^2\vec{s}$ at a point on the screen. All of the local spins $\vec{s}(y)$ are perpendicular to $\Delta\vec{s}=\vec{s}_\mathrm{I}-\vec{s}_\mathrm{II}$, and thus fall on the green circle in \textbf{f}. The corresponding intensity interference $|\psi|^2(y)$ and the two Euler angles $\alpha$, $\beta$ of the local spins $\vec{s}(y)$ on the screen are shown in \textbf{h}--\textbf{j}, where blue circles and red curves indicate simulated and theoretical results respectively, and $\delta\theta$, $b$ are the phase shift and relative amplitude relative to the case of $\hat{\mathcal{A}}=0$ ($L_0$ is the period of $\Delta\theta(y)\mod 2\pi$). The green lines correspond to the ``control experiment''.}
\end{figure*}

\textbf{Non-Abelian Aharonov-Bohm system for light.} 
ZB discussed in the previous section can be viewed as the interference between two eigenmodes, each of which evolves with Abelian dynamics. In this sense, ZB is an apparent non-Abelian effect. Next, we will introduce the genuine non-Abelian AB effect, which cannot be reduced to Abelian subsystems.

The AB effect covers a group of phenomena associated with the path-dependent phase factors for particles traveling in a field-free region, but with irremovable gauge potential $\hat{\mathcal{A}}_\mu$,  the discovery of which confirmed the physical reality of gauge potentials and the nonlocality of gauge interactions~\cite{aharonov1959significance,batelaan2009aharonov}. The AB effect was first generalized to non-Abelian by Wu and Yang~\cite{Wuyang}, who showed that the scattering of nucleons (isospinors) around a non-Abelian flux tube (vortex) can generate peculiar phenomena. 
However, their governing Hamiltonian can be globally diagonalized into two decoupled  Abelian subsystems under a proper gauge~\cite{HovathyPRD}, and all relevant phenomena can be interpreted from the superposition of the two subsystems. Hence, Wu-Yang's proposal is now viewed as an apparent non-Abelian effect~\cite{Goldman2014,Raman1986}. 
According to a rigorous definition~\cite{Raman1986}, a genuine non-Abelian AB system requires its holonomy group $\mathrm{Hol}(\hat{\mathcal{A}})$ to be non-Abelian
(see Methods and Supplementary Note 5). As such, there should exist such loops based at a fixed point that  their non-Abelian AB phase factors (holonomies) are noncommutable, i.e. if a particle travels along two such loops in opposite sequences, the obtained AB phase factors would be different. This implies that at least two vortices exist in a genuine non-Abelian system~\cite{Raman1986}.

Indeed, we can use anisotropic and gyrotropic materials (see Table~\ref{gauge potentials}) to synthesize two vortices of $\mathrm{SU}(2)$ vector potential $\hat{\mathcal{A}}=\mathcal{A}^{1}\hat{\sigma}_1+\mathcal{A}^2\hat{\sigma}_2$ ($\hat{\mathcal{A}}_0=0$) with vanishing field $\hat{\mathcal{B}}=0$ in the whole space except for two small domains, taken as point singularities for simplicity. Here, we provide the synopsis of our scheme, and more details are given in Supplementary Note 6 (also see Supplementary Note 8 for an alternative design).
As illustrated in Fig.\ref{fig3}a, we demand $\hat{\mathcal{A}}=\mathcal{A}^{1}\hat{\sigma}_1$ ($\mathcal{A}^2=0$) in the upper half-space,
while $\hat{\mathcal{A}}=\mathcal{A}^{2}\hat{\sigma}_2$ ($\mathcal{A}^1=0$) in the lower half-space. We also require that  $\mathcal{A}^{1}$, $\mathcal{A}^{2}$ smoothly tend to zero in the middle region without overlap. In the vicinity of the upper (lower) singularity, $\mathcal{A}^{1}$ ($\mathcal{A}^{2}$) forms an irrotational vortex carrying the flux $\Phi_1$ ($\Phi_2$) (see Supplementary Equation 44 for the concrete expression of $\hat{\mathcal{A}}$ fulfilling these requirements). 
For a closed loop with a fixed base-point, its non-Abelian holonomy is invariant against continuous deformation of the path within the $\hat{\mathcal{B}}=0$ region. As a consequence, for the two homotopy classes of loops $[c_1]$ and $[c_2]$ (where $[c_i]$ denote the path homotopy classes; see Methods), based at $\mathbf{x}_0$ and encircling the upper (for $[c_1]$) or lower  (for $[c_2]$)  vortex once, their holonomies are $\hat{U}_i=\hat{\mathcal{U}}_{[c_i]}[\mathbf{x}_0]=\exp\left[\mathrm{i}\Phi_i\hat{\sigma}_i\right]$ ($i=1,2$) respectively. As $\hat{U}_1$ and $\hat{U}_2$ do not commute with each other, this double-vortex system is a genuine non-Abelian AB system.

In order to realize the vector potential shown in Fig.~\ref{fig3}a, the background media are set up as $\mathbf{g}_1=-\mathbf{g}_2^*$ (i.e. $\mathbf{g}_+=0$) and $\varepsilon_T=\varepsilon_z=\mu_T=\mu_z=\text{const.}$ to guarantee $\hat{\mathcal{A}}_0\equiv0$ and $V_0=\text{const.}$ Also, we use reciprocal anisotropic materials with purely real off-block-diagonal components $\mathbf{g}_1=-\mathbf{g}_2$
to build  the vector potential $\hat{\mathcal{A}}=\mathcal{A}^1\hat{\sigma}_1$ in the upper half plane but gyrotropic materials with purely imaginary $\mathbf{g}_1=\mathbf{g}_2$ to build $\hat{\mathcal{A}}=\mathcal{A}^2\hat{\sigma}_2$ in the lower half plane (see Supplementary Note 6 for details).
As a result, we have designed a genuine non-Abelian AB system for light. Then, we will show how the genuine non-Abelian nature of the system can be detected from interference effects.

\begin{figure*}[ht!]
\includegraphics[width=0.75\textwidth]{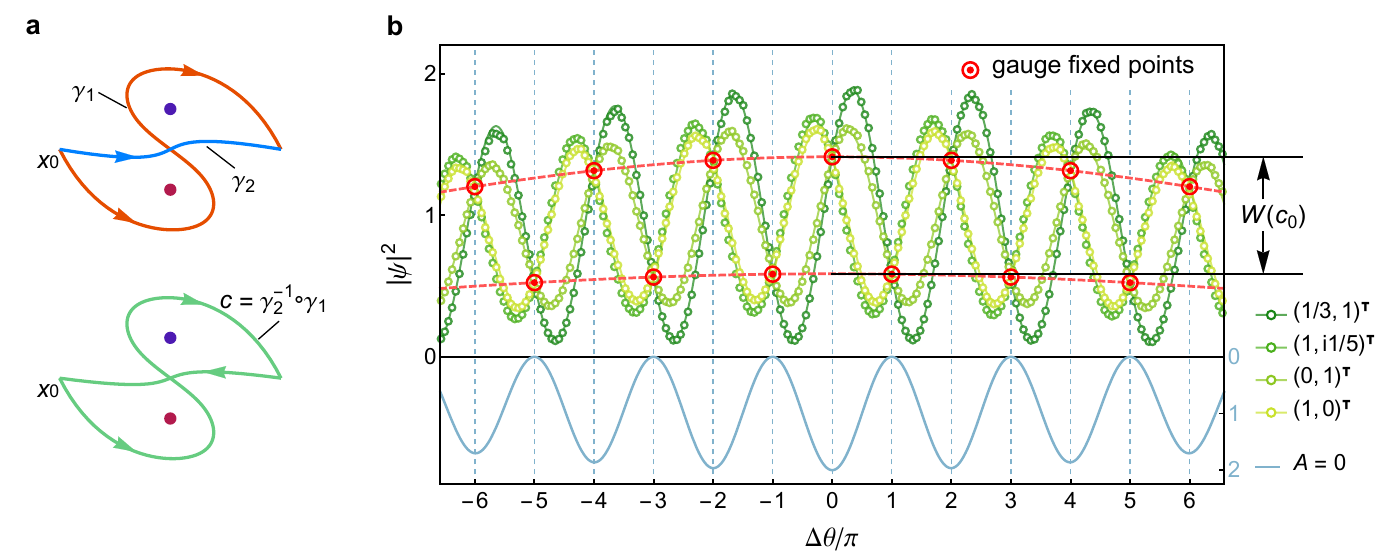}
\caption{\label{wilson loop} Extracting Wilson loops from gauge fixed points. \textbf{a}\ For two arbitrary beams $\gamma_1$, $\gamma_2$ interfering on the screen, the Wilson loop of the concatenate path $c=\gamma_2^{-1}\circ\gamma_1$ can be extracted from the interference fringes of the two beams. \textbf{b}\ Four intensity interference curves corresponding to four different incident spinors $(1,0)^\intercal$, $(0,1)^\intercal$, $(1,\mathrm{i}1/5)^\intercal$, $(1/3,1)^\intercal$ for the non-Abelian AB system shown in Fig.~\ref{fig3}\textbf{a} with vortex fluxes $\Phi_1=0.22\pi$, $\Phi_2=-0.33\pi$,  where circles and solid curves represent numerical and analytical results respectively. Their intersections, marked by red targets, are the gauge fixed points, which are located at the crests and troughs of the interference fringes (light blue curve) of $\hat{\mathcal{A}}=0$. The maximal difference between the envelops of even and odd gauge fixed points gives the Wilson loop $W(c_0)$ of $c_0=\gamma_{\mathrm{II}}^{-1}\circ\gamma_{\mathrm{I}}$.}
\end{figure*}

\textbf{Non-Abelian AB interference.} 
Consider two coherent light beams with the same initial spin $\vec{s}_0$ propagating separately along the two folded paths $\gamma_{\mathrm{I}}$ and $\gamma_{\mathrm{II}}$, 
and finally superposing on the screen (Fig.~\ref{fig3}a).  For the trivial situation of $\hat{\mathcal{A}}=0$, the two beams are uniformly polarized along the whole paths, thus their final states are given by $|\psi_{i}(y)\rangle=a(y)\mathrm{e}^{\mathrm{i}\theta_i(y)}|s_0\rangle\ $ ($i=\mathrm{I},\mathrm{II}$), where $a(y)$ is the envelope of both beams on the screen, $|s_0\rangle$ is the normalized initial spinor at $\mathbf{x}_0$, and $\theta_i(y)$ denote the dynamic phases which have included the initial phases. The dynamic phase difference, $\Delta\theta(y)=\theta_\mathrm{I}(y)-\theta_\mathrm{II}(y)$, determines the interference pattern: $|\psi_\mathrm{I}+\psi_\mathrm{II}|^2(y)=2a(y)\big[1+\cos(\Delta\theta(y))\big]$.

In the presence of the two non-Abelian vortices of $\hat{\mathcal{A}}$, the two optical paths are unchanged thanks to the null gauge field. However, the gauge potential drives the pseudo-spins to rotate along the paths, 
and the two final states convert to
\begin{equation}
  |\psi_{i}(y)\rangle=a(y)\,\hat{U}_{\gamma_{i}}\mathrm{e}^{\mathrm{i}\theta_i(y)}|s_0\rangle,\quad\ (i=\mathrm{I},\mathrm{II})
\end{equation}
where an additional non-Abelian AB phase factor $\hat{U}_{\gamma_{i}}=\mathcal{P}\exp\big[\mathrm{i}\int_{\gamma_i}\hat{\mathcal{A}}\cdot d\mathbf{r}\big]$ appears in each state.
The optical path of each beam can be regarded as a concatenation of a closed loop $c_{i}$ and a common path $\gamma_0$, i.e.,$\gamma_{i}=\gamma_{0}\circ c_{i}$  ($i=\mathrm{I},\mathrm{II}$), as illustrated in Fig.\ref{fig3}b.
The closed loop $c_{\mathrm{I}}$ can be further deformed continuously into two successive loops $c_2\circ c_1^{-1}$, which winds around the upper vortex (clockwise) first and subsequently the lower vortex (anticlockwise) (Fig.~\ref{fig3}c). Likewise, $c_{\mathrm{II}}$ is homotopic to $c_1^{-1}\circ c_2$, namely  $c_{\mathrm{II}}$ winds around the lower vortex first before it does the upper vortex  (Fig.~\ref{fig3}d).
Because of the noncommutativity of the sequences of winding around the two vortices, the AB phase factors of the two beams are different:
\begin{equation}
  \hat{U}_{\gamma_{\mathrm{I}}}=\hat{U}_{\gamma_0}\hat{U}_2\hat{U}_1^{-1}\neq
  \hat{U}_{\gamma_{\mathrm{II}}}=\hat{U}_{\gamma_0}\hat{U}_1^{-1}\hat{U}_2.
\end{equation}
Consequently, the two beams will end up with distinct spins $\vec{s}_\mathrm{I}$ and $\vec{s}_\mathrm{II}$  on the screen (Fig.~\ref{fig3}f), and they will interfere with each other in a nontrivial way. The term spin density interference was coined for this phenomenon and it can be calculated as follows:
\begin{equation}\label{spindensityinterference}
\big\langle {\psi}_\mathrm{I}(y)+{\psi}_{\mathrm{II}}(y)\big|\,\vec{\hat{\sigma}}\,\big|{\psi}_\mathrm{I}(y)+{\psi}_{\mathrm{II}}(y)\big\rangle=|{\psi}|^2(y)\, \vec{s}(y).
\end{equation}
Here, the angle bracket denotes the spinor inner product at a local position $y$ on the screen, the obtained result describes the spin density distribution on the screen. The spin density can be further decomposed into two parts: the intensity interference
\(|{\psi}|^2(y)\) and the spin orientation interference \(\vec{s}(y)\). 
The intensity interference part can be
 derived as
\begin{equation}\label{intensity interference}
\begin{split}
\left|{\psi}\right|^2(y)
=&\ 2\left[ a(y)^2+\mathrm{Re}\langle{\psi}_\mathrm{II}\big|\,{\psi}_\mathrm{I}\rangle(y)\right]\\
=&\ 2\,a(y)^2\left[1+\mathrm{Re}\left(\mathrm{e}^{\mathrm{i}\Delta\theta(y)}\big\langle s_0\big|\,\hat{\mathcal{U}}_{[c_0]}\,\big|s_0\big\rangle\right)\right]\\
=& \ 2\,a(y)^2\,\left[1+b\cos\left(\Delta\theta(y)+\delta\theta\right)\right],
\end{split}
\end{equation}
where $\hat{\mathcal{U}}_{[c_0]}=\hat{U}_{\gamma_\mathrm{II}}^{-1}\hat{U}_{\gamma_\mathrm{I}}=\hat{U}_2^{-1}\hat{U}_1\hat{U}_2\hat{U}_1^{-1}$ is the non-Abelian holonomy of the closed path $c_0=\gamma_{\mathrm{II}}^{-1}\circ\gamma_{\mathrm{I}}$.  
The nontrivial expectation value of the holonomy of $c_0$, $\big\langle s_0\big|\,\hat{\mathcal{U}}_{[c_0]}\,\big|s_0\big\rangle=b\,\mathrm{e}^{\mathrm{i}\delta\theta}\neq1$,  leads to a phase shift $\delta\theta$ and a change of the relative amplitude $b$ $(\leq 1)$ in comparison with the interference result of $\hat{\mathcal{A}}=0$.
In the mean time, the interfering spin orientation $\vec{s}(y)$ turns out to be always perpendicular to $\Delta\vec{s}=\vec{s}_\mathrm{I}-\vec{s}_\mathrm{II}$, namely lying on the green great circle of $\vec{s}(y)\cdot\Delta\vec{s}\equiv0$ in Fig.~\ref{fig3}f, and fluctuates around it (see Supplementary Note 7).

We have performed a full-wave simulation of this non-Abelian AB interference as shown in Fig.~\ref{fig3}e. In the simulation, the envelope $a(y)$ of each beam is set to be Gaussian type with a central  amplitude $a(0)=1/2$. The {spin density interference} is shown in Fig.~\ref{fig3}g, with  the intensity interference \(|{\psi}|^2(y)\) in Fig.~\ref{fig3}h, and the spin orientation given by Euler angles in Fig.~\ref{fig3}i,j. In Fig.~\ref{fig3}h-j, the blue circles are the simulated results,  which are fairly consistent with the theoretical results (red curves) obtained from Eq.~(\ref{spindensityinterference}).

To demonstrate that the non-Abelian feature of the above design is indeed genuine, we consider a control experiment with an almost identical system except that
the vector potential is $\hat{\mathcal{A}}\propto \hat{\sigma}_1$ in the whole space. In this case,  $\hat{U}_i=\exp[\mathrm{i}\Phi_i\hat{\sigma}_1]$ ($i=1,2$) commute with each other, and their winding around  the two vortices in opposite sequences gives the same AB phase factor $\hat{U}_{\gamma_{\mathrm{I}}}=\hat{U}_{\gamma_{\mathrm{II}}}=\exp\left[\mathrm{i}(\Phi_2-\Phi_1)\hat{\sigma}_1\right]$.
Thus, the interfering spin density is uniformly orientated, and there is no phase shift ($\delta\theta\equiv0$) and amplitude contraction ($b\equiv1$) compared with the case of $\hat{\mathcal{A}}=0$ (see green lines in Fig.~\ref{fig3}h-j).

\textbf{Measurement of Wilson loops.} 
In Abelian AB systems, the AB phase factor (holonomy) of a closed loop only
depends on the flux inside the loop but independent of the choice of gauge. However, in non-Abelian
systems, the holonomy $\hat{\mathcal{U}}_{[c]}[\mathbf{x}_0]$ of a closed path $c$ based at $\mathbf{x}_0$ varies as $\hat{\mathcal{U}}^\prime_{[c]}[\mathbf{x}_0]=\hat{U}(\mathbf{x}_0)\hat{\mathcal{U}}_{[c]}[\mathbf{x}_0]\hat{U}^\dagger(\mathbf{x}_0)$,
under a gauge
transformation
\(\hat{\mathcal{A}}^\prime=\hat{U}\hat{\mathcal{A}}\hat{U}^\dagger+\mathrm{i}\hat{U}\nabla_{T} \hat{U}^\dagger\).
Nevertheless, the trace of holonomy is an important gauge invariant observable, called the {Wilson loop} of the closed
path \(c\):
\begin{equation}
  W(c)=\mathrm{Tr}\left(\mathcal{P}\exp\left[\mathrm{i}\oint_c\hat{\mathcal{A}}\cdot d\mathbf{r}\right]\right)=\mathrm{Tr}\,\hat{\mathcal{U}}_{[c]}=\mathrm{Tr}\,\hat{\mathcal{U}}^\prime_{[c]}.
\end{equation}
In what follows, we show how to extract the Wilson loop of an
arbitrary closed path via interferometry.

In order to obtain the Wilson loop of a homotopy class $[c]$ in a non-Abelian AB system, we consider the interference of two beams along any two paths $\gamma_1$ and $\gamma_2$ as long as $\gamma_2^{-1}\circ\gamma_1=c$ forms a closed loop in the class $[c]$ as sketched in Fig.~\ref{wilson loop}a. As we deduced in Eq.~(\ref{intensity interference}), the holonomy of $c$, together with the initial spinor $|s_0\rangle$, determines the phase shift and the relative amplitude through the term $\big\langle s_0\big|\,\hat{\mathcal{U}}_{[c]}\,\big|s_0\big\rangle=b\,\mathrm{e}^{\mathrm{i}\delta\theta}$. In fact, its real part depends solely on the Wilson loop of $c$ (see proof in Methods):
\begin{equation}\label{real part}
  W(c)=2\,\mathrm{Re}\,\langle s_0|\,\hat{\mathcal{U}}_{[c]}\,|s_0\rangle=2\,b\cos\delta\theta.
\end{equation}
Thus, at certain positions $y_n$ satisfying $\Delta\theta(y_n)=n\pi$ ($n$ belongs to integers), the intensities only depend on the Wilson loop of $c$ and hence are  fixed under gauge transformation:
\begin{equation}\label{loopPattern}
\left|{\psi}\right|^2(y_n)\equiv a(y_n)^2\left[2+(-1)^n W(c)\,\right],
\end{equation}
where the two beams are supposed to share the same  envelope $a(y)$ on the screen, and the  locations $y_n$ correspond to  the crests and troughs in the interference fringes of $\hat{\mathcal{A}}=0$. These particular points in the intensity fringes are termed the gauge fixed points for the closed path $c$. Since the change of incident spin at $\mathbf{x}_0$ is equivalent to a global gauge transformation, the interference fringes for different incident spins should intersect at the gauge fixed points.

Using the above method, we examine the two optical paths $\gamma_{\mathrm{I}}$, $\gamma_{\mathrm{II}}$ in Fig.~\ref{fig3}a to extract the Wilson loop of $c_0=\gamma_{\mathrm{II}}^{-1}\circ\gamma_{\mathrm{I}}\simeq  c_2^{-1}\circ c_1\circ c_2\circ c_1^{-1}$. Figure~\ref{wilson loop}b shows the intensity interference curves corresponding to four different incident spins. Indeed, they intersect exactly at the gauge fixed points (red targets in Fig.~\ref{wilson loop}b) whose locations $y_n$ coincide  with the crests and troughs of the interference fringe pattern for $\hat{\mathcal{A}}=0$. By fitting the even and odd subsequences of the gauge fixed points, we obtain two curves $a(y)^2\left[2\pm W(c_0)\right]$ corresponding to the two red dashed lines in Fig.~\ref{wilson loop}b.
Thus, the Wilson loop $W(c_0)$ can be identified from the difference of the two dashed curves. 

\textbf{Discussion}\\
We have shown that the dynamics of 2D optical waves in a broad class of anisotropic media can be understood through an emergent  $\mathrm{SU}(2)$ gauge interaction in real space. We predicted that the \emph{Zitterbewegung} effect of light can be realized even in homogeneous anisotropic media, and we proposed a biaxial metamaterial to achieve synthetic non-Abelian electric field and ZB in microwave regime.  We have also designed a genuine non-Abelian AB system with two synthetic non-Abelian vortices, and suggested a spin density interferometry to demonstrate the noncommutative feature of non-Abelian holonomies. Our scheme opens the door to the colorful  non-Abelian world for light. In addition to inspiring new ideas to manipulate the flow and polarization of light, the scheme offers an optical platform to study physical effects relevant to $\mathrm{SU}(2)$ gauge fields, such as synthetic spin-orbit coupling~\cite{Galitski2013Spin} and topological band structures in periodic non-Abelian gauge fields~\cite{Osterloh2005Cold,goldman2009non,goldman2009ultracold,lepori2016double}.
Furthermore, since the $\mathrm{SU}(2)$ gauge field description is valid for photons down to quantum scale,
this approach might be applicable to the design of geometric gates for realizing non-Abelian holonomic quantum computation~\cite{zanardi1999holonomic,duan2001geometric} with photons.

\textbf{Methods}

\textbf{Notations.} 
In this paper, vectors in real space and in pseudo-spin space are indicated, respectively, by bold letters and letters with an overhead arrow ``$\rightarrow$''. Letters with an overhead bidirectional arrow ``$\leftrightarrow$'' denote 2-order tensors in real space. Symbols with an overhead hat ``$\wedge$'' denote operators acting on the spinor wave functions.  We use Greek letters, e.g. $\mu,\nu$, to denote indices of (2+1)-dimensional spacetime. Latin letters $i,j$ denote 2D spatial coordinate indices, and Latin letters $a,b,c$ denote indices in pseudo-spin space. We follow the Einstein summation convention for repeated indices. The orthonormal coordinate bases in real space and pseudo-spin space are expressed as $\mathbf{e}_i$ and $\vec{e}_a$ respectively.

\textbf{$\mathrm{SU}(2)$ gauge covariance of 2D Maxwell equations.}
In block-diagonalized duality symmetric media, $\matr{\varepsilon}/\varepsilon_0=\matr{\mu}/\mu_0=\mathrm{diag}(\matr{\varepsilon}_T,\varepsilon_z)$, the Maxwell's equations for 2D waves can be rearranged as
\begin{equation}\label{maxwell}
\hspace{-13pt}\underbrace{\left(\begin{array}{c|c}
0 & \mathrm{i}\hat{\sigma}_2 (\mathrm{i}\nabla_T\times)\mathbf{e}_z\\ \hline
\mathrm{i}\hat{\sigma}_2\, \mathbf{e}_z\cdot(\mathrm{i}\nabla_T\times) & 0
\end{array}\right)}_{\mathcal{M}}
\Psi
=k_0
\underbrace{\left(\begin{array}{c|c}
\hat{\sigma}_0\matr{\varepsilon}_T & 0\\\hline
0  & \hat{\sigma}_0\varepsilon_z
\end{array}\right)}_{\mathcal{N}}
\Psi,
\end{equation}
with $\Psi=(\mathbf{E}_T, \eta_0\mathbf{H}_T,E_z, \eta_0 H_z)^\intercal$.
For an arbitrary (global) transformation $\hat{U}\in \mathrm{SU}(2)$ acting on $|\psi\rangle=(E_z,\eta_0 H_z)^\intercal$, the corresponding transformation for $\Psi$ is defined as
\begin{equation}\label{EM transform}
\tilde{U}=\hat{U}_T\oplus \hat{U}=(\hat{\sigma}_2 \hat{U}\hat{\sigma}_2)\oplus \hat{U},
\end{equation}
which belongs to a 4D representation of $\mathrm{SU}(2)$. It turns out that $\mathcal{M}$ and $\mathcal{N}$ defined in Eq.~(\ref{maxwell}) transform according to
\begin{gather}
\hspace{-13pt}\tilde{U}\mathcal{M}\tilde{U}^\dagger=\left(\begin{array}{c|c}
0 & \mathrm{i}\hat{U}_T\hat{\sigma}_2U^\dagger (\mathrm{i}\nabla_T\times)\mathbf{e}_z\\ \hline
\mathrm{i}\hat{U}\hat{\sigma}_2\hat{U}_T^\dagger\, \mathbf{e}_z\cdot(\mathrm{i}\nabla_T\times) & 0
\end{array}\right)=\mathcal{M},\\
\tilde{U}\mathcal{N}\tilde{U}^\dagger=\mathcal{N}.
\end{gather}
Hence, the 2D Maxwell's equations are invariant under this $\mathrm{SU}(2)$ transformation.
As the EM duality transformation $\hat{R}\in \mathrm{SO}(2)$ 
is a special case of $\hat{U}$, the emergent $\mathrm{SU}(2)$ symmetry for the 2D Maxwell's equations in block-diagonalized duality symmetric materials is indeed the generalization of the original EM duality symmetry.

If $\hat{U}(x,y)$ is dependent on the $x,y$ coordinates, the transformation of $\mathcal{M}$ changes to
\begin{equation}
\tilde{U}(x,y)\mathcal{M}\tilde{U}^\dagger(x,y)=\mathcal{M}+\Delta\mathcal{M}
\end{equation}
with an additional term 
\begin{equation}
\begin{split}
\Delta\mathcal{M}
&=\left(\begin{array}{c|c}
0 & \mathrm{i}\hat{\sigma}_2 (\mathrm{i}\hat{U}\nabla_T \hat{U}^\dagger )\times\mathbf{e}_z \\ \hline
-\mathrm{i} (\mathrm{i}\hat{U}\nabla_T \hat{U}^\dagger )\times\mathbf{e}_z \hat{\sigma}_2 & 0
\end{array}\right)\\
&=\left(\begin{array}{c|c}
0 & (\hat{\sigma}_3 \mathcal{A}^1+\mathrm{i}\hat{\sigma}_0\mathcal{A}^2)\times\mathbf{e}_z \\ \hline
 (\hat{\sigma}_3 \mathcal{A}^1-\mathrm{i}\hat{\sigma}_0\mathcal{A}^2)\times\mathbf{e}_z & 0
\end{array}\right),
\end{split}
\end{equation}
where $\mathcal{A}^a\hat{\sigma}_a=\mathrm{i}\hat{U}\nabla_T \hat{U}^\dagger$ is precisely the vector potential induced purely by the gauge transformation, and only the components $\mathcal{A}^1, \mathcal{A}^2$ are supposed to exist. If we move the term $\Delta\mathcal{M}$ to the right side of the Maxwell equations~(\ref{maxwell}), it can be alternatively interpreted as a part of the constitutive tensor. By rotating $\Psi$ to the ordinary basis of EM field, 
\begin{equation} 
\begin{pmatrix}
\mathbf{E}\\
\eta_0\mathbf{H}
\end{pmatrix}=U_0\Psi=
\left(\begin{array}{cc|cc}
1 & 0 & 0 & 0 \\
0 & 0 & 1 & 0 \\\hline
0 & 1 & 0 & 0 \\
0 &0 & 0 & 1
\end{array}\right)\begin{pmatrix}
\mathbf{E}_T \\ \eta_0\mathbf{H}_T \\\hline
{E}_z \\ \eta_0{H}_z
\end{pmatrix},
\end{equation}
we obtain explicitly the contribution of $\Delta\mathcal{M}$ to the material tensors
\begin{equation}\label{material transform}
\begin{split}
&\left(\begin{array}{cc}
\matr{\varepsilon} & {0} \\
{0} & \matr{\mu}
\end{array}\right)=
U_0\Big(\mathcal{N}-\Delta\mathcal{M}/k_0\Big)U_0^\dagger,
\end{split}
\end{equation}
which shows that the effective $\mathrm{SU}(2)$ vector potential $\hat{\mathcal{A}}$ emerging in $\Delta\mathcal{M}$ just corresponds to the off-diagonal terms of $\matr{\varepsilon}$, $\matr{\mu}$:
\begin{equation}
  \mathbf{g}_1=-\mathbf{g}_2^*=\mathbf{e}_z\times(\mathcal{A}^1+\mathrm{i}\mathcal{A}^2)/k_0.
\end{equation}
Indeed, this relation is valid for arbitrary $\hat{\mathcal{A}}=\mathcal{A}^1\hat{\sigma}_1+\mathcal{A}^2\hat{\sigma}_2$ but not limited to the pure gauge case $\hat{\mathcal{A}}=\mathrm{i}\hat{U}\nabla_T \hat{U}^\dagger$. Furthermore, this correspondence can be generalized to any media satisfying  in-plane duality condition $\matr{\varepsilon}_T=\alpha \matr{\mu}_T$ where $\mathrm{SU}(2)$ scalar potential may also appear (Supplementary Note 1).

\textbf{Quasi-degenerate approximation for ZB. }
Eq.~(\ref{wave equation}) is essentially the stationary wave equation describing spin-1/2 particles coupling to the background $\mathrm{SU}(2)$ gauge fields without any approximation. However, the semiclassical trajectories of non-degenerate eigenmodes  often split away from each other. To manifest the coupling effects of different eigenmodes in the geometric optics, the media of concern are usually assumed to be weakly anisotropic~\cite{bliokh2007non}. Nevertheless, if the eigenmodes are approximately degenerate in a particular direction of wave vector but not necessarily in all directions, it turns out that an intact wave composed of modes in the vicinity of the quasi-degenerate direction can be described adequately by the semiclassical approach including the interaction between eigenmodes in their interfering region~\cite{Kravtsov1990}.  

In homogeneous non-Abelian media, we separate the effective Hamiltonian into two parts:
\begin{equation}
  \hat{H}(\mathbf{k})=\underbrace{\left[\frac{1}{2m}\left(\mathbf{k}^2+(\hat{\mathcal{A}})^2\right)+V_0\right]\hat{\sigma}_0}_{\displaystyle\hat{H}_0(\mathbf{k})}+\underbrace{\left(\frac{-1}{m}\mathbf{k}\cdot\hat{\mathcal{A}}-\hat{\mathcal{A}}_0\right)}_{\textstyle\delta\hat{H}(\mathbf{k})=\vec{\Omega}\cdot\vec{\hat{\sigma}}/2}.
\end{equation}
If only $\hat{H}_0$ is present, the isofrequency surface is a doubly degenerate sphere with the radius $k=\sqrt{-2m\,V_0-(\hat{\mathcal{A}})^2}$. When $\delta\hat{H}(\mathbf{k})$ is taken into account, as long as it is sufficiently small for a given direction $\mathbf{e}_k$, the two eigenstates can be regarded as quasi-degenerate at the wave vector 
\begin{equation}\label{quasi-degeneratre k}
\mathbf{k}=k\mathbf{e}_k=\sqrt{-2m\,V_0-(\hat{\mathcal{A}})^2}\ \mathbf{e}_k, 
\end{equation}
and we can implement the eikonal approximation to the  wave function mainly superposed by the two quasi-degenerate modes~\cite{Kravtsov1990}: $\psi=\tilde{\psi}(\mathbf{r})\exp(\mathrm{i}\mathbf{k}\cdot\mathbf{r})$ with a slowly varying envelope $\tilde{\psi}(\mathbf{r})$ (i.e. $|\nabla\tilde{\psi}/\tilde{\psi}|\ll k$). Subsituting $\psi$ into the wave equation~(\ref{wave equation}), we obtain the equation of $\tilde{\psi}$ with accuracy up to the first order of $k$:
\begin{equation}\label{envelope eq}
  \mathrm{i}\,\hat{\mathbf{v}}\cdot\nabla\tilde{\psi}=\hat{H}(\mathbf{k})\tilde{\psi}.
\end{equation}
By adopting the ansatz that the velocity operator $\hat{\mathbf{v}}=\partial\hat{H}(\mathbf{k})/\partial\mathbf{k}=(\mathbf{k}-\hat{\mathcal{A}})/m$ can be replaced by its averaged value $\langle\hat{\mathbf{v}}\rangle$ over the transverse cross section of an optical beam, we find that the operator
$\hat{\mathbf{v}}\cdot\nabla\ \rightarrow\  \langle\hat{\mathbf{v}}\rangle\cdot\nabla=d/d\tau$  corresponds to the total derivative with respect to the ray parameter $\tau$ along the beam. Therefore, Eq.~(\ref{envelope eq}) is reformulated into a time-dependent Schr\"{o}dinger equation
\begin{equation}\label{envelope eq2}
  \mathrm{i}\frac{d}{d\tau}\tilde{\psi}=\hat{H}(\mathbf{k})\tilde{\psi}.
\end{equation}
Consequently, Eqs.~(\ref{momentum equation})--(\ref{spin precession}) can be directly obtained in terms of Ehrenfest theorem.

\textbf{Relation between Poynting vector and velocity operator. }
The in-plane projection of the time-averaged Poynting vector $\bar{\mathbf{S}}_T$ for monochromatic waves can be written as
\begin{equation}\label{poynting vector}
  \begin{split}
  \bar{\mathbf{S}}_T=&\,\frac{1}{2}\mathrm{Re}\left[\mathbf{E}_z^*\times\mathbf{H}_T+\mathbf{E}_T^*\times\mathbf{H}_z\right]\\
  =&\,\frac{1}{2}\mathrm{Re}\left[(\mathbf{E}^*_z,\,\mathbf{H}^*_z)\begin{pmatrix}0 & \matr{\mathrm{I}}\times\matr{\mathrm{I}} \\ -\matr{\mathrm{I}}\times\matr{\mathrm{I}} & 0\end{pmatrix} 
  \begin{pmatrix}
    \mathbf{E}_T\\ \mathbf{H}_T
  \end{pmatrix}\right]\\
  =&\,\frac{1}{2\eta_0}\mathrm{Re}\left[({E}^*_z,\,\eta_0{H}^*_z) \big(\mathrm{i}\,\hat{\sigma}_2\epsilon^{izj}\mathbf{e}_i\big)  
  \begin{pmatrix}
    \mathbf{E}_T\\ \eta_0\mathbf{H}_T
  \end{pmatrix}_j
  \right].
 \end{split}
\end{equation}\vspace{-5pt}
Substituting Maxwell's equations into Eq.~(\ref{poynting vector}) yields
\begin{equation}\label{poynting vector2}
  \begin{split}
    \bar{\mathbf{S}}_T=&\, \frac{1}{2\eta_0}\mathrm{Re}\Big\langle\psi\Big| \big(\mathrm{i}\hat{\sigma}_2\epsilon^{izj} \mathbf{e}_i\big)\frac{-\mathrm{i}\,\hat{\sigma}_2\epsilon_{jkz}}{k_0\varepsilon_T}
    \big(\hat{\mathbf{p}}-\hat{\mathcal{A}}_{(c)}\big)^k \Big|\psi\Big\rangle\\
    =&\,\frac{1}{2\eta_0 k_0}\mathrm{Re}\Big\langle\psi\Big|\frac{1}{2m}\big(\hat{\mathbf{p}}-\hat{\mathcal{A}}_{(c)}\big) \Big|\psi\Big\rangle\\
    =&\,\frac{1}{4\mu_0\omega_0}\Big\langle\psi\Big|\frac{1}{m}\big(\mathbf{k}-\hat{\mathcal{A}}\big) \Big|\psi\Big\rangle
    =\frac{1}{4\mu_0\omega_0}\big\langle\psi\big|\hat{\mathbf{v}}\big|\psi\big\rangle,
\end{split}
\end{equation}
where $\hat{\mathcal{A}}_{(c)}=\frac{k_0}{2}\left\{\left[(\mathbf{g}_1-\mathbf{g}_2)\times\mathbf{e}_z\right]\hat{\sigma}_1-\mathrm{i}\left[(\mathbf{g}_1+\mathbf{g}_2)\times\mathbf{e}_z\right]\hat{\sigma}_2\right\}$.
In the third step, we replaced $\hat{\mathbf{p}}$ with $\mathbf{k}$ according to the eikonal approximation. As a result, the total in-plane energy flux over a transverse cross section of the optical beam is propotional to the expectation value of the velocity operator:
\begin{equation}
  \langle\bar{\mathbf{S}}_T\rangle=\frac{1}{4\mu_0\omega_0}\int d\mathbf{r}_\perp\,\big\langle\psi\big|\hat{\mathbf{v}}\big|\psi\big\rangle= \frac{1}{4\mu_0\omega_0}\langle\hat{\mathbf{v}}\rangle.
\end{equation}
And it shows that the time-averaged Poynting vector $\bar{\mathbf{S}}_T$ is invariant under the gauge transformation Eq.~(\ref{EM transform})  for EM fields (Supplementary Note 2).

\textbf{Holonomy and genuine non-Abelian AB system.}
From a geometric viewpoint, gauge potential and field in the physical space $M$ can be described as the connection and curvature in a $G$-principle fiber bundle~\cite{Wuyang}, where the physical space serves as the base manifold, and $G$ denotes the gauge group, in our case $G=\mathrm{SU}(2)$.  Consider a particle (wave packet) travels in the physical space. Along its trajectory $\gamma$, the gauge potential engenders a matrix-valued geometric phase factor $\mathcal{P}\exp\left[\mathrm{i}\int_{\gamma}\hat{\mathcal{A}}_\mu dx^\mu\right]\in G$ ($\mathcal{P}$ denotes path-ordering) on the state vector, corresponding to the parallel transport of the state in the bundle space. In particular, for a closed path $c$ starting and ending at the same point $c(0)=c(1)=\mathbf{x}_0$, the phase factor of $c$,
\begin{equation}
  \hat{\mathcal{U}}_c(\hat{\mathcal{A}})=\mathcal{P}\exp\left[\mathrm{i}\oint_{c}\hat{\mathcal{A}}_\mu dx^\mu\right],
\end{equation}
is called the holonomy of the closed path $c$ with respect to the gauge $\hat{\mathcal{A}}$. The collection of the holonomies corresponding to all those closed paths based at the same point $\mathbf{x}_0$ forms a subgroup of the gauge group $G$:
\begin{equation}
\mathrm{Hol}(\hat{\mathcal{A}})=\left\{\hat{\mathcal{U}}_c(\hat{\mathcal{A}})\,\big|\ {c}(0)={c}(1)=\mathbf{x}_0 \right\}\subseteq G,
\end{equation}
which is the holonomy group for the gauge $\hat{\mathcal{A}}$. In the literature, a gauge system is regarded as genuinely non-Abelian if and only if the holonomy group is a non-Abelian group, namely the holonomies of some loops are noncommutable with each other~\cite{Goldman2014,Raman1986}. If the base manifold is simply a Euclidean space, the noncommutativity of holonomies can be traced back to noncommutable gauge fields $[\hat{\mathcal{F}}_{\mu\nu},\hat{\mathcal{F}}_{\mu'\nu'}]\neq0$. However, if the base manifold possesses nontrivial topology, noncommutative holonomies can  be achieved even though the gauge field vanishes everywhere (i.e. AB systems).

For an AB system, the corresponding fiber bundle is a flat bundle, since the curvature (field) $\hat{\mathcal{F}}_{\mu\nu}=0$ in the whole base manifold $M$ (flux regions are excluded from $M$). Here, the topology of the base manifold is characterized by its first fundamental group, 
\begin{equation}
  \pi_1(M)=\{[c]\,|\,c(0)=c(1)=\mathbf{x}_0\}, 
\end{equation}
which is the set of path homotopy equivalent classes $[c]$ of closed paths based at $\mathbf{x}_0$. Path homotopy is a topologically equivalent relation ``$\simeq$'' for paths. If two paths $c_1$, $c_2$ with the same fixed base-point $\mathbf{x}_0$
can deform into each other continuously, they are said to be path homotopic $c_1\simeq c_2$ and to belong to the same homotopy class $[c_1]$. In flat bundles, the holonomies (AB phase factors) of all loops in the same homotopy class $[c]$ are identical: 
$\hat{\mathcal{U}}_{[c]}$ (see proof in Supplementary Note 5). Based on this property, two necessary conditions for genuine non-Abelian AB systems can be obtained~\cite{Raman1986}:
\begin{enumerate}[itemsep=0mm]
  \item The gauge group $G$ is non-Abelian;
  \item The first fundamental group $\pi_1(M)$ is non-Abelian.
\end{enumerate}

According to the second criterion, the Wu-Yang AB system is not genuinely non-Abelian, because the fundamental group of its base manifold (a punctured plane $\mathbb{R}^2-\mathbf{0}$) is an Abelian group $\pi_1(\mathbb{R}^2-\mathbf{0})=\mathbb{Z}$. However, for a twice-punctured plane as shown in Fig.~\ref{fig3}(a), its fundamental group is the free group on two generators, $\mathbb{Z}*\mathbb{Z}$  (where $*$ denotes a free product), which is non-Abelian~\cite{massey1967algebraic}. Therefore, a twice-punctured plane is a qualified prototype of a genuine non-Abelian AB system.

\textbf{Gauge fixed points. }
The derivation of the intensity interference given by Eq.~(\ref{intensity interference}) is in fact valid for  two arbitrary interfering beams $\gamma_1$, $\gamma_2$ with the same initial spin $\vec{s}_0$ and final envelop $a(y)$: $|\psi|^2=2a(y)\big[1+\mathrm{Re}\big(\mathrm{e}^{\mathrm{i}\Delta\theta(y)}\langle s_0|\,\hat{\mathcal{U}}_{[c]}\,|s_0\rangle\big)\big]$,
where $\hat{\mathcal{U}}_{[c]}$ is the holonomy of the closed path $c=\gamma_2^{-1}\circ\gamma_1$.
Since
\(\hat{\mathcal{U}}_{[c]}\in \mathrm{SU}(2)\),
it can be generically expressed as
\begin{equation}
  \hat{\mathcal{U}}_{[c]}=\begin{pmatrix}
u_1 & u_2\\
-u_2^* & u_1^*
\end{pmatrix},\quad \text{with}\quad |u_1|^2+|u_2|^2=1.
\end{equation}
The Wilson
loop reads \(W(c)=\mathrm{Tr}\,\hat{\mathcal{U}}_{[c]}=2\,\mathrm{Re}(u_1).\) For an
arbitrary spinor state \(|s_0\rangle=(\cos\frac{\alpha}{2}\mathrm{e}^{-\mathrm{i}\beta/2},\, \sin\frac{\alpha}{2}\mathrm{e}^{\mathrm{i}\beta/2})^\intercal\),
we have
\begin{equation}
  \begin{split}
    &\langle s_0|\,\hat{\mathcal{U}}_{[c]}\,|s_0\rangle
=b\, \mathrm{e}^{\mathrm{i}\delta\theta}\\
=&\ \mathrm{Re}(u_1)+\mathrm{i}\left[\cos\alpha\,\mathrm{Im}(u_1)+\sin\alpha\, \mathrm{Im}\big(u_2\mathrm{e}^{\mathrm{i}\beta}\big)\right].
\end{split}
\end{equation}
Therefore, the identity in Eq.~(\ref{real part}) is established for any \(|s_0\rangle\).

In fact, different incident spinors can interconvert  through a global gauge transformation: $|s_0'\rangle=\hat{U}|s_0\rangle$. Hence, the relation in Eq.~(\ref{real part}) is straightforward
\begin{equation}
\begin{split}
2\,\mathrm{Re}\,\langle s_0'|\,\hat{\mathcal{U}}_{[c]}\,|s_0'\rangle=& 2\,\mathrm{Re}\,\big\langle s_0\big|\,\hat{U}^{-1}\hat{\mathcal{U}}_{[c]}\hat{U}\,\big|s_0\big\rangle \\
=& \mathrm{Tr}\left(\hat{U}^{-1}\hat{\mathcal{U}}_{[c]}\hat{U}\right)\equiv W(c).
\end{split}
\end{equation}
As a result, at the positions such that \(\Delta\theta(y_n)=n\pi\), i.e.~at
the crests and troughs of the original interference fringes when \(\hat{\mathcal{A}}=0\), the intensities given in Eq.~(\ref{loopPattern}) are fixed for arbitrary
incident spins, yet they are only determined by the Wilson loop \(W(c)\), provided that the dynamic phases of $\gamma_1$, $\gamma_2$ are unchanged.

For the two optical path $\gamma_\mathrm{I}$, $\gamma_\mathrm{II}$ in Fig.~\ref{fig3}, the Wilson loop of $c_0=\gamma^{-1}_{\mathrm{II}}\circ\gamma_\mathrm{I}$ is determined by the fluxes of the two vortices as $W(c_0)=2-4\sin^2\Phi_1\sin^2\Phi_2$. Therefore, if $\sin^2\Phi_1\sin^2\Phi_2=1/2$, $W(c_0)$ will be reduced to zero, and the two dashed curves in Fig.~\ref{wilson loop} will completely overlap (see Supplementary Note 7 for details).

\textbf{Simulation of non-Abelian AB interference. }
The full-wave results of the non-Abelian AB interference shown in Figs.~\ref{fig3},~\ref{wilson loop} are simulated using the commercial software COMSOL Multiphysics. In order to avoid spin flip after reflection, the mirrors shown in Fig.~\ref{fig3}a,e are made of an impedance-matched material, namely $\varepsilon_m/\mu_m=1$, with a lower refractive index than the surrounding media to achieve total reflection at their surfaces. Meanwhile, the two mirrors on the right-hand side in Fig.~\ref{fig3}e are slightly concave, so that the reflected beams with reduced widths can bypass the two singularities.

\textbf{Data availability}\\
The authors declare that all data supporting the findings of this study are available from the corresponding authors upon reasonable request.

\def\bibsection{\section*{}}
\textbf{References}\\
\vspace{-65pt}
\bibliographystyle{naturemag}
\bibliography{reference}

\vspace{0pt}
\textbf{Acknowledgements}\\
We thank Profs. Bo Hou, and Shubo Wang for helpful discussions.
This work was supported in part by National Natural Science Foundation of China (Grant Nos. 11874026, 11174250, 11574226, and 11874274), and the Fundamental Research Funds for the Central Universities, HUST: 2017KFYXJJ027. The work in Hong Kong was supported by Research Grants Council of Hong Kong (Grant Nos. AoE/P-02/12 and C6013-18G).

\textbf{Author contributions}\\
Y.C. conceived the original idea. 
Y.C., J.Q.S. developed the standard framework and discovered ZB effect. 
R.-Y.Z., Y.C. extended the framework to more general cases. 
R.-Y.Z. performed analytic study of ZB effect, and designed the scheme of non–Abelian AB effect.
Y.C., Z.X. R.-Y.Z. carried out the numerical simulations. 
Z.H.H., J.L. designed the metamaterial.
C.T.C supervised the whole project. 
R.-Y.Z., Y.C., C.T.C. wrote the manuscript. 
And all authors were involved in the analysis and discussion of the results.

\textbf{Competing interests}\\
The authors declare no competing interests.

%
%


%

\end{document}


\title{\Large Supplementary information for ``Non-Abelian gauge field optics''}

\maketitle





\section{Synthetic $\mathrm{U}(2)$ gauge fields in real space for light}
In the main text, the materials considered are restricted to anisotropic media without magnetoelectric (ME) coupling. In that case, an incomplete set of synthetic $U(2)=\mathrm{SU}(2)\rtimes \mathrm{U}(1)$ gauge potentials is obtained, while the $\mathcal{A}^3\hat{\sigma}_3$ component of $\mathrm{SU}(2)$ vector potential and $\mathrm{U}(1)$ vector potential are absent. Here, we consider the more general case of bi-anisotropic media through adding a specific ME coupling term, in order to obtain a complete set of $\mathrm{U}(2)$ gauge potentials.
In frequency domain, the constitutive relations of nondissipative and nondispersive bi-anisotropic media read
\begin{equation}\label{constitutive relations}
\mathbf{D}_\omega=\tensor{\varepsilon}\cdot\mathbf{E}_\omega+\tensor{\chi}_{em}\cdot\mathbf{H}_\omega,\qquad \mathbf{B}_\omega=\tensor{\mu}\cdot\mathbf{H}_\omega+\tensor{\chi}_{me}\cdot\mathbf{E}_\omega,
\qquad (\omega>0)
\end{equation}
with the constitutive coefficient tensors
\begin{equation}\label{constitutive coefficient}
\arraycolsep=1.4pt\def\arraystretch{1.1}
\tensor{\varepsilon}/\varepsilon_0=\left(\begin{array}{c|c}\tensor{\varepsilon}_T & \mathbf{g}_1 \\\hline
\mathbf{g}_1^\dagger & \varepsilon_z
\end{array}\right),\quad
\tensor{\mu}/\mu_0=\left(\begin{array}{c|c}
\tensor{\mu}_T & \mathbf{g}_2 \\\hline
\mathbf{g}_2^\dagger & \mu_z
\end{array}\right),\quad
\tensor{\chi}_{em}=\tensor{\chi}_{me}^\dagger=\tensor{\chi}=\frac{1}{c}\left(\begin{array}{c|c}
\mathbf{0} & \mathbf{t}_1 \\[2pt]\hline
\mathbf{t}_2^\intercal & \chi_{z}
\end{array}\right).
\end{equation}
Here, all parameters are functions of \(x\) and \(y\). The only constraint on $\tensor{\varepsilon},\ \tensor{\mu}$ is the ``in-plane duality'' $\tensor{\varepsilon}_T=\alpha\tensor{\mu}_T\in\mathbb{R}$ ($\alpha$ is an arbitrary positive constant). In the following derivations, we let $\alpha=1$ for convenience, and the results of $\alpha\neq 1$ is straightforward via replacing $\varepsilon_0\rightarrow\varepsilon'_0= \alpha\,\varepsilon_0 $
 and $c\rightarrow c'=1/\sqrt{\varepsilon'_0\mu_0}$ in Eq.~(\ref{constitutive coefficient}). The ME coupling tensors $\tensor{\chi}_{em},\ \tensor{\chi}_{me}$ are assumed to be purely real and do not have in-plane block for convenience. 

\subsection*{Derivation of in-plane wave equation}
In this section, we derive the 2D wave equations for monochromatic waves in the bi-anisotropic media given by Eq.~(\ref{constitutive coefficient}), where a complete set of synthetic $\mathrm{U}(2)$ gauge potentials emerge.
For 2D propagating waves, the fields $\mathbf{E},\,\mathbf{H}$ are only functions of $x,y$, so all terms associated with $\partial/\partial z$ are dropped, and the source-free Maxwell's equations for the complex-valued EM fields (analytic signals) can be expressed as
\begin{subequations}\label{Maxwell eq}
\begin{align}\label{Maxwell eq 1}
  \left[\left(\begin{array}{cc}\nabla_T\times & 0 \\ 0 & \nabla_T\times\end{array}\right)+
  \frac{\partial}{\partial t}\frac{1}{c}\left(\begin{array}{cc}\tilde{\mathbf{t}}_2\times & \tilde{\mathbf{g}}_2\times \\ -\tilde{\mathbf{g}}_1\times & -\tilde{\mathbf{t}}_1\times\end{array}\right)\right]
  \left(\begin{array}{c}\mathbf{E}_z\\ \eta_0\mathbf{H}_z\end{array}\right)
  &=-\frac{\partial}{\partial t}\frac{1}{c}\left(\begin{array}{cc}0 & \tensor{\mu}_T \\ -\tensor{\varepsilon}_T & 0\end{array}\right)\left(\begin{array}{c}\mathbf{E}_T\\ \eta_0\mathbf{H}_T\end{array}\right),\\[8pt]\label{Maxwell eq 2}
  \left[\left(\begin{array}{cc}\nabla_T\times & 0 \\ 0 & \nabla_T\times\end{array}\right)-
  \frac{\partial}{\partial t}\frac{1}{c}\left(\begin{array}{cc}\tilde{\mathbf{t}}_1\times & \tilde{\mathbf{g}}_2^\dagger\times \\ -\tilde{\mathbf{g}}_1^\dagger\times & -\tilde{\mathbf{t}}_2\times\end{array}\right)\right]
  \left(\begin{array}{c}\mathbf{E}_T\\ \eta_0\mathbf{H}_T\end{array}\right)
  &=-\frac{\partial}{\partial t}\frac{1}{c}\left(\begin{array}{cc}\chi_z & {\mu}_z \\ -{\varepsilon}_z & -\chi_z\end{array}\right)\left(\begin{array}{c}\mathbf{E}_z\\ \eta_0\mathbf{H}_z\end{array}\right),
\end{align}
\end{subequations}
where $\nabla_T=(\partial_x,\partial_y)^\intercal$, $\eta0=\sqrt{\mu_0/\varepsilon_0}$, $\tilde{\mathbf{g}}_i=\mathbf{e}_z\times\mathbf{g}_i=(-g_{iy},g_{ix})^\intercal$, $\tilde{\mathbf{t}}_i=\mathbf{e}_z\times\mathbf{t}_i=(-t_{iy},t_{ix})^\intercal$ ($i=1,2$). For monochromatic waves $\mathbf{E},\,\mathbf{H}\propto \exp(-i\omega t)$,
after substituting $\partial/\partial t\rightarrow -i\omega$, Eq.~(\ref{Maxwell eq}) can be rewritten using Pauli matrices as
\begin{align}\label{maxeq1}
  \Big[ \hat{\sigma}_0\left(\nabla_T\times+ik_0\tilde{\mathbf{t}}_-\times\right)
  +\hat{\sigma}_1(ik_0\tilde{\mathbf{g}}_-\times)
  -i\hat{\sigma}_2(ik_0\tilde{\mathbf{g}}_+\times)
  -\hat{\sigma}_3(ik_0\tilde{\mathbf{t}}_+\times) \Big]\left(\begin{array}{c}\mathbf{E}_z\\ \eta_0\mathbf{H}_z\end{array}\right)
  &=\hat{\sigma}_2\big(-k_0\tensor{\varepsilon}_T\big)\left(\begin{array}{c}\mathbf{E}_T\\ \eta_0\mathbf{H}_T\end{array}\right),\\[8pt]\label{maxeq2}
  \Big[ \hat{\sigma}_0\left(\nabla_T\times+ik_0\tilde{\mathbf{t}}_-\times\right)
  -\hat{\sigma}_1(ik_0\tilde{\mathbf{g}}^\dagger_-\times)
  +i\hat{\sigma}_2(ik_0\tilde{\mathbf{g}}^\dagger_+\times)
  +\hat{\sigma}_3(ik_0\tilde{\mathbf{t}}_+\times) \Big]\left(\begin{array}{c}\mathbf{E}_T\\ \eta_0\mathbf{H}_T\end{array}\right)
  &=-k_0\big[i\,\hat{\sigma}_1\, n_-+\hat{\sigma}_2\, n_+-i\,\hat{\sigma}_3\,\chi_z\big]\left(\begin{array}{c}\mathbf{E}_z\\ \eta_0\mathbf{H}_z\end{array}\right),
\end{align}
with $\tilde{\mathbf{g}}_\pm=\frac{1}{2}\left(\tilde{\mathbf{g}}_1\pm\tilde{\mathbf{g}}_2\right)$,
$\tilde{\mathbf{t}}_\pm=\frac{1}{2}\left(\tilde{\mathbf{t}}_1\pm\tilde{\mathbf{t}}_2\right)$,
and $n_\pm=\frac{1}{2}\left(\varepsilon_z\pm\mu_z\right)$, where we have already imposed the in-plane duality condition $\tensor{\varepsilon}_T=\tensor{\mu}_T$.
Substitution of $\hat{\sigma}_2(-\tensor{\varepsilon}_T^{-1})\cdot$Eq.~(\ref{maxeq1}) into $k_0\hat{\sigma}_2\cdot$Eq.~(\ref{maxeq2}) yields
\begin{equation*}
  \begin{split}
  &\Big[\hat{\sigma}_0\big(\partial_i+ik_0\tilde{t}_{-\,i}\big)+ik_0\left(\hat{\sigma}_1\tilde{g}_{-\,i}^*+i\hat{\sigma}_2\tilde{g}_{+\,i}^*-\hat{\sigma}_3\tilde{t}_{+\,i}\right)\Big]
  {\epsilon^{zij}(\varepsilon_T^{-1})_{jk}\epsilon^{klz}}
  \Big[\hat{\sigma}_0\big(\partial_l+ik_0\tilde{t}_{-\,l}\big)+ik_0\left(\hat{\sigma}_1\tilde{g}_{-\,l}-i\hat{\sigma}_2\tilde{g}_{+\,l}-\hat{\sigma}_3\tilde{t}_{+\,l}\right)\Big]
  \begin{pmatrix} Ez \\ \eta_0 H_z\end{pmatrix}\\
    =&\, k_0^{\,2}\big(\hat{\sigma}_0n_+ +\hat{\sigma}_1\chi_z+\hat{\sigma}_3n_-\big)\begin{pmatrix} Ez \\ \eta_0 H_z\end{pmatrix},\qquad (\ i,j,k,l\in\{x,y\}\ ).
\end{split}
\end{equation*}
On account of $\epsilon^{zij}(\varepsilon_T^{-1})_{jk}\epsilon^{klz}=\epsilon^{ij}(\varepsilon_T^{-1})_{ij}\epsilon^{kl}=-{\varepsilon_T^{ik}}/{\det(\tensor{\varepsilon}_T)}$ ($\epsilon^{ijk},\,\epsilon^{ij}$ are 3D and 2D Levi-Civita symbols), we obtain
\begin{equation}
  \Bigg\{\frac{1}{2}\left[\hat{\mathbf{p}}-\hat{\sigma}_0\mathbf{A}-\big(\hat{\mathcal{A}}-i\hat{\mathcal{A}}_I\big)\right]\cdot\tensor{m}^{-1}\cdot\left[\hat{\mathbf{p}}-\hat{\sigma}_0\mathbf{A}-\big(\hat{\mathcal{A}}+i\hat{\mathcal{A}}_I\big)\right]-k_0^{\,2}\big(\hat{\sigma}_0n_+ +\hat{\sigma}_1\chi_z+\hat{\sigma}_3n_-\big)\Bigg\}
  \left(\begin{array}{c}{E}_z\\ \eta_0{H}_z\end{array}\right)=0,
\end{equation}
where $\tensor{m}^{-1}=\frac{2}{\det(\tensor{\varepsilon}_T)}\tensor{\varepsilon}_T$
is the inverse of an effective anisotropic mass, 
$\mathbf{A}=-k_0\,\tilde{\mathbf{t}}_-=k_0\,\mathbf{t}_-\times\mathbf{e}_z$ denotes an emergent Abelian vector potential, and
\begin{equation}
  \hat{\mathcal{A}}_{(c)}=\hat{\mathcal{A}}+i\hat{\mathcal{A}}_I=\underbrace{k_0\left[\hat{\sigma}_1\mathrm{Re}(\mathbf{g}_-)\times\mathbf{e}_z+\hat{\sigma}_2\mathrm{Im}(\mathbf{g}_-)\times\mathbf{e}_z-\hat{\sigma}_3\mathbf{t}_+\times\mathbf{e}_z\right]
  }_{\displaystyle\hat{\mathcal{A}}=k_0\left(\begin{array}{cc}
  -\mathbf{t}_+\times\mathbf{e}_z & \mathbf{g}_-^\dagger\times\mathbf{e}_z\\
  \mathbf{g}_-\times\mathbf{e}_z & \mathbf{t}_+\times\mathbf{e}_z
\end{array}\right)}
  +i\underbrace{k_0\left(\hat{\sigma}_1\mathrm{Im}(\mathbf{g}_+)\times\mathbf{e}_z
  -\hat{\sigma}_2\mathrm{Re}(\mathbf{g}_+)\times\mathbf{e}_z\right)}_{\displaystyle\hat{\mathcal{A}}_I=k_0\left(\begin{array}{cc} 0 & i\mathbf{g}_+^\dagger\times\mathbf{e}_z\\
  -i\mathbf{g}_+\times\mathbf{e}_z & 0
  \end{array}\right)}
\end{equation}
can be regarded as a complex-valued non-Abelian vector potential with $\mathbf{g}_\pm=\frac{1}{2}(\mathbf{g}_1\pm\mathbf{g}_2^*)$ and $\mathbf{t}_\pm=\frac{1}{2}(\mathbf{t}_1\pm\mathbf{t}_2)$. 

The imaginary part, $\hat{\mathcal{A}}_I$, of non-Abelian potential can be further taken out from the ``kinetic energy part'':
\begin{equation}\label{pre-wave eq}
  \begin{split}
  \Bigg\{&\frac{1}{2}\left(\hat{\mathbf{p}}-\hat{\mathscr{A}}\right)\cdot\tensor{m}^{-1}\cdot\left(\hat{\mathbf{p}}-\hat{\mathscr{A}}\right)
    -k_0^{\,2}\big(\hat{\sigma}_0n_+ +\hat{\sigma}_1\chi_z+\hat{\sigma}_3n_-\big)\\[-5pt]
    & +\frac{1}{2}\left[\hat{\mathcal{A}}_I\cdot\tensor{m}^{-1}\cdot\left(\hat{\sigma}_0\nabla_T-i\hat{\mathscr{A}}\right)-\left(\hat{\sigma}_0\nabla_T-i\hat{\mathscr{A}}\right)\cdot\tensor{m}^{-1}\cdot\hat{\mathcal{A}}_I
  +\hat{\mathcal{A}}_I\cdot\tensor{m}^{-1}\cdot\hat{\mathcal{A}}_I\right]
\Bigg\}
  \left(\begin{array}{c}{E}_z\\ \eta_0{H}_z\end{array}\right)=0,
\end{split}
\end{equation}
where $\hat{\mathscr{A}}=\hat{\sigma}_0\mathsf{A}+\hat{\mathcal{A}}$ is the complete  $\mathrm{U}(2)$ real vector potential, and the second line associated with $\hat{\mathcal{A}}_I$ can  be decomposed as

\begin{equation}\label{AI terms 1}
\begin{aligned}
  &\ \hat{\mathcal{A}}_I\cdot\tensor{m}^{-1}\cdot\left(\hat{\sigma}_0\nabla_T-i\hat{\mathscr{A}}\right)-\left(\hat{\sigma}_0\nabla_T-i\hat{\mathscr{A}}\right)\cdot\tensor{m}^{-1}\cdot\hat{\mathcal{A}}_I+\hat{\mathcal{A}}_I\cdot\tensor{m}^{-1}\cdot\hat{\mathcal{A}}_I\\
  =&\,-\nabla_T\cdot\big(\tensor{m}^{-1}\cdot\hat{\mathcal{A}}_I\big)+i\big(-\hat{\mathcal{A}}_I\cdot\tensor{m}^{-1}\cdot\hat{\mathcal{A}}+\hat{\mathcal{A}}\cdot\tensor{m}^{-1}\cdot\hat{\mathcal{A}}_I\big)+\hat{\mathcal{A}}_I\cdot\tensor{m}^{-1}\cdot\hat{\mathcal{A}}_I\\
  =&\,-\nabla_T\cdot\big(\tensor{m}^{-1}\cdot\hat{\mathcal{A}}_I\big)+i\,\mathrm{Tr}\Big(\tensor{m}^{-1}\cdot\big[\hat{\mathcal{A}},\hat{\mathcal{A}}_I\big]\Big)+\hat{\mathcal{A}}_I\cdot\tensor{m}^{-1}\cdot\hat{\mathcal{A}}_I,
\end{aligned}
\end{equation}
each term of which can be further expressed with the material parameters: 
\begin{subequations}\label{Ai terms 2}
\begin{align}
\begin{aligned}
  -\nabla_T\cdot\big(\tensor{m}^{-1}\cdot\hat{\mathcal{A}}_I\big)\\[90pt]
  \end{aligned}
  & \begin{aligned}
  =&-k_0\nabla\cdot\Big\{\tensor{m}^{-1}\cdot \big[\big(\mathrm{Im}(\mathbf{g}_+)\hat{\sigma}_1-\mathrm{Re}(\mathbf{g}_+)\hat{\sigma}_2\big)\times\mathbf{e}_z\big]\Big\}\\
  =&\, -k_0\partial_i\Big\{{(\tensor{m}^{-1})^{ij}\epsilon_{jk}}\big(\mathrm{Im}(\mathbf{g}_+)\hat{\sigma}_1-\mathrm{Re}(\mathbf{g}_+)\hat{\sigma}_2\big)^k\Big\}\\
  =&\, -k_0\partial_i\Big\{\big(-2\epsilon^{il}(\tensor{\varepsilon}_T^{-1})_{lm}\epsilon^{mj}\epsilon_{jk}\big)\big(\mathrm{Im}(\mathbf{g}_+)\hat{\sigma}_1-\mathrm{Re}(\mathbf{g}_+)\hat{\sigma}_2\big)^k\Big\}\\
  =&-2k_0\epsilon^{il}\partial_i\Big\{(\tensor{\varepsilon}_T^{-1})_{lk}\big(\mathrm{Im}(\mathbf{g}_+)\hat{\sigma}_1-\mathrm{Re}(\mathbf{g}_+)\hat{\sigma}_2\big)^k\Big\}\\
  =&\,-2k_0\,\mathbf{e}_z\cdot\Big\{\nabla\times\big[\tensor{\varepsilon}_T^{-1}\cdot\big(\mathrm{Im}(\mathbf{g}_+)\hat{\sigma}_1-\mathrm{Re}(\mathbf{g}_+)\hat{\sigma}_2\big)\big]\Big\},
  \end{aligned}\\[8pt]
  \begin{aligned}
  i\,\mathrm{Tr}\Big(\tensor{m}^{-1}\cdot\big[\hat{\mathcal{A}},\hat{\mathcal{A}}_I\big]\Big)\\[40pt]
  \end{aligned}
  & \begin{aligned}  
  =&\,i\ (\tensor{m}^{-1})^{ij}\mathcal{A}^a_i\mathcal{A}^b_{I\,j}\big[\hat{\sigma}_a,\hat{\sigma}_b\big]
  =i\ (\tensor{m}^{-1})^{ij}\mathcal{A}^a_i\mathcal{A}^b_{I\,j} (2i\epsilon_{abc}\hat{\sigma}^c)\\
  =&\,4\big[(\mathcal{A}^a_i\epsilon^{il})(\tensor{\varepsilon}^{-1}_I)_{lm}(\epsilon^{mj}\mathcal{A}^b_{I\,j})\epsilon_{abc}\big]\hat{\sigma}^c\\
  =&\, 4k_0^{\,2}\,\Big[\big(\mathbf{t}_+\cdot\tensor{\varepsilon}_T^{-1}\cdot\mathrm{Re}(\mathbf{g}_+)\big)\hat{\sigma}_1
  +\big(\mathbf{t}_+\cdot\tensor{\varepsilon}_T^{-1}\cdot\mathrm{Im}(\mathbf{g}_+)\big)\hat{\sigma}_2
  +\mathrm{Re}\big(\mathbf{g}_-\cdot\tensor{\varepsilon}_T^{-1}\cdot\mathbf{g}_+^\dagger
  \big)\hat{\sigma}_3\Big]\\
  \end{aligned}\\[8pt]
  \begin{aligned}
  \hat{\mathcal{A}}_I\cdot\tensor{m}^{-1}\cdot\hat{\mathcal{A}}_I\\[50pt]
  \end{aligned}
  & \begin{aligned}
 =&\, k_0^{\,2}\,\big[\big(\mathrm{Im}(\mathbf{g}_+)\hat{\sigma}_1-\mathrm{Re}(\mathbf{g}_+)\hat{\sigma}_2\big)\times\mathbf{e}_z\big]\cdot\tensor{m}^{-1}\cdot\big[\big(\mathrm{Im}(\mathbf{g}_+)\hat{\sigma}_1-\mathrm{Re}(\mathbf{g}_+)\hat{\sigma}_2\big)\times\mathbf{e}_z\big]\\
  =&\, 2k_0^{\,2}\,\big(\mathrm{Im}(\mathbf{g}_+)\hat{\sigma}_1-\mathrm{Re}(\mathbf{g}_+)\hat{\sigma}_2\big)\cdot\tensor{\varepsilon}_T^{-1}\cdot\big(\mathrm{Im}(\mathbf{g}_+)\hat{\sigma}_1-\mathrm{Re}(\mathbf{g}_+)\hat{\sigma}_2\big)\\
    =&\, 2k_0^{\,2}\,\big(\mathrm{Im}(\mathbf{g}_+)\cdot\tensor{\varepsilon}_T^{-1}\cdot\mathrm{Im}(\mathbf{g}_+)  +  \mathrm{Re}(\mathbf{g}_+)\cdot\tensor{\varepsilon}_T^{-1}\cdot\mathrm{Re}(\mathbf{g}_+)\big)\\
  =&\,2k_0^{\,2}\,\big(\mathbf{g}_+\tensor{\varepsilon}_T^{-1}\cdot\mathbf{g}_+^\dagger\big)\hat{\sigma}_0
\end{aligned}
\end{align}
\end{subequations}
\newpage
After substituting Eqs.~(\ref{AI terms 1},\ref{Ai terms 2}) into Eq.~(\ref{pre-wave eq}), we arrive at the final form of the in-plane wave equation
\footnote{For a fixed frequency, Eq.~(\ref{wave equation2}) is analogous to the stationary Schrodinger equation: $(\hat{H}-E)|\psi\rangle = 0$. As the zero point of the $\mathrm{U}(1)$ scalar potential $\mathsf{V}_0$ is arbitrary, it can always be selected such that $E = 0$. The free choice of E will not affect the stationary dynamics, but it is meaningless to compare the ``eigen-energy'' for the effective Hamiltonian at different frequencies.}
\begin{equation}\label{wave equation2}\boxed{
  \hat{H}|\psi\rangle=\left[\frac{1}{2}\big(\hat{\mathbf{p}}-\hat{\mathscr{A}}\big)\cdot\tensor{m}^{-1}\cdot\big(\hat{\mathbf{p}}-\hat{\mathscr{A}}\big)-\hat{\mathscr{A}}_0\right]|\psi\rangle=\left[\frac{1}{2}\big(\hat{\mathbf{p}}-{\color{Green}\mathsf{A}}\hat{\sigma}_0-{\color{RubineRed}\hat{\mathcal{A}}}\big)\cdot\tensor{m}^{-1}\cdot\big(\hat{\mathbf{p}}-{\color{Green}\mathsf{A}}\hat{\sigma}_0-{\color{RubineRed}\hat{\mathcal{A}}}\big)-{\color{RubineRed}\hat{\mathcal{A}}_0}+{\color{Green}\mathsf{V}_0}\hat{\sigma}_0\right]
  |\psi\rangle=0.}
\end{equation}
Here, the effective Hamiltonian $\hat{H}$ is precisely like that of a non-relativistic spin-1/2 particle traveling in a $U(2)= \mathrm{SU}(2)\rtimes \mathrm{U}(1)$ background gauge potential $\{\hat{\mathscr{A}}_\mu\}$, where ``$\rtimes$'' denotes the semidirect product of two groups. The effective $\mathrm{U}(2)$ group potential always can be decomposed into two parts $\{\hat{\mathscr{A}}_\mu\}=\{{\color{Green}\mathsf{A}_\mu}\hat{\sigma}_0+{\color{RubineRed}\hat{\mathcal{A}}_\mu}\}$,  where $\{\color{Green}\mathsf{A}_\mu\}=\{-\mathsf{V}_0,\mathsf{A}\}$ ($\mu=0,1,2$) denotes an effective Abelian $\mathrm{U}(1)$ Maxwell-type gauge potential, while $\color{RubineRed}\{\hat{\mathcal{A}}_\mu\}=\{\hat{\mathcal{A}}_0,\hat{\mathcal{A}}\}$ denotes an effective non-Abelian $\mathrm{SU}(2)$ Yang-Mills gauge potential $(\hat{\mathcal{A}}_\mu=\mathcal{A}^a_\mu\hat{\sigma}_a$ are $\mathbf{su}(2)$-Lie-Algebra-valued, namely they are $2\times2$ traceless Hermitian matrices). Their expressions are listed in Supplementary Table~\ref{gauge potentials}.
And the results in the main text correspond to the reduced case with $\mathbf{t}_\pm=0$ and $\chi_z=0$.

As shown in Supplementary Table~\ref{gauge potentials}, different $\hat{\sigma}$-components of both vector and scalar potentials have distinct physical origins. For the $\mathrm{SU}(2)$ vector potential, $\mathcal{A}^1$ is caused by the the deviation of the principal axis of $\mathrm{Re}(\tensor{\varepsilon}),\,\mathrm{Re}(\tensor{\mu})$ from $z$-direction, which can be realized in reciprocal anisotropic materials~\cite{JensenLi2015PRL,liu2015polarization}. $\mathcal{A}^2$ component stems from the the imaginary part of the off-block-diagonal term $\mathrm{Im}(\mathbf{g}_-)$ in $\tensor{\varepsilon},\,\tensor{\mu}$, which can be excited by in-plane magnetic field in magneto-optic media~\cite{fang2013effective}. And $\mathcal{A}^3$ originates from the symmetric off-diagonal part of ME coupling $\mathbf{t}_+$, which, together with $\chi_z$, actually denotes an anisotropic Tellegen media with the ME tensor $\tilde{\chi}_{em}=\tilde{\chi}_{me}^\intercal=\mathrm{diag}\left(\frac{1}{2}(\chi_z+\sqrt{4|\mathbf{t}_+|^2+\chi_z^2}),\frac{1}{2}(\chi_z-\sqrt{4|\mathbf{t}_+|^2+\chi_z^2}),0\right)$ in the principal frame~\cite{jacobs2015photonic}. 

\begin{table}[t]
    \caption{\label{gauge potentials}The complete set of the components of the synthetic $U(2)=\mathrm{SU}(2)\rtimes \mathrm{U}(1)$ gauge potential corresponding to the spinor state $|\psi\rangle=(E_z,\,\eta_0H_z)^\intercal$, where the ``inner product'' $\llangle \cdot,\cdot\rrangle$ for two 2D vectors $\mathbf{a},\, \mathbf{b}$ is defined as
  $\llangle\mathbf{a},\mathbf{b}\rrangle=\mathbf{a}\cdot\tensor{\varepsilon}_T^{-1}\cdot\mathbf{b}^\dagger$. }\vspace{5pt}
  \begin{tabular}{Sc|Sc|Sc|Sc}\hline\hline
      &  &      expression  & physical origin     \\ \hline
  \multirow{12}*{$\color{RubineRed}\mathrm{SU}(2)$} & \multirow{5}*{\makecell[c]{vector\\[-0.75 ex] potential\\$\color{RubineRed}\hat{\mathcal{A}}=\mathcal{A}^a\hat{\sigma}_a$}} &
  ${\color{RubineRed}\mathcal{A}^1}=k_0\mathrm{Re}\left(\mathbf{g}_-\right)\times\mathbf{e}_z$ & \makecell[c]{reciprocal anisotropy of $\tensor{\varepsilon},\,\tensor{\mu}$~\cite{JensenLi2015PRL,liu2015polarization}\\ (deviation of principal axis from $z$-direction)}\\\cline{3-4}
   & & ${\color{RubineRed}\mathcal{A}^2}=k_0\mathrm{Im}\left(\mathbf{g}_-\right)\times\mathbf{e}_z$ & \makecell[c]{gyroelectric or gyromagnetic effects\\induced by in-plane magnetic field~\cite{fang2013effective}}\\\cline{3-4}
   & & ${\color{RubineRed}\mathcal{A}^3}=-k_0\mathbf{t}_+\times\mathbf{e}_z$ & \makecell[c]{real \& symmetric part of ME tensors~\cite{jacobs2015photonic}\\ (anisotropic Tellegen media)}   \\ \cline{2-4}
  &\multirow{4}*{\makecell[c]{scalar\\[-0.75 ex] potential\\$\color{RubineRed}\hat{\mathcal{A}}_0=\mathcal{A}_0^a\hat{\sigma}_a$}}   & \rule[-10pt]{0pt}{25pt} $\displaystyle
  {\color{RubineRed}\mathcal{A}_0^1}=k_0\mathbf{e}_{z}\cdot\left[\nabla\times\left(\tensor{\varepsilon}_T^{-1}\cdot\mathrm{Im}\left(\mathbf{g}_+\right)\right)\right]+{\color{black}k_0^{\,2}\left[\chi_z-2\big\llangle\mathbf{t}_+,\mathrm{Re}(\mathbf{g}_+)\big\rrangle\right]}$ &\multirow{3}*{\makecell[l]{$\bullet$ inhomogeneity of $\mathbf{g}_+$ and $\tensor{\varepsilon}_T$\\[0pt] $\bullet$ coupling between $\mathbf{g}_+$ and $\mathbf{t}_+$\\[0pt] $\bullet$ $\chi_z$ component of ME tensor}} \\\cline{3-3}
  & &\rule[-10pt]{0pt}{25pt} $\displaystyle{\color{RubineRed}\mathcal{A}_0^2}=-k_0\mathbf{e}_{z}\cdot\left[\nabla\times\left(\tensor{\varepsilon}_T^{-1}\cdot\mathrm{Re}\left(\mathbf{g}_+\right)\right)\right]+2k_0^{\,2}\llangle\mathbf{t}_+,\mathrm{Im}(\mathbf{g}_+)\rrangle$ &
  \\\cline{3-4}
  & &  $\displaystyle{\color{RubineRed}\mathcal{A}_0^3}=k_0^{\,2}\left[\frac{\varepsilon_z-\mu_z}{2} -2\mathrm{Re}\llangle\mathbf{g}_-,\mathbf{g}_+\rrangle\right]$
          &  \makecell[l]{$\bullet$ difference between $\varepsilon_z$ and $\mu_z$\\[0pt] $\bullet$ coupling between $\mathbf{g}_+$ and $\mathbf{g}_-$}  \\\hline
  \multirow{5}*{$\color{Green}\mathrm{U}(1)$} & \makecell[c]{vector\\[-0.75 ex] potential}& ${\color{Green}\mathsf{A}}=k_0\,\mathbf{t}_-\times\mathbf{e}_z$ & \makecell[c]{real \& antisymmetric part of ME tensors\\(moving media or static toroidal moment~\cite{cook1995fizeau,leonhardt1999optics,sawada2005optical})}\\\cline{2-4}
  & \makecell[c]{scalar\\[-0.75 ex] potential} &$ \rule[-10pt]{0pt}{25pt} \displaystyle\begin{aligned}{\color{Green}\mathsf{V}_0}=&k_0^{\,2}\left(\llangle\mathbf{g}_+,\mathbf{g}_+\rrangle-\frac{\varepsilon_z+\mu_z}{2}\right)
  \end{aligned}$ & $\mathbf{g}_+$, $\varepsilon_z$, $\mu_z$
   \\\hline\hline
  \end{tabular}
  \vspace{10pt}
\end{table}

The $\mathrm{U}(2)$ gauge potential can induce an emergent non-Abelian $\mathrm{U}(2)$ gauge field acting on the spinor wave function, in terms of the covariant derivative operator $\hat{\mathscr{D}}_i=\partial_\mu\hat{\sigma}_0-i\hat{\mathscr{A}}_\mu=\partial_\mu\hat{\sigma}_0-i{\color{RubineRed}\hat{\mathcal{A}}_\mu}-{\color{Green}\mathsf{A}_\mu}\hat{\sigma}_0$:
\begin{equation}\label{U(2) nonabelian field}
  \hat{\mathscr{F}}_{\mu\nu}=i[\hat{\mathscr{D}}_\mu,\hat{\mathscr{D}}_\nu]=\partial_\mu\hat{\mathscr{A}}_\nu-\partial_\nu\hat{\mathscr{A}}_\mu-i[\hat{\mathscr{A}}_\mu,\hat{\mathscr{A}}_\nu]
=\underbrace{\left(\partial_\mu{\color{Green}{\mathsf{A}}_\nu}-\partial_\nu{\color{Green}\mathsf{A}_\mu}\right)}_{\displaystyle \mathrm{U}(1)\text{ field: }\mathsf{F}_{\mu\nu}}\hat{\sigma}_0\ \ +\ \underbrace{\partial_\mu{\color{RubineRed}\hat{\mathcal{A}}_\nu}-\partial_\nu{\color{RubineRed}\hat{\mathcal{A}}_\mu}-i[{\color{RubineRed}\hat{\mathcal{A}}_\mu},{\color{RubineRed}\hat{\mathcal{A}}_\nu}]}_{\displaystyle \mathrm{SU}(2)\text{ field: }\hat{\mathcal{F}}_{\mu\nu}=i[\hat{\mathcal{D}}_\mu,\hat{\mathcal{D}}_\nu]},
\end{equation}
where $\hat{\mathcal{D}}_\mu=\partial_\mu\hat{\sigma}_0-i{\color{RubineRed}\hat{\mathcal{A}}_\mu}$. As shown, the $\mathrm{U}(2)$ gauge field also can be decomposed into a $\mathrm{U}(1)$ part and a $\mathrm{SU}(2)$ part.
Since the effective $\mathrm{SU}(2)$ gauge field is \(\mathbf{su}(2)\)-Lie-algebra-valued, and also can be expanded by Pauli matrices, $\hat{\mathcal{F}}_{\mu\nu}=\mathcal{F}^a_{\mu\nu}\hat{\sigma}_a$, the $\mathrm{SU}(2)$ part of Eq.~(\ref{U(2) nonabelian field}) can be rewritten in a component form:
\begin{equation}
\mathcal{F}_{\mu\nu}^a=\partial_\mu{\mathcal{A}}_\nu^a-\partial_\nu{\mathcal{A}}_\mu^a+2{\epsilon^a}_{bc}\mathcal{A}_\mu^b \mathcal{A}_\nu^c,
\end{equation}
where \(2\epsilon_{abc}\) is the structure constant of
\(\mathbf{su}(2)\) Lie algebra,
\([\hat{\sigma}_a,\hat{\sigma}_b]=2i\epsilon_{abc}\hat{\sigma}^c\).

Analogous to the real EM field, the effective $\mathrm{U}(1)$ gauge field $\mathsf{F}_{\mu\nu}$ can be alternatively treated as a pair of effective Abelian magnetic and electric fields $\mathsf{B},\,\mathsf{E}$ (not to be confused with the real EM fields $\mathbf{B},\,\mathbf{E}$):
\begin{gather}
  \mathsf{B}=\nabla\times{\mathsf{A}}=-\mathbf{e}_z\,(\nabla\cdot\mathbf{t}_-),\quad
  {\mathsf{E}}=-\nabla{\mathsf{V}}_0,
\end{gather}
Similarly, if we express the tensor of the $\mathrm{SU}(2)$ non-Abelian field as
\begin{equation}
    \big(\hat{\mathcal{F}}_{\mu\nu}\big)=
    \left(\begin{array}{ccc}
      0 & -\hat{\mathcal{E}}_x & -\hat{\mathcal{E}}_y \\
      \hat{\mathcal{E}}_x & 0 & \hat{\mathcal{B}}_z \\
      \hat{\mathcal{E}}_y & -\hat{\mathcal{B}}_z & 0
    \end{array}\right),
\end{equation}
different components of the $\mathrm{SU}(2)$ gauge field also can be classified into an effective non-Abelian magnetic field $\hat{\mathcal{B}}$ and an effective non-Abelian electric field $\hat{\mathcal{E}}$ separately:
\begin{equation}
  \begin{aligned}
  \hat{\mathcal{B}}=\frac{1}{2}\epsilon^{ij}\hat{\mathcal{F}}_{ij}{\mathbf{e}}_z=\nabla\times\hat{\mathcal{A}}-i\hat{\mathcal{A}}\times\hat{\mathcal{A}},\qquad\quad
  \hat{\mathcal{E}}=-\hat{\mathcal{F}}_{0i}{\mathbf{e}}_i=\nabla\hat{\mathcal{A}}_0+i[\hat{\mathcal{A}}_0,\hat{\mathcal{A}}].
  \end{aligned}
\end{equation}
As the system is $z$-invariant, the effective magnetic fields $\mathsf{B}$, $\hat{\mathcal{B}}$ are along $z$ direction, while the effective electric fields $\mathsf{E}$, $\hat{\mathcal{E}}$ always lie in $xy$-plane.
We will show that these
effective non-Abelian $\mathrm{SU}(2)$ magnetic and electric fields can affect the centroid motion of the spinor wave function $|\psi\rangle$ in a similar way as real EM fields acting on charged particles.
According to Supplementary Table~\ref{gauge potentials}, the $\mathrm{SU}(2)$ vector potential $\hat{\mathcal{A}}$ only depends on $\mathbf{g}_-$ and $\mathbf{t}_+$. Hence, the $\mathrm{SU}(2)$ magnetic field has a relatively simple expression depending on $\mathbf{g}_-$ and $\mathbf{t}_+$:
\begin{equation}
\hat{\mathcal{B}}=k_0^2\Big[2\big(\mathbf{t}_{+}\times\mathrm{Im}(\mathbf{g}_-)\big)\hat{\sigma}_1
-2\big(\mathbf{t}_{+}\times\mathrm{Re}(\mathbf{g}_-)\big)\hat{\sigma}_2+i\left(\mathbf{g}_-\times{\mathbf{g}_{-}^*}\right)\hat{\sigma}_3\Big]-k_0\nabla\cdot\left[\mathrm{Re}\left(\mathbf{g}_-\right)\hat{\sigma}_1+\mathrm{Im}\left(\mathbf{g}_-\right)\hat{\sigma}_2-\mathbf{t}_+\hat{\sigma}_3\right]\mathbf{e}_{z}.
\end{equation}
By contrast, the $\mathrm{SU}(2)$ electric field $\hat{\mathcal{E}}$, is determined by all the following components: $\mathbf{g}_\pm$, $\mathbf{t}_+$, $\varepsilon_z,\,\mu_z$, and $\tensor{\varepsilon}_T$.

\newpage
\section{Gauge transformation between synthetic gauge field systems}
 Since the choice of the gauge of the wave function can be arbitrary, we can consider a gauge transformation of the wave function $|\psi'\rangle=\hat{U}(\mathbf{r})|\psi\rangle$, where $\hat{U}(\mathbf{r})=\hat{U}(x,y)$ is now generalized to be a $xy$-dependent $\mathrm{U}(2)$ matrix. The wave equation is covariant under the gauge transformation: $\hat{U}\hat{H}|\psi\rangle=(\hat{U}\hat{H}\hat{U}^\dagger)|\psi'\rangle=0$ with the transformed Hamiltonian
\begin{equation}
  \hat{H}'=\hat{U}\hat{H}\hat{U}^\dagger=-\frac{1}{2}\left(\hat{U}\hat{\mathscr{D}}_i\hat{U}^\dagger\right)\big(m^{-1}\big)^{ij}\left(\hat{U}\hat{\mathscr{D}}_j\hat{U}^\dagger\right)
  -\left(\hat{U}\hat{\mathscr{A}}_0\hat{U}^\dagger\right)=\frac{1}{2}\hat{\mathscr{D}}'_i\big(m^{-1}\big)^{ij}\hat{\mathscr{D}}'_j-\hat{\mathscr{A}}'_0,
\end{equation}
and the transformed covariant derivative satisfies
\begin{equation}
  \hat{\mathscr{D}}'_i=\hat{U}\hat{\mathscr{D}}_i\hat{U}^\dagger=\partial_i\hat{\sigma}_0-i\left(\hat{U}\hat{\mathscr{A}}_i\hat{U}^\dagger+i\hat{U}\partial_i\hat{U}^\dagger\right)=\partial_i\hat{\sigma}_0-i\hat{\mathscr{A}}'_i.
\end{equation}
As a result, we obtain the rule of gauge transformation for the gauge potential:
\begin{equation}\label{U2 gauge transfrom}
  \begin{aligned}  
  \hat{\mathscr{A}}'_i=&\hat{U}\hat{\mathscr{A}}_i\hat{U}^\dagger+i\hat{U}\partial_i\hat{U}^\dagger\\
  \hat{\mathscr{A}}'_0=&\hat{U}\hat{\mathscr{A}}_0\hat{U}^\dagger=\hat{U}\hat{\mathscr{A}}_0\hat{U}^\dagger+\underbrace{i\hat{U}\partial_0\hat{U}^\dagger}_{=0} 
  \end{aligned}  
   \raisebox{10pt}{ $\Bigg\}\quad 
  \hat{\mathscr{A}}'_\mu=\hat{U}\hat{\mathscr{A}}_\mu\hat{U}^\dagger+i\hat{U}\partial_\mu\hat{U}^\dagger.$} \raisetag{-35pt} 
\end{equation}
In addition, since the $2\times2$ unitary matrix can be expressed generically as $\hat{U}=e^{i\varphi}\hat{\mathcal{U}}$ with a $\mathrm{U}(1)$ part $e^{i\varphi}$ and a $\mathrm{SU}(2)$ part $\hat{\mathcal{U}}$ satisfying $\det(\hat{\mathcal{U}})=1$, we have $i\hat{U}\partial_\mu\hat{U}^\dagger=-\partial_\mu\varphi\hat{\sigma}_0+i\hat{\mathcal{U}}\partial_\mu\hat{\mathcal{U}}^\dagger$. And $\mathrm{Tr}\left(i\hat{\mathcal{U}}\partial_\mu\hat{\mathcal{U}}^\dagger\right)=i\det(\hat{\mathcal{U}})\partial_\mu \det(\hat{\mathcal{U}}^\dagger)\equiv0$, hence $i\hat{\mathcal{U}}\partial_\mu\hat{\mathcal{U}}^\dagger\in\mathbf{su}(2)$, the gauge transformation of $\mathrm{U}(1)$ and $\mathrm{SU}(2)$ parts of the gauge potential obeys
\begin{equation}\label{U2 gauge transfrom 2}
  {\color{Green}\mathsf{A}'_\mu=\mathsf{A}_\mu-\partial_\mu\varphi},\qquad\quad
  {\color{RubineRed}\hat{\mathcal{A}}'_\mu=\hat{\mathcal{U}}\hat{\mathcal{A}}_\mu\hat{\mathcal{U}}^\dagger+i\hat{\mathcal{U}}\partial_\mu\hat{\mathcal{U}}^\dagger}.
\end{equation}
Meanwhile, the gauge transformation of gauge fields reads
\begin{equation}
  \hat{\mathscr{F}}'_{\mu\nu}=i[\hat{\mathscr{D}}'_\mu,\hat{\mathscr{D}}'_\nu]=\hat{U}\hat{\mathscr{F}}_{\mu\nu}\hat{U}^\dagger\quad
  \left\{
  \begin{aligned}
    {\color{Green}\mathrm{U}(1):}& & \hat{\mathsf{F}}'_{\mu\nu}=&\,\hat{\mathsf{F}}_{\mu\nu} &\Leftrightarrow & &
     \hat{\mathsf{B}}'=&\,\hat{\mathsf{B}},& & \hat{\mathsf{E}}'=\,\hat{\mathsf{E}} \\
    {\color{RubineRed}\mathrm{SU}(2):}& &  \hat{\mathcal{F}}'_{\mu\nu}=&\,\hat{\mathcal{U}}\hat{\mathcal{F}}_{\mu\nu}\hat{\mathcal{U}}^\dagger &\Leftrightarrow & &
    \hat{\mathcal{B}}'=&\,\hat{\mathcal{U}}\hat{\mathcal{B}}\hat{\mathcal{U}}^\dagger,& & \hat{\mathcal{E}}'=\,\hat{\mathcal{U}}\hat{\mathcal{E}}\hat{\mathcal{U}}^\dagger 
  \end{aligned}
  \right.
\end{equation}
Unlike the $\mathrm{U}(1)$ field which is independent of the gauge choice, the $\mathrm{SU}(2)$ part of the gauge field is gauge-dependent.

It should be noted that the meaning of gauge covariance in the synthetic gauge system is subtly different from that in real gauge systems. For real gauge fields, the choice of gauge refers to the process of regulating the excess unphysical degrees of freedom, hence the gauge transformation leaves all observables unaffected, while all of the gauge-dependent quantities cannot be directly observed. However, in the synthetic gauge system, different gauges correspond to different materials, and the gauge dependence of the effective wave function $|\psi\rangle=(E_z,\eta_0 H_z)^\intercal$ can be directly detected, since the EM fields, including their phases, are measurable. According to the discussions in the Methods of the main text, the gauge transformation acting on the wave function
$|\psi'\rangle=\hat{U}(\mathbf{r})|\psi\rangle$ gives rise to the transformation of the total EM fields as
\begin{equation}\label{EM transform}
  \Psi'=  \tilde{U}(\mathbf{r})\Psi=
  \begin{pmatrix}\hat{\sigma}_2\hat{U}(\mathbf{r})\hat{\sigma}_2 & 0 \\ 0 & \hat{U}(\mathbf{r}) \end{pmatrix}\Psi,
\end{equation}
with $\Psi=(\mathbf{E}_T, \eta_0\mathbf{H}_T,E_z, \eta_0 H_z)^\intercal$.
Considering the 2D Maxwell's equations expressed in terms of $\Psi$:
\begin{equation}
\underbrace{\left(\begin{array}{c|c}
0 & i\hat{\sigma}_2 (i\nabla_T\times)\mathbf{e}_z\\ \hline
i\hat{\sigma}_2\, \mathbf{e}_z\cdot(i\nabla_T\times) & 0
\end{array}\right)}_{\displaystyle\mathcal{M}}
\Psi
=k_0
\underbrace{\left(\begin{array}{cc|cc}
\tensor{\varepsilon}_T & 0 & \mathbf{g}_1 & \mathbf{t}_1/c \\
0 & \tensor{\varepsilon}_T & \mathbf{t}_2/c & \mathbf{g}_2 \\\hline
\mathbf{g}_1^\dagger & \mathbf{t}^\intercal_2/c & \varepsilon_z & \chi_z \\
\mathbf{t}_1^\intercal/c &\mathbf{g}_2^\dagger & \chi_z & \mu_z
\end{array}\right)}_{\displaystyle{\mathcal{N}}}
\Psi,
\end{equation}
the gauge transformation of  Maxwell's equations shows that
\begin{equation}
  (\tilde{U}\mathcal{M}\tilde{U}^\dagger)\Psi'=(\mathcal{M}+\Delta\mathcal{M})\Psi'=k_0(\tilde{U}\mathcal{N}\tilde{U}^\dagger)\Psi',
\end{equation}
where
\begin{equation}
  \Delta\mathcal{M}
=\left(\begin{array}{c|c}
0 & i\hat{\sigma}_2 (i\hat{U}\nabla_T \hat{U}^\dagger )\times\mathbf{e}_z \\ \hline
-i (i\hat{U}\nabla_T \hat{U}^\dagger )\times\mathbf{e}_z \hat{\sigma}_2 & 0
\end{array}\right).
\end{equation}
Then we find that $\Psi'$ satisfies the Maxwell's equations $\mathcal{M}\Psi'=k_0\,\mathcal{N}'\Psi'$ in the transformed material: 
\begin{equation}\label{gauge transform of media}
\mathcal{N}'=\tilde{U}\mathcal{N}\tilde{U}^\dagger-\Delta\mathcal{M}/k_0\quad
\left\{\begin{aligned}
  \tensor{\varepsilon}'_T\quad & =\quad \tensor{\varepsilon}_T\\[5pt]
  \begin{pmatrix}
    \varepsilon'_z & \chi'_z\\
    \chi'_z & \mu'_z
  \end{pmatrix} & =\hat{U}\begin{pmatrix}
    \varepsilon_z & \chi_z\\
    \chi_z & \mu_z
  \end{pmatrix}\hat{U}^\dagger\\[5pt]
  \begin{pmatrix}
     \mathbf{t}'_2/c & \mathbf{g}'_2\\
     -\mathbf{g}'_1 & -\mathbf{t}'_1/c
  \end{pmatrix}& =  \hat{U}
  \begin{pmatrix}
     \mathbf{t}_2/c & \mathbf{g}_2\\
     -\mathbf{g}_1 & -\mathbf{t}_1/c
  \end{pmatrix}\hat{U}^\dagger+
  \frac{1}{k_0}(i\hat{U}\nabla_T \hat{U}^\dagger )\times\mathbf{e}_z,
\end{aligned}
\right.
\end{equation}
moreover, the synthetic $\mathrm{U}(2)$ gauge potential in this transformed material is consistent with the result given by the gauge transformation Eqs.~(\ref{U2 gauge transfrom},\ref{U2 gauge transfrom 2}). 

 To be precise, the gauge covariance of the synthetic gauge system means that if the material parameters in two systems can be related according to the gauge transformation Eq.~(\ref{gauge transform of media}), the solutions of EM fields in the two systems have a one-to-one correspondence in terms of the mapping Eq.~(\ref{EM transform}), and the forms of synthetic gauge potentials and fields in the two systems are transformed in exactly the same way as the real non-Abelian gauge potentials and fields expressed in different gauges. Moreover, all of the gauge-independent quantities in real gauge systems, such as the effective probability density $|\psi(\mathbf{r})|^2$, the quadratic forms of gauge fields $\mathrm{Tr}\left(\hat{\mathcal{F}}_{\alpha\beta}\hat{\mathcal{F}}_{\mu\nu}\right)$, and the Wilson loop, are also identical in the two synthetic gauge systems which are related by the material and field transformations.  Especially, the time-averaged in-plane Poynting vectors for the field $\Psi'$ in the transformed medium is also identical with that in the original medium:
\begin{equation}
   \begin{split}
   \bar{\mathbf{S}}'_T &=\frac{1}{2}\mathrm{Re}\left[(\mathbf{E}'^*_z,\,\mathbf{H}'^*_z)\Big(i\hat{\sigma}_2\otimes(\tensor{I}\times\tensor{I})\Big)\begin{pmatrix}
    \mathbf{E}'_T\\ \mathbf{H}'_T
  \end{pmatrix}\right]\\
  &=\frac{1}{2}\mathrm{Re}\left[(\mathbf{E}^*_z,\,\mathbf{H}^*_z) (\hat{U}^\dagger\otimes\tensor{I})\Big(i\hat{\sigma}_2\otimes(\tensor{I}\times\tensor{I})\Big)\Big((\hat{\sigma}_2\hat{U}\hat{\sigma}_2)\otimes\tensor{I}\Big)\begin{pmatrix}
    \mathbf{E}_T\\ \mathbf{H}_T
  \end{pmatrix}\right]\\
  &=\frac{1}{2}\mathrm{Re}\left[(\mathbf{E}^*_z,\,\mathbf{H}^*_z)\Big(i\hat{\sigma}_2\otimes(\tensor{I}\times\tensor{I})\Big)\begin{pmatrix}
    \mathbf{E}_T\\ \mathbf{H}_T
  \end{pmatrix}\right]
  =\bar{\mathbf{S}}_T.
\end{split}
\end{equation}
This result indicates that we can design material parameters using non-Abelian gauge transformation to manipulate  the spin (polarization) of light without changing its flow.


\newpage

\section{Analogy with \emph{Zitterbewegung} effect arising from Dirac cone}
The original ZB effect for Dirac electrons stems from the superposition of positive and negative energy states~\cite{zawadzki2011zitterbewegung}. And a majority of previous photonic realizations of ZB are based on a Dirac cone dispersion. By contrast, the appearance of ZB effect in our system does not rely on a Dirac cone dispersion.  In this section, we show that, in some limiting situations, the ZB effect in our system can also be understood from a Dirac cone structure.

Consider the transition from a reduced Abelian medium
($\hat{\mathcal{A}}=\mathcal{A}^2_y\,\mathbf{e}_y\hat{\sigma}_2,=k_0\tilde{\mathcal{A}^2_y}\,\mathbf{e}_y\hat{\sigma}_2,\ \hat{\mathcal{A}}_0=0$ ) to a
non-Abelian medium
( $\hat{\mathcal{A}}=k_0\tilde{\mathcal{A}^2_y}\,\mathbf{e}_y\hat{\sigma}_2+k_0\tilde{\mathcal{A}^1_x}\,\mathbf{e}_x\hat{\sigma}_1,\ \hat{\mathcal{A}}_0=0$, i.e. example I in the main text):
\begin{alignat}{3}
&\quad\ \text{Abelian medium} &\qquad& &\qquad&\quad\ \text{non-Abelian medium}\nonumber\\[2pt] \label{transition}
& \tensor{\varepsilon}/\varepsilon_0=\tensor{\mu}/\mu_0=\left(\begin{array}{cc|c} \varepsilon_T &  0 & -i\tilde{\mathcal{A}^2_y}\\
0 & \varepsilon_T  & 0 \\\hline
i\tilde{\mathcal{A}^2_y} & 0 & \varepsilon_z
\end{array}\right),
&& \Rightarrow &&
\left\{\begin{aligned}
\tensor{\varepsilon}/\varepsilon_0=&\left(\begin{array}{cc|c} \varepsilon_T &  0 & -i\tilde{\mathcal{A}^2_y}\\
0 & \varepsilon_T  & {\color{blue}\tilde{\mathcal{A}^1_x}} \\\hline
i\tilde{\mathcal{A}^2_y} & {\color{blue}\tilde{\mathcal{A}^1_x}} & \varepsilon_z
\end{array}\right),\\
\tensor{\mu}/\mu_0=&\left(\begin{array}{cc|c} \varepsilon_T &  0 & -i\tilde{\mathcal{A}^2_y}\\
0 & \varepsilon_T  & {\color{blue}-\tilde{\mathcal{A}^1_x}} \\\hline
i\tilde{\mathcal{A}^2_y} & {\color{blue}-\tilde{\mathcal{A}^1_x}} & \varepsilon_z
\end{array}\right).
\end{aligned}\right.
\end{alignat}

\begin{figure}[b]\vspace{0pt}
\includegraphics[width=0.72\columnwidth,clip]{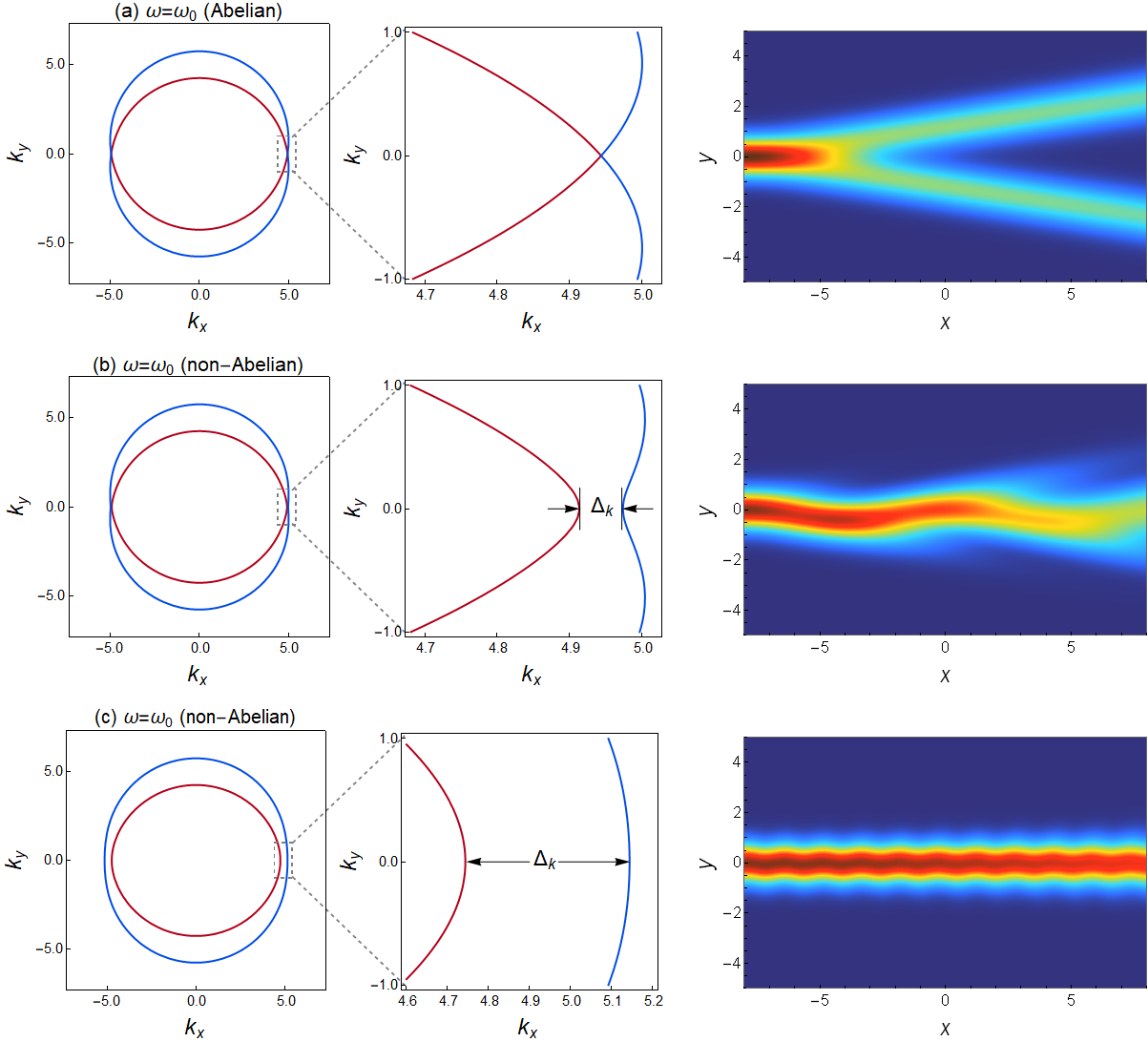}
\caption{ Transition from a Abelian medium to a non-Abelian medium corresponding to Eq.~(\ref{transition}). (a) Reduced Abelian gauge field medium with $\mathcal{A}_y^2=0.15 k_0$; (b) Non-Abelian perturbed medium with $\mathcal{A}_y^2=0.15 k_0$ and a perturbation $\mathcal{A}_x^1=0.006 k_0$; (c) Non-Abelian medium with $\mathcal{A}_y^2=0.15 k_0$ and $\mathcal{A}_x^1=0.04 k_0$. Other parameters of the three media are identical: $\varepsilon_T=\varepsilon_z=1$.
\label{dirac cone}}\vspace{-25pt}
\end{figure}

In the Abelian medium, the sole nonzero component of the $\mathrm{SU}(2)$ gauge potential is $\mathcal{A}^2=k_0\tilde{\mathcal{A}}^2_y\mathbf{e}_y$, the effective Hamiltonian has $\mathrm{U}(1)$ spin rotation symmetry about the $\hat{\sigma}_2$-axis, thus the two eigenmodes are $\psi_\pm=(1,\pm i)^\intercal$ (i.e. $E_z\pm i\eta_o H_z$) whose pseudo-spins are uniformly polarized along the $\hat{\sigma}_2$-axis for an arbitrary direction of the wave vector. The two branches of isofrequency contours are degenerate at $\mathbf{k}=k\,\mathbf{e}_x=\sqrt{\varepsilon_T\varepsilon_z-(\tilde{\mathcal{A}}^2_y)^2}\,k_0$.  In the vicinity of the degenerate
point, the two isofrequency contours intersect linearly, and thus 
can be regarded as a 1D Dirac cone  as shown in Supplementary Figure~\ref{dirac cone}(a),  provided that the \(x\) axis is treated as the pseudo-time dimension.

When the component $\mathcal{A}^1=\mathcal{A}^1_x\mathbf{e}_x$  emerges, the medium turns into non-Abelian characterized by the nonzero non-Abelian magnetic field $\hat{\mathcal{B}}=2\mathcal{A}^1\times\mathcal{A}^2\hat{\sigma}_3=\mathcal{A}^1_x\mathcal{A}^2_y\mathbf{e}_z\hat{\sigma}_3$. Meanwhile, the previous intersected isofrequency contours become fully gapped. As long as $\mathcal{A}^1_x\ll \mathcal{A}^2_y$, the isofrequency contours nearby $\mathbf{k}=k\,\mathbf{e}_x$  can be regarded as a 1D gapped Dirac cone as shown in Supplementary Figure~\ref{dirac cone}(b). And the width of the gap, i.e. the beat wave number corresponds to twice the effective mass of the 1D Dirac electron $M_e$: 
\begin{equation}
   \Delta_k=\big|k_+-k_-\big|=2|\mathcal{A}_x^1|=2M_e.
\end{equation}
According to the standard ZB effect of realistic electrons~\cite{zawadzki2011zitterbewegung}, the superposition of two eigenmodes at $\mathbf{k}$ (positive and negative
energy states) will give rise to a trembling motion whose frequency is exactly determined by effective Dirac mass $2M_e=\Delta_k$. This intuitive explanation is consistent with our analytical result in Eq.~(15) of the main text.


\begin{figure}[t!]
 \centering
 \includegraphics[width=0.8\columnwidth,clip]{./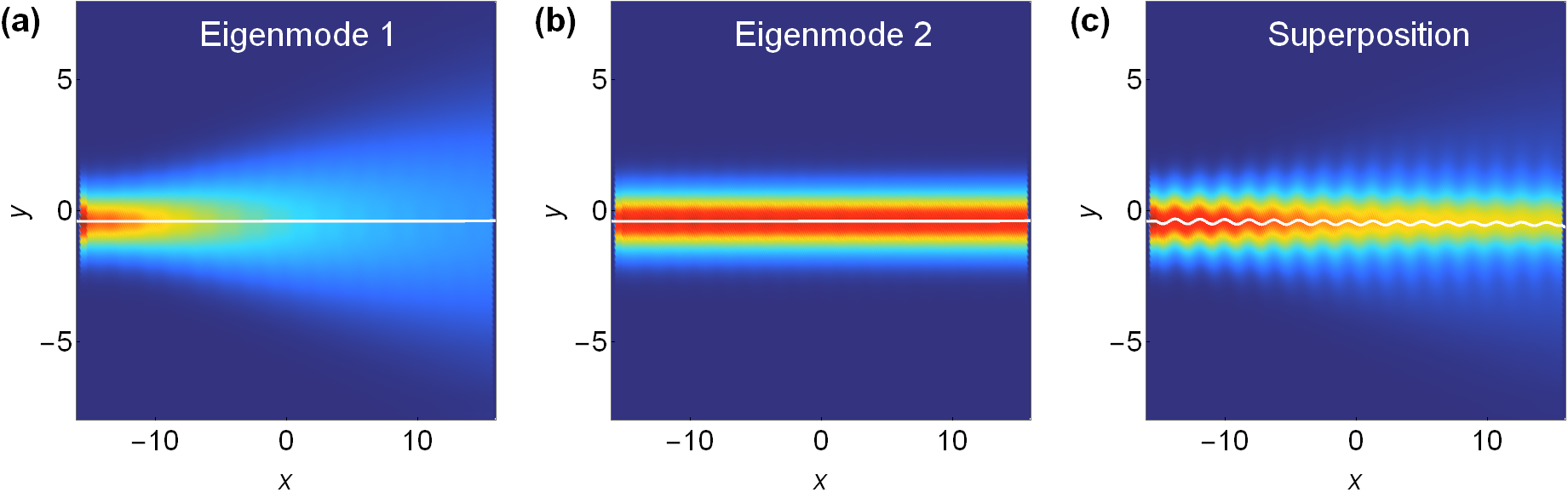}
 \caption{ Spatial ZB of monochromatic beam arising from eigenmode superposition. (a,b) Beams with eigen-polarizations travel in straight lines. (c) A beam superposed by the two eigenfields shown in (a,b) experiences a wavy centroid trajectory (white curve). The parameters of the background non-Abelian medium are given by  $\varepsilon_T=1$, $\varepsilon_z=1.2$, $\mu_z=0.8$, $\mathbf{g}_1=-\mathbf{g}_2=0.3\mathbf{e}_x$. \label{superpostion}  }\vspace{-13pt}
\end{figure}

For broader scenarios, the non-Abelian media cannot be treated as the
perturbation of an Abelian system, and the dispersion does not
 correspond to a gapped Dirac cone either, as illustrated in Supplementary Figure~\ref{dirac cone}(c). However, the ZB effect, arising from the beating of two eigenmodes, still can be observed and is quantitatively in agreement with our analytical result, as we have been demonstrated with examples in the main text. In Supplementary Figure~\ref{superpostion}, we visualize the ZB effect as the consequence of beating between the two eigenmodes. In Supplementary Figure~\ref{superpostion}(a,b), two beams incident from $x$-direction are polarized along the two eigen-polarizations on the $k_y=0$ cross section respectively, and they propagate in straight lines. However, when we superpose the fields of the two eigen-polarized beams, the obtained beam undergoes an obvious trembling motion as shown in Supplementary Figure~\ref{superpostion}(c).

\subsection*{Comparison with $\mathbf{k}$-space Lorentz force induced by Berry curvature}

It is worthwhile to mention that an optical wave packet propagating in \textbf{inhomogeneous} and weakly anisotropic media will also experience a virtual non-Abelian Lorentz force $\big(\hat{\Omega}_{\mathbf{k}}\times\frac{d}{d\tau}\hat{\mathbf{k}}\big)$ in the momentum $\mathbf{k}$-space induced by the non-Abelian Berry curvature $\hat{\Omega}({\mathbf{k}})$~\cite{onoda2006geometrical,bliokh2007non,bliokh2008geometrodynamics}. However, the existence of $\mathbf{k}$-space Lorentz force relies on the inhomogeneity of the media, while our  non-Abelian Lorentz force in the real space can be generated purely from the anisotropy of the media, they are accordingly distinct from each other. In homogeneous media, the wave vector of a wave packet is conserved, i.e.  $\frac{d}{d\tau}\hat{\mathbf{k}}=0$, hence the $\mathbf{k}$-space Lorentz force vanishes $\big(\hat{\Omega}_{\mathbf{k}}\times\frac{d}{d\tau}\hat{\mathbf{k}}\big)\equiv 0$; whereas the $\mathrm{SU}(2)$ non-Abelian gauge fields in real space can still exist $\hat{\mathcal{F}}_{\mu\nu}=-i[\hat{\mathcal{A}}_{\mu},\hat{\mathcal{A}}_{\nu}]\neq 0$ and so does the real-space Lorentz force, $\frac{1}{2}\left(\hat{\mathbf{v}}\times{\color{Black}\hat{\mathcal{B}}}-{\color{Black}\hat{\mathcal{B}}}\times\hat{\mathbf{v}}\right)+{\color{Black}\hat{\mathcal{E}}}$. 

Another difference between the real-space and the $\mathbf{k}$-space schemes is that the real-space scheme is applicable for any monochromatic full-wave phenomena, as the wave equation~(2) shown in the main text duplicates exactly the stationary Schr\"{o}dinger equation for spin-1/2 particles interacting with background $\mathrm{SU}(2)$ gauge fields, while the $\mathbf{k}$-space scheme is limited to geometric optics approximation~\cite{onoda2006geometrical,bliokh2007non,bliokh2008geometrodynamics}. 

%

\section{ZB effect of 3D beams with finite width in $z$ direction }

In our theory, the EM fields are supposed to be translationally invariant along the $z$-axis. In realistic systems, optical beams should have a finite width in the $z$-direction. In this section, we test whether our theory can be applied to describing real 3D beams. 

If the field of a 3D beam has a Gaussian-like distribution in the $z$-direction: $\mathbf{E}\sim\exp\big(-(z-z_0)^2/w_z^{\,2}\big)$, its derivative $\partial_z\mathbf{E}\sim-\frac{2}{w_z^{\,2}}(z-z_0)\exp\big(-(z-z_0)^2/w_z^{\,2}\big)$ tends to zero in the middle section $z=z_0$. Thus we expect that the middle section $z=z_0$ would be a feasible 2D domain where the 2D theory of non-Abelian gauge field optics proves a good description. To substantiate our analysis, we examined the two types of non-Abelian media studied in the main text.

As for the gyrotropic material with $\tensor{\varepsilon}_T=\tensor{\mu}_T$, $\varepsilon_z=\mu_z$, $\mathbf{g}_1=-\mathbf{g}_2^*$, it can exert a synthetic non-Abelian magnetic field on the light beams propagating in $x$-direction. As shown in Supplementary Figure~\ref{3D ZB gyrotropic}, we numerically simulated the propagation a 3D Gaussian type beam in this medium.
According to Supplementary Figure~\ref{3D ZB gyrotropic}(c), the beam propagates strictly along the horizontal plane since the isofrequency contours are symmetric with respect to the $k_z=0$ plane (see Supplementary Figure~\ref{3D ZB gyrotropic}(d)).  Therefore, the vertical center of the beam is fixed at $z=0$ and the ansatz $\partial_z\mathbf{E}(z=0)\approx 0$ is always satisfied. Consequently, the 2D centroid trajectory extracted from the fields falling on the middle plane $z=0$ perfectly matches the analytical result predicted by the 2D theory as shown in Supplementary Figure~\ref{3D ZB gyrotropic}(e).

For the biaxial dielectric material with misaligned principal axes, it performs as a background non-Abelian electric field acting on beams propagating in the $xy$-plane. We simulated three 3D beams with different waists in the $z$-direction, namely $w_z=3\lambda_0,\ 6\lambda_0,\ 8\lambda_0$, traveling in the biaxial medium. The simulation results reveal that the beams will split into two branches in the $z$-direction after propagating a distance as shown in Supplementary Figure~\ref{3D ZB biaxial}(c,f,i). The splitting effect originates from the unparallelism of the group velocities of the two eigenmodes. According to the isofrequency contours of the medium in the $k_y=0$ plane in Supplementary Figure~\ref{3D ZB biaxial}(j), the group velocity of eigenmode $|\mathord{\uparrow}\rangle$ is always along the $x$-axis, while the group velocity of eigenmode $|\mathord{\downarrow}\rangle$ has a nonzero $z$-component. Accordingly, the vertical center of $|\mathord{\downarrow}\rangle$ branch moves along the $z$-axis.  Nevertheless, the numerical centroid trajectories extracted from the $z=z_0$ section shown in Supplementary Figure~\ref{3D ZB biaxial}(k) are still fairly consistent with the 2D theory in the superposed region of the two eigenmodes, they deviate from the analytical curve only when the two eigenmodes split away in the $z$-direction. In particular, it shows that the superposed region increases with wider beam waist $w_z$. Therefore, the 2D theory is even applicable  for beams whose centers are not confined on the $z=z_0$ plane in a considerable large region, if the beam width $w_z$ is wide enough.

\begin{figure}[b]
\includegraphics[width=0.74\columnwidth,clip]{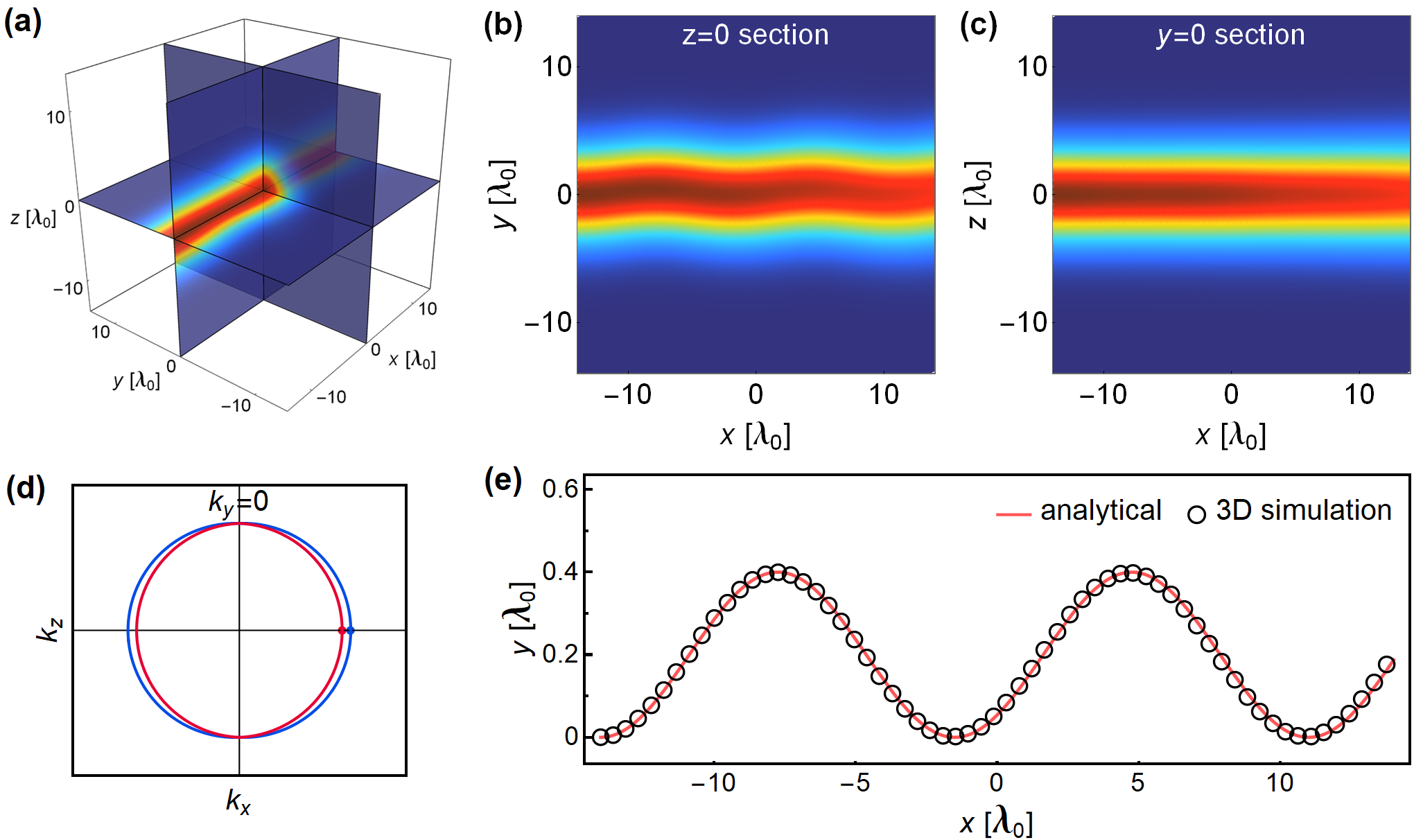}
\caption{ZB effect of a 3D beam with finite widths in both $y$ and $z$ directions in a gyrotropic medium with the parameters $\tensor{\varepsilon}_T=\tensor{\mu}_T=\tensor{I}_{2\times 2}$, $\varepsilon_z=\mu_z=1$, $\mathbf{g}_1=-\mathbf{g}_2^*=(i\,0.1,\ 0.04)^\intercal$. The beam waists in $y$ and $z$ directions are identical $w_y=w_z=5\lambda_0$. (a) Slice view and (b,c) section views of full-wave simulated intensity distribution of the 3D beam. (d) Isofrequency contours of the medium in the $xz$-plane. (e) Centroid trajectories of the 2D theory (red curve) and of the numerical fields on $z=0$ section (black circles) shown in (b).
\label{3D ZB gyrotropic}}\vspace{-15pt}
\end{figure}

\begin{figure}[t]
\includegraphics[width=0.85\columnwidth,clip]{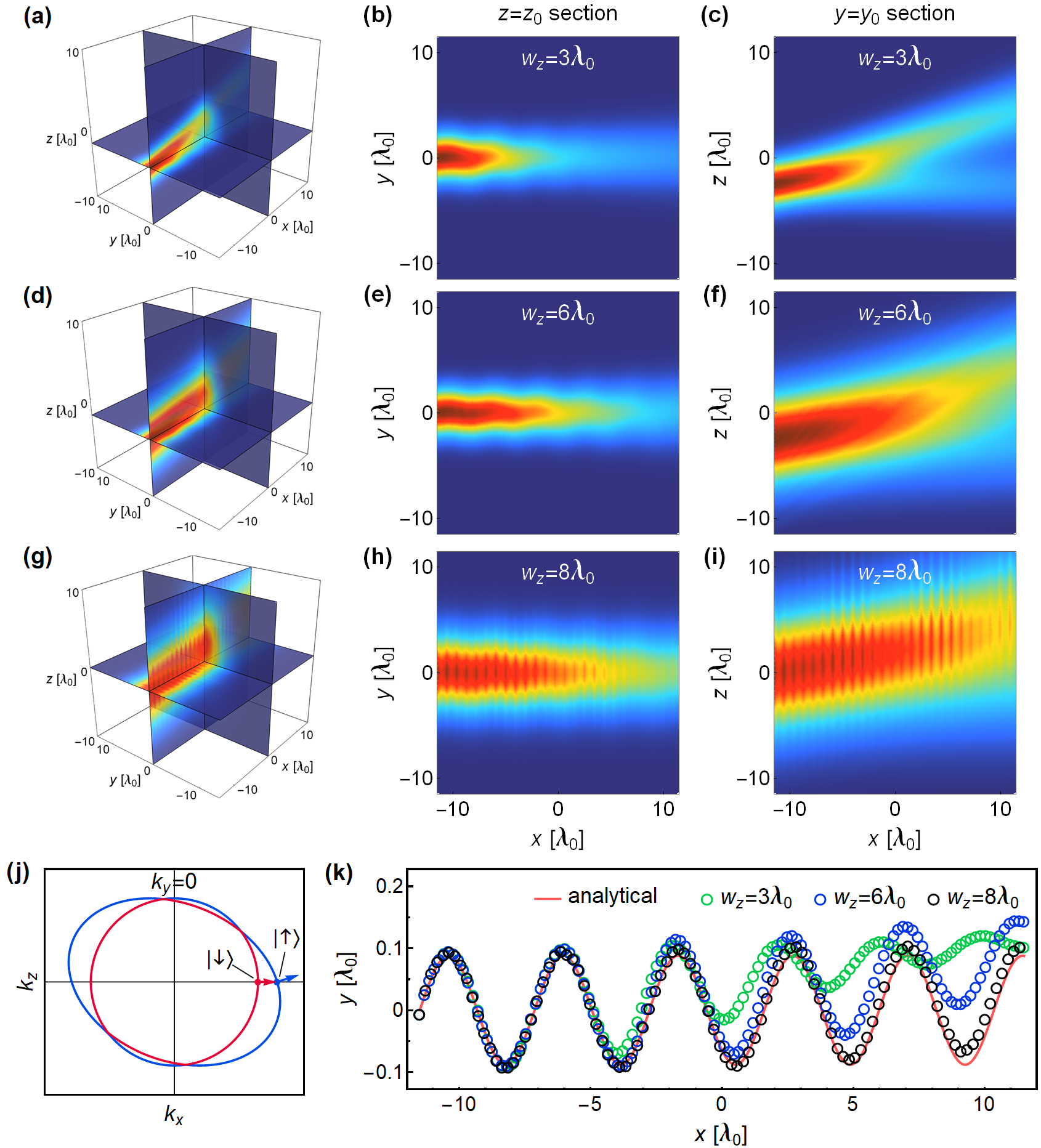}
\caption{ ZB effect of 3D beams in a biaxial dielectric medium with the parameters $\tensor{\varepsilon}_T=\tensor{I}_{2\times 2}$, $\varepsilon_z=1.6$, $\mathbf{g}_1=(0.3,0)^\intercal$, and $\mu/\mu_0=1$. The beam waists in the $z$ directions are respectively (a-c) $w_z=3\lambda_0$, (d-f) $w_z=6\lambda_0$, (g-i) $w_z=8\lambda_0$. (a,d,g) Slice view, (b,e,h) $z=z_0$ section view, and (c,f,i) $y=y_0$ section view of full-wave simulated intensity distributions of the 3D beams, where $(y_0, z_0)$ denotes the center of beam on the initial plane $x=x_0$. (j) Isofrequency contours of the medium in the $xz$-plane, where the red and blue arrows denote the group velocity directions of the two eigenmodes with $\mathbf{k}$ along the positive $k_x$-axis.  (k) Centroid trajectories of the 2D theory (red curve) and of the numerical fields on $z=z_0$ section (black circles) for the three simulated 3D beams shown in (b,e,h).
\label{3D ZB biaxial}}
\end{figure}


\clearpage
\newpage


\section{Theory for genuine non-Abelian Aharonov-Bohm system}

In this section, we supply a self-content introduction of genuine non-Abelian AB system in a more rigorous manner. We will give its precise definition, prove the core property of AB effect, namely the AB phase factors for two homotopic closed paths are identical, and finally deduce the necessary conditions for constructing a genuine non-Abelian AB system.

\subsection*{1. Definition of Genuine non-Abelian AB system}
Consider a simply connected space $M$  endowed with a gauge structure. Mathematically, this system is described by a  $G$-principal fiber bundle, $E\rightarrow M$, where $E$ denotes the total bundle space, $M$ denotes the base manifold, and $G$ denotes the gauge group (the fiber $f_\mathbf{x}$ is homeomorphic to $G$ at each $\mathbf{x}\in M$).
If the gauge field (\(G\)-curvature) is zero \(\hat{\mathcal{F}}_{\mu\nu}=0\)
(i.e. \(G\)- connection is flat) in the whole space, the gauge potential
can be globally written as a pure gauge
\begin{equation}
  \hat{\mathcal{A}}_\mu=\,i\,\hat{U}\partial_\mu \hat{U}^{-1},\qquad (\,\hat{U}\in G\,)
\end{equation}
and can be gauged away
via global gauge transformation \(\hat{U}^{-1}\):
\begin{equation}
  \hat{\mathcal{A}}^\prime_\mu=\hat{U}^{-1}\hat{\mathcal{A}}_{\mu}\hat{U}+i\,\hat{U}^{-1}\partial_\mu \hat{U}=0.
\end{equation}
Therefore, the pure gauge should have no measurable effect in a simply connected
space.
However, if the region of zero curvature \(\hat{\mathcal{F}}_{\mu\nu}=0\) is
only a multiply connected subspace of the whole spacetime, the situation
turns to be very interesting. In this case, 
although the gauge potential still can be written as
\(\hat{\mathcal{A}}_\mu=\,i\,\hat{U}\partial_\mu \hat{U}^{-1}\) locally, it cannot be
gauged away globally via  gauge transformation in the whole multiply connected space, and therefore implies nontrivial physical effect. Such a field-free system with irremovable gauge potential is referred to as an \textbf{Aharonov-Bohm system}~\cite{aharonov1959significance}. 

Consider a particle (wave packet) characterized by a spinor state $|\psi\rangle$ propagating in the zero-field region. From a semi-classical picture, the particle will trace the same trajectory as the case of $\hat{\mathcal{A}}=0$ due to the vanishing non-Abelian Lorentz force.   Nevertheless, the evolution of the state vector \(\left|\psi\right\rangle\) along a
curve \(\gamma: [0,1]\rightarrow M\) is affected by the gauge potential as follows:
\begin{equation}\label{evolution of spinor}
  \left|\psi\right\rangle=\mathcal{P}\,\exp\left[i\int_\gamma \hat{\mathcal{A}}_\mu dx^{\mu}\right] \left|\psi\right\rangle_0=\hat{U}_\gamma\,\left|\psi\right\rangle_0.
\end{equation}
Here, $\left|\psi\right\rangle_0$ denotes the state vector along the same path when $\hat{\mathcal{A}}=0$, where the dynamic phase is included in it. As shown in Eq.~(\ref{evolution of spinor}), the nonzero gauge potential $\hat{\mathcal{A}}$ generates an addition geometric phase factor, namely the non-Abelian AB phase factor, expressed by a path-ordering integration along the curve \(\gamma\) stating at point $\gamma(0)=\mathbf{x}_0$:
\begin{equation}
  \hat{U}_\gamma=\mathcal{P}\,\exp\left[i\int_\gamma \hat{\mathcal{A}}_\mu dx^{\mu}\right]\in G.
\end{equation}
From a geometric point of view, this AB phase factor corresponds to the parallel transport
\(\mathrm{T}_\gamma(p_0)=p_0\cdot\hat{U}_\gamma=\tilde{\gamma}(1)\) along the horizontally
lifted path \(\tilde{\gamma}:[0,1]\rightarrow E\) in the bundle space $E$, where
$p_0=\tilde{\gamma}(0)\in f_{\mathbf{x}_0}$ denotes the starting point of the lifted curve, and $f_{\mathbf{x}_0}$ denotes the fiber at $\mathbf{x}_0$. For a certain reference point $p_0$, the horizontal lift $\tilde{\gamma}$ of $\gamma$ is unique. Physically, the choice of $p_0$ determines the local gauge \(\hat{\mathcal{A}}(\mathbf{x}_0)\) at $\mathbf{x}_0$. And according to the property of parallel transport, the non-Abelian AB phase factor of a concatenate path $\gamma_1\circ\gamma_2$ such that $\gamma_1(0)=\gamma_2(1)$ satisfies $\hat{U}_{\gamma_1\circ\gamma_2}=\hat{U}_{\gamma_1}\circ\hat{U}_{\gamma_2}$.

For a closed path (loop) \({c}\) starting and ending at the same
basepoint \(\mathbf{x}_0\), a local gauge fixes the starting point
\(p_0=\tilde{c}(0)\in f_{\mathbf{x}_0}\) of the lifted path \(\tilde{{c}}\).
However, \(\tilde{{c}}\) need not to be closed, i.e.~the end
point of \(\tilde{{c}}\) can be different from \(p_0\), as shown in Supplementary Figure~\ref{homotopy}(c). The
non-Abelian phase factor for the closed loop \({c}\),
\begin{equation}
  \hat{\mathcal{U}}_c(\hat{\mathcal{A}})=\mathcal{P}\,\exp\left[i\oint_{c} \hat{\mathcal{A}}_\mu dx^{\mu}\right],
\end{equation}
is called the {\bf holonomy} of the horizontally lifted loop \(\tilde{{c}}\)
with respect to the gauge \(\hat{\mathcal{A}}\). The holonomies corresponding to all those closed loops based at
\(\mathbf{x}_0\) constitute a subgroup of the gauge group \(G\):
\begin{equation}
\mathrm{Hol}_{p_0}(\hat{\mathcal{A}})=\Big\{\hat{\mathcal{U}}_c(\hat{\mathcal{A}})\,\Big|\ {c}(0)={c}(1)=\mathbf{x}_0,\ \tilde{{c}}(0)=p_0 \Big\}\subseteq G,
\end{equation}
which is the \textbf{\bf{holonomy group}} of \(\hat{\mathcal{A}}\)
with the reference point \(p_0\).
\(\hat{\mathcal{U}}_c(\hat{\mathcal{A}})=\mathbb{I}\) if and only if
\(\tilde{{c}}\) is also closed, corresponding to the unit
element of \(\mathrm{Hol}_{p_0}(\hat{\mathcal{A}})\)~\cite{chruscinski2012geometric}.
In flat bundles with zero curvature \(\hat{\mathcal{F}}_{\mu\nu}=0\),
the holonomy group of the flat connection $\hat{\mathcal{A}}$ is primarily determined by the topology of the base manifold $M$. The illustration of the holonomies in a flat bundle in shown in Supplementary Figure~\ref{homotopy}(c).

The Aharonov-Bohm system with non-Abelian gauge potential was first introduced by Wu and Yang in their  paper connecting gauge theory with fiber bundle~\cite{Wuyang}, where they predicted intriguing phenomena for the scattering of nucleon carrying weak isospin by a $\mathrm{SU}(2)$ flux tube, such as the spatial fluctuation of  proton-neutron mixing ratio.  Later on, Aharonov and Casher proposed an electric counterpart of the original AB effect, namely  the scattering of neutral magnetic dipoles by a charged line~\cite{aharonov1984topological}. Indeed, both the Wu-Yang scheme and the Aharonov-Casher effect are mathematically equivalent to an AB system with a $\mathrm{SU}(2)$ non-Abelian vortex~\cite{anandan1989electromagnetic,oh1994equivalence}. And the optical realization with anisotropic media has been proposed in Ref.~\onlinecite{JensenLi2015PRL}. However, it can be shown that this kind of systems with a single $\mathrm{SU}(2)$ vortex always can be decomposed into two decoupled Abelian subsystems under a proper gauge~\cite{HovathyPRD}, and the holonomy for  a closed path $c_n$ winding around the vortex $n$ times takes the form~\cite{chruscinski2012geometric}
\begin{equation}
  \hat{\mathcal{U}}_{c_n}=\mathcal{P}\exp\left[\oint_{c_n}\hat{\mathcal{A}}_\mu dx^\mu\right]=\exp\left(i\,n\Phi\,\hat{\sigma}_3\right)=
  \begin{pmatrix}
    \exp(i\,n\Phi) & 0 \\ 0 & \exp(-i\,n\Phi)
  \end{pmatrix} , 
\end{equation}
where $\Phi\hat{\sigma}_3$ is the non-Abelian flux of the vortex. Accordingly, the holonomices of arbitrary two loops commute with each other, thus the holonomy group is still Abelian. In this sense, such systems can be still treated as Abelian or apparently non-Abelian, as specified in the main text~\cite{Raman1986,Goldman2014}. In a more rigorous manner, a genuine non-Abelian AB system can be defined as follows~\cite{Raman1986}:
\begin{definition}
  An Aharonov-Bohm system is called \textbf{genuinely non-Abelian} if and only if the corresponding Holonomy group \(\mathrm{Hol}(\hat{\mathcal{A}})\) of the flat connection (gauge potential) $\hat{\mathcal{A}}$
is a non-Abelian group. 
\end{definition}

It is noteworthy that a different route of generalizing AB effect is the non-Abelian vortex-vortex scattering~\cite{wilczek1990space,bucher1991aharonov,lo1993non,nayak2008non}. In contrast to the non-Abelian vortices in this route being dynamic quasi-particles and respecting non-Abelian statistics, the non-Abelian vortices in our AB systems are merely non-dynamic background.

\subsection*{2. Necessary conditions of genuine non-Abelian AB system}
In order to characterize the holonomy group of a flat bundle, we first introduce the concept of path homotopy.
\textbf{Path homotopy} refers to a topological equivalence relation ``$\simeq$'' for paths sharing the fixed endpoints. Intuitively, if a path $\gamma_1$ can deform continuously into another path $\gamma_2$ while keeping the endpoints fixed, $\gamma_1$ and $\gamma_2$ are said to be homotopic to each other, $\gamma_1\simeq\gamma_2$ (see Supplementary Figure~\ref{homotopy}(a)).  Strictly speaking, path homotopy denotes precisely the mapping of the continuous deformation from $\gamma_1$ to $\gamma_2$~\cite{rotman2013introduction}. All the paths that are homotopic to each other form the \textbf{homotopy equivalence class} $[\gamma]=\big\{\gamma'\,\big| \gamma'(0)=\mathbf{x}_0,\gamma'(1)=\mathbf{x}_1,\gamma'\simeq\gamma\big\}$, where any element $\gamma$  of the class can be chosen as the class representative.

For closed paths through the same basepoint $\mathbf{x}_0$ in the base manifold, their homotopy classes constitute the quotient set with respect to the homotopy equivalence relation ``$\simeq$'':
\begin{equation}
  \pi_1(M,\mathbf{x}_0)=\Big\{[{c}]\,\Big|\ {c}(0)={c}(1)=\mathbf{x}_0\Big\}=\Big\{{c}\,\Big|\ {c}(0)={c}(1)=\mathbf{x}_0\Big\}\big/\simeq.
\end{equation}
By introducing the multiplication between classes, $[c_1]\circ[c_2]=[c_1\circ c_2]$ ($c_1\circ c_2$ denotes the concatenation of the two loops $c_1,\,c_2$), $\pi_1(M,\mathbf{x}_0)$ also forms a group, known as the \textbf{{first fundamental group}} of the base manifold \(M\). Though the definition of fundamental group relies on the basepoint $\mathbf{x}_0$, it can be proved that $\pi_1(M,\mathbf{x}_0)$ and $\pi_1(M,\mathbf{x}'_0)$ are isomorphic for arbitrary $\mathbf{x}_0,\mathbf{x}'_0\in M$ as long as $M$ is path-connected. The fundamental group is an important topological invariant for characterizing the properties of a manifold. 
Moreover, we will prove that the first fundamental group of the base manifold is homomorphic to the holonomy group of the corresponding flat bundle in the following part of this section.

In AB systems with Abelian gauge potentials, it is well-known that the AB phase factors along two homotopic loops $c_1$, $c_2$ are identical due to the null magnetic flux passing through the enclosed area of the combined loop $c_1-c_2$. Indeed, this result can be generalized to a theorem for generic AB systems with non-Abelian gauge potentials as follows.

 \begin{figure}[t]
 \centering
 \includegraphics[width=0.7\columnwidth,clip]{./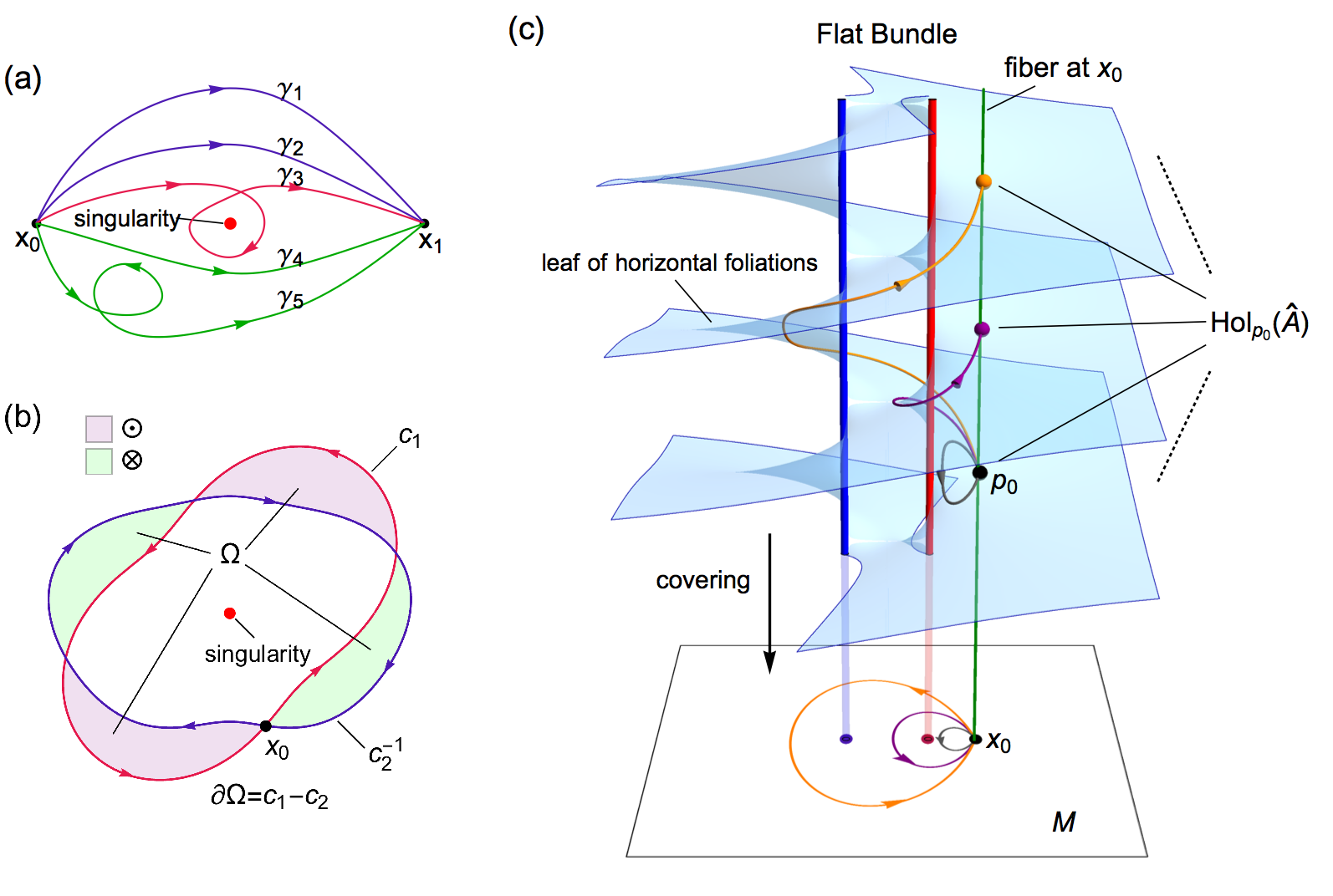}
 \caption{(a) Illustration of homotopic paths with fixed endpoints $\mathbf{x}_0$, $\mathbf{x}_1$ in a punctured plane.  The curves in a same color can deform continuously into each other and thus are path-homotopic, namely $\gamma_1\simeq\gamma_2$, $\gamma_4\simeq\gamma_5$, whereas the curves in different colors are not homotopic.  (b) Two homotopic loops based at $\mathbf{x}_0$ form the boundary of a 2D orientable region $\Omega$ (mathematically, a 2-chain). The purple (green) color indicates the orientation of the region is outward (inward) the paper plane. 
(c) Holonomies in a flat bundle with a twice-punctured base manifold $M$. For the three closed paths based at $\mathbf{x}_0$ in different homotopy classes, their horizontally lifted paths in the bundle space start from $p_0$ to different terminals on the fiber $f_{\mathbf{x}_0}$. All of the terminals of such lifted paths compose the holonomy group $\mathrm{Hol}_{p_0}(\hat{\mathcal{A}})$. The light blue surface illustrates a leaf of horizontal foliations through the point $p_0$ embedding in the bundle space, which covers the base manifold $M$. Note: the surface is merely a cartoon schematic but not an actual universal covering space of $M$. Thus the lifted paths on this surface cannot differentiate all of different homotopy classes. 
 \label{homotopy}}
 \end{figure}

\begin{theorem}\label{theom}
In a flat bundle, the holonomies of
two path-homotopic loops \({c}_1\) and \({c}_2\),
i.e.~they belong to the same homotopy class
\({c_1},{c_2}\in[c]\), are identical
\begin{equation}
\hat{\mathcal{U}}_{c_1}(\hat{\mathcal{A}})=\hat{\mathcal{U}}_{c_2}(\hat{\mathcal{A}})\equiv\hat{\mathcal{U}}_{[c]}(\hat{\mathcal{A}}).
\end{equation}
\end{theorem}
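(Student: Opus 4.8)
The plan is to use the homotopy itself as a two-dimensional region over which the curvature integrates, and then kill everything with flatness. Let $H:[0,1]\times[0,1]\to M$ be a path homotopy realizing $c_1\simeq c_2$, with $H(0,t)=c_1(t)$, $H(1,t)=c_2(t)$ and $H(s,0)=H(s,1)=\mathbf{x}_0$ for all $s$; this is precisely the 2-chain $\Omega$ pictured in Supplementary Figure~\ref{homotopy}(b). First I would pull the flat connection back along $H$ to obtain a connection $H^*\hat{\mathcal{A}}$ on the unit square whose curvature $H^*\hat{\mathcal{F}}_{\mu\nu}$ still vanishes identically, since pullback commutes with $d$ and with the bracket.

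Next, because the square is contractible, a curvature-free connection on it is a global pure gauge: integrating $H^*\hat{\mathcal{A}}$ out from the corner $(0,0)$ defines $\hat{g}(s,t)\in G$ with $H^*\hat{\mathcal{A}}_\mu=i\,\hat{g}\,\partial_\mu\hat{g}^{-1}$, and the single-valuedness of $\hat{g}$ is exactly the statement that the path-ordered exponential of $H^*\hat{\mathcal{A}}$ between two corners is independent of the connecting path — the non-Abelian Stokes theorem applied to the contractible subloops of the square, trivialized by $H^*\hat{\mathcal{F}}=0$. Hence the holonomy of $H^*\hat{\mathcal{A}}$ around the boundary loop $\partial([0,1]^2)$, whose start and end coincide at $(0,0)$, collapses to $\hat{g}(0,0)\,\hat{g}(0,0)^{-1}=\mathbb{I}$. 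On the other hand, the four edges of the square map under $H$ to $c_1$, a constant path at $\mathbf{x}_0$, the reverse of $c_2$, and another constant path, so by the concatenation property $\hat{U}_{\gamma_1\circ\gamma_2}=\hat{U}_{\gamma_1}\circ\hat{U}_{\gamma_2}$ stated above, together with $\hat{\mathcal{U}}_{\bar c_2}=\hat{\mathcal{U}}_{c_2}^{-1}$ and the triviality of constant paths, this boundary holonomy equals $\hat{\mathcal{U}}_{c_1}(\hat{\mathcal{A}})\,\hat{\mathcal{U}}_{c_2}(\hat{\mathcal{A}})^{-1}$. Equating the two expressions gives $\hat{\mathcal{U}}_{c_1}(\hat{\mathcal{A}})=\hat{\mathcal{U}}_{c_2}(\hat{\mathcal{A}})$, and since $c_1,c_2$ were arbitrary representatives of $[c]$ the common value is well-defined and may be written $\hat{\mathcal{U}}_{[c]}(\hat{\mathcal{A}})$.

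I expect the middle step — upgrading ``curvature-free on a contractible domain'' to ``globally pure gauge'', equivalently establishing $\mathcal{P}\exp\oint_{\partial\Omega}\hat{\mathcal{A}}_\mu dx^\mu=\mathbb{I}$ whenever $\hat{\mathcal{F}}_{\mu\nu}=0$ — to be the main obstacle. The most economical way around it, avoiding any appeal to a separately proved non-Abelian Stokes theorem, is the variational route: write $\hat{\mathcal{U}}(s)$ for the path-ordered exponential along $c_s=H(s,\cdot)$, differentiate in $s$, and show $\partial_s\hat{\mathcal{U}}(s)$ is a loop integral of the adjoint-transported curvature $\hat{\mathcal{F}}_{\mu\nu}$ contracted with $\partial_s x^\mu\,\partial_t x^\nu$; this integrand is zero termwise in a flat bundle, so $\hat{\mathcal{U}}(s)$ is constant and $\hat{\mathcal{U}}(0)=\hat{\mathcal{U}}(1)$ directly. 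Minor care is needed in approximating a merely continuous homotopy by a piecewise-smooth one and in the basepoint bookkeeping, but no new idea beyond flatness is required.
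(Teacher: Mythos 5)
Your proof is correct, and it reaches the same conceptual endpoint as the paper --- flatness annihilates a surface term over a 2-chain bounded by $c_1\circ c_2^{-1}$ --- but the route differs in two substantive ways. First, the paper detours through homology: it invokes Hurewicz to conclude that homotopic loops are homologous, takes $\Omega$ as an abstract 2-chain with $\partial\Omega=c_1-c_2$, and then cites the non-Abelian Stokes theorem to convert $\mathcal{P}\exp[i\oint_{\partial\Omega}\hat{\mathcal{A}}_\mu dx^\mu]$ into a surface-ordered exponential of the (vanishing) adjoint-transported curvature. You instead use the homotopy $H$ itself as the parametrized surface, which is both more direct and actually what the surface-ordering in the non-Abelian Stokes theorem secretly requires (an unparametrized homology 2-chain is not enough to define $\mathscr{P}_s$), so your version quietly repairs a small looseness in the paper's argument. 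Second, where the paper treats the non-Abelian Stokes theorem as a citable black box, you correctly flag that the ``flat $\Rightarrow$ globally pure gauge on a contractible domain'' step in your first paragraph would otherwise be circular, and you resolve it with the first-variation-of-holonomy formula: $\partial_s\hat{\mathcal{U}}(s)$ is an integral over $t$ of $\hat{W}^{-1}\hat{\mathcal{F}}_{\mu\nu}\hat{W}\,\partial_sH^\mu\partial_tH^\nu$ plus endpoint terms that vanish because $H(s,0)=H(s,1)=\mathbf{x}_0$. That variational identity is precisely the engine inside the non-Abelian Stokes theorem, so your proof is the more self-contained of the two; what the paper's version buys in exchange is brevity and a pointer to the standard references. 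The remaining technicalities you mention (smoothing a merely continuous homotopy, basepoint bookkeeping) are standard and do not hide any gap.
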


\begin{proof}
The proof of this theorem can be found in the textbooks of differential geometry~\cite{kobayashi1963foundations} and in the literature~\cite{Raman1986,oh1988propagator}. For completeness, we give a concise but accessible proof imitating the derivation of the Abelian case.   Let us examine the following product
\[\hat{\mathcal{U}}_{c_1}(\hat{\mathcal{A}})\,\hat{\mathcal{U}}_{c_2}(\hat{\mathcal{A}})^{-1}=\hat{\mathcal{U}}_{c_1}(\hat{\mathcal{A}})\,\hat{\mathcal{U}}_{c_2^{-1}}(\hat{\mathcal{A}})
=\hat{\mathcal{U}}_{c_1\circ c_2^{-1}}(\hat{\mathcal{A}})
=\mathcal{P}\,\exp\left[i\oint_{{c}_1\circ{c}_2^{-1}} \hat{\mathcal{A}}_\mu dx^{\mu}\right].
\]
The Hurewicz's theorem implies that the two homotopic loops \(c_1\),
\(c_2\) are necessarily homologous when treated as
1-cycles~\cite{rotman2013introduction}. According to the definition of homology, \(c_1\circ c_2^{-1}\sim c_1-c_2=\partial \Omega\) must be
the boundary of a 2-chain\(\Omega\) (the 2D orientable region enclosed by $c_1-c_2$ as shown in Supplementary Figure~\ref{homotopy}(b)).
In terms of the \textbf{non-Abelian Stokes theorem}~\cite{aref1980non,fishbane1981stokes,broda2002non} and  $\hat{\mathcal{F}}_{\mu\nu}=0$, the path-ordered exponential along the loop $c_1\circ{c}_2^{-1}$ equals  the identity  element of $G$: 
\begin{equation}
\hat{\mathcal{U}}_{c_1}(\hat{\mathcal{A}})\,\hat{\mathcal{U}}_{c_2}(\hat{\mathcal{A}})^{-1}=\mathcal{P}\,\exp\left[i\oint_{\partial\Omega} \hat{\mathcal{A}}_\mu dx^{\mu}\right]=\mathscr{P}_{\hspace{-1pt}s}
\exp\Bigg[\frac{i}{2}\int_\Omega\,\hat{U}(\mathbf{x},\mathbf{x}_0)^{-1}\underbrace{\hat{\mathcal{F}}_{\mu\nu}(\mathbf{x})}_{=0}\hat{U}(\mathbf{x}_0,\mathbf{x})\,dx^\mu\wedge dx^\nu\Bigg]=\mathbb{I},
\end{equation}
where \(\mathscr{P}_{\hspace{-1pt}s}\) denotes the ``surface ordering'', and $\hat{U}(\mathbf{x}_0,\mathbf{x})=\mathcal{P}\exp\left[i\int_\mathbf{x}^{\mathbf{x}_0}\hat{\mathcal{A}}_\mu dx^\mu\right]$ is the non-Abelian phase factor along a path connecting the field point $\mathbf{x}$ and the basepoint $\mathbf{x}_0$. As such, the homotopic invariance of the holonomies in flat bundle has been proved.
\vspace{5pt}
\end{proof}

Physically, the theorem manifests the  quantization of the AB phase factors along paths with fixed endpoints, namely they can only take discrete values (in general, they are matrices) for different homotopic paths, while the AB phase factor is invariant against the continuous deformation of the path in the field-free region. This is the key signature of AB systems associated with either Abelian or non-Abelian gauge potentials. In a simply connected space, the fundamental group is trivial, namely all
loops belong to \([\mathbf{x}_0]\). Hence, any state will
return to itself as it travels along a closed loop. \textbf{Therefore, an AB system requires the fundamental group of the base manifold is nontrivial.}
According to \textbf{Theorem}~\ref{theom}, the holonomy group of a flat bundle and the fundamental group of the corresponding base manifold are naturally connected.

\begin{corollary}\label{corollary}
For a flat bundle $E\rightarrow M$, there is a \textbf{surjective group homomorphism} from the first fundamental group \(\pi_1(M,\mathbf{x}_0)\) of the base manifold $M$
 to the holonomy group \(\mathrm{Hol}_{p_0}(\hat{\mathcal{A}})\) of the bundle:
\begin{equation}
\pi_1(M,\mathbf{x}_0)\ni[c]\ \longrightarrow\ \hat{\mathcal{U}}_{[c]}(\hat{\mathcal{A}})\in\mathrm{Hol}_{p_0}(\hat{\mathcal{A}})=\big\{\hat{\mathcal{U}}_{[c]}\,\big|\ [c]\in\pi_1(M,\mathbf{x}_0),\ \tilde{c}(0)=p_0\big\}.
\end{equation}
In other words, the holomony group
\(\mathrm{Hol}_{p_0}(\hat{\mathcal{A}})\) of the flat bundle is just a representation of the fundamental group
\(\pi_1(M,\mathbf{x}_0)\) of the base manifold. 
\end{corollary}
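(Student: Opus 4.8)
The plan is to construct the map $\Phi:\pi_1(M,\mathbf{x}_0)\to\mathrm{Hol}_{p_0}(\hat{\mathcal{A}})$ explicitly, show it is well-defined, then verify it is a group homomorphism, and finally argue surjectivity. Define $\Phi([c])=\hat{\mathcal{U}}_{[c]}(\hat{\mathcal{A}})$, where $\hat{\mathcal{U}}_{[c]}$ is the holonomy of any loop representative $c$. The crucial point---which is exactly \textbf{Theorem}~\ref{theom}---is that this does not depend on the choice of representative: if $c_1\simeq c_2$ then $\hat{\mathcal{U}}_{c_1}(\hat{\mathcal{A}})=\hat{\mathcal{U}}_{c_2}(\hat{\mathcal{A}})$ in a flat bundle. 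So well-definedness is not an obstacle; it has already been established.

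Next I would verify the homomorphism property. This follows from the concatenation rule for non-Abelian phase factors (parallel transport) stated in Supplementary Note~1: for loops $c_1,c_2$ based at $\mathbf{x}_0$, one has $\hat{\mathcal{U}}_{c_1\circ c_2}=\hat{\mathcal{U}}_{c_1}\hat{\mathcal{U}}_{c_2}$ (with the convention for ordering inherited from $\hat{U}_{\gamma_1\circ\gamma_2}=\hat{U}_{\gamma_1}\circ\hat{U}_{\gamma_2}$). Combined with the definition of the group law $[c_1]\circ[c_2]=[c_1\circ c_2]$ on $\pi_1$, this gives $\Phi([c_1]\circ[c_2])=\Phi([c_1])\,\Phi([c_2])$. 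I would also note $\Phi([\mathbf{x}_0])=\mathbb{I}$, since the constant loop lifts to a constant horizontal path, so $\Phi$ sends the identity to the identity and hence is a genuine group homomorphism.

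For surjectivity, observe that by the very definition given in the statement, $\mathrm{Hol}_{p_0}(\hat{\mathcal{A}})=\{\hat{\mathcal{U}}_{[c]}\mid [c]\in\pi_1(M,\mathbf{x}_0),\ \tilde{c}(0)=p_0\}$: every element of the holonomy group is the holonomy of some loop based at $\mathbf{x}_0$ lifted from $p_0$, hence lies in the image of $\Phi$. Thus $\Phi$ is automatically surjective, and the holonomy group is a quotient of (equivalently, a representation of) the fundamental group. The one point that warrants a careful sentence is that changing the reference point $p_0$ within the fiber $f_{\mathbf{x}_0}$ conjugates all holonomies by a fixed group element, so the isomorphism class of $\mathrm{Hol}_{p_0}(\hat{\mathcal{A}})$, and the statement that it is ``a representation of $\pi_1$,'' is independent of this choice.

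The main (and essentially only) obstacle has already been dispatched: it is the representative-independence of the holonomy, i.e.\ \textbf{Theorem}~\ref{theom}, whose proof rested on Hurewicz's theorem (homotopic loops are homologous, so $c_1\circ c_2^{-1}$ bounds a 2-chain) and the non-Abelian Stokes theorem together with $\hat{\mathcal{F}}_{\mu\nu}=0$. Granting that, the corollary is a formal consequence: functoriality of the path-ordered exponential under concatenation plus the definition of $\mathrm{Hol}_{p_0}$ as the image. I would therefore present the proof in three short moves---well-defined (cite Theorem~\ref{theom}), homomorphism (concatenation rule), surjective (definition)---and close with the remark on $p_0$-dependence.
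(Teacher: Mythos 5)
Your proposal is correct and follows exactly the route the paper intends: the paper presents this corollary as an immediate consequence of Theorem~\ref{theom} (well-definedness on homotopy classes), with the homomorphism property coming from the concatenation rule $\hat{U}_{\gamma_1\circ\gamma_2}=\hat{U}_{\gamma_1}\circ\hat{U}_{\gamma_2}$ and surjectivity built into the definition of $\mathrm{Hol}_{p_0}(\hat{\mathcal{A}})$. Your added remark on the conjugation dependence on the reference point $p_0$ is a sensible clarification but not a departure from the paper's argument.
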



As the holonomy group $\mathrm{Hol}(\hat{\mathcal{A}})$ is a subgroup of the gauge group $G$, an Abelian gauge group indicates all of the holonomies commute with each other. Meanwhile, in terms of \textbf{Corollary}~\ref{corollary}, for an AB system with a non-Abelian holonomy group,  there should exit two loops $c_1$, $c_2$ such that ${\hat{\mathcal{U}}}_{[c_1]}{\hat{\mathcal{U}}}_{[c_2]}\neq {\hat{\mathcal{U}}}_{[c_2]}{\hat{\mathcal{U}}}_{[c_1]}$. Thus $[c_1\circ c_2]\neq [c_2\circ c_1]$, in other words, $\pi_1(M)$ must also be non-Abelian. To summarize, we obtain

\begin{necessary*}
 \begin{enumerate}[itemsep=0mm]
   \item \textit{The
gauge group \(G\) is non-Abelian;}
 \item \textit{The first fundamental group
\(\pi_1(M)\) of the base manifold is non-Abelian.}
 \end{enumerate} 
\end{necessary*}




\newpage
\section{Material construction of non-Abelian AB system}

Inspired by the scheme of two noncommutative $SU(3)$ vortices proposed in Ref.~\onlinecite{Raman1986}, we will proceed to construct a $\mathrm{SU}(2)$ genuine non-Abelian AB system.
We assume the synthetic $\mathrm{SU}(2)$ gauge potential only carrying two nonzero components: $\hat{\mathcal{A}}=\mathcal{A}^1\hat{\sigma}_1+\mathcal{A}^2\hat{\sigma}_2$. Besides, a nontrivial gauge potential with vanishing field $\hat{\mathcal{F}}_{ij}=0$ can be locally expressed in the
pure gauge form $\hat{\mathcal{A}}=i \hat{U}\nabla \hat{U}^{-1}$. If we demand $\mathcal{A}^2=0$ when $y>0$, and $\mathcal{A}^1=0$ when $y<0$, the unitary transformation matrix can be simply expressed as follows in upper and lower semi-space respectively: 
\begin{equation}
  \hat{U}=\exp\left(i\,\zeta_1(\mathbf{r})\,\hat{\sigma}_1\right)\quad (y>0),\qquad \hat{U}=\exp\left(i\,\zeta_2(\mathbf{r})\,\hat{\sigma}_2\right)\quad (y<0).
\end{equation}
Here, $\zeta_1(\mathbf{r}),\,\zeta_2(\mathbf{r})$, referred to as pre-potentials,  are
some multivalued functions of $x,y$ that smoothly tend to zero at $y=0$, and the corresponding components of the vector potential take the form $\mathcal{A}^i=\nabla\zeta_i(\mathbf{r})\quad (i=1,2)$.

\begin{figure}[b]
 \centering
 \includegraphics[width=0.8\columnwidth,clip]{./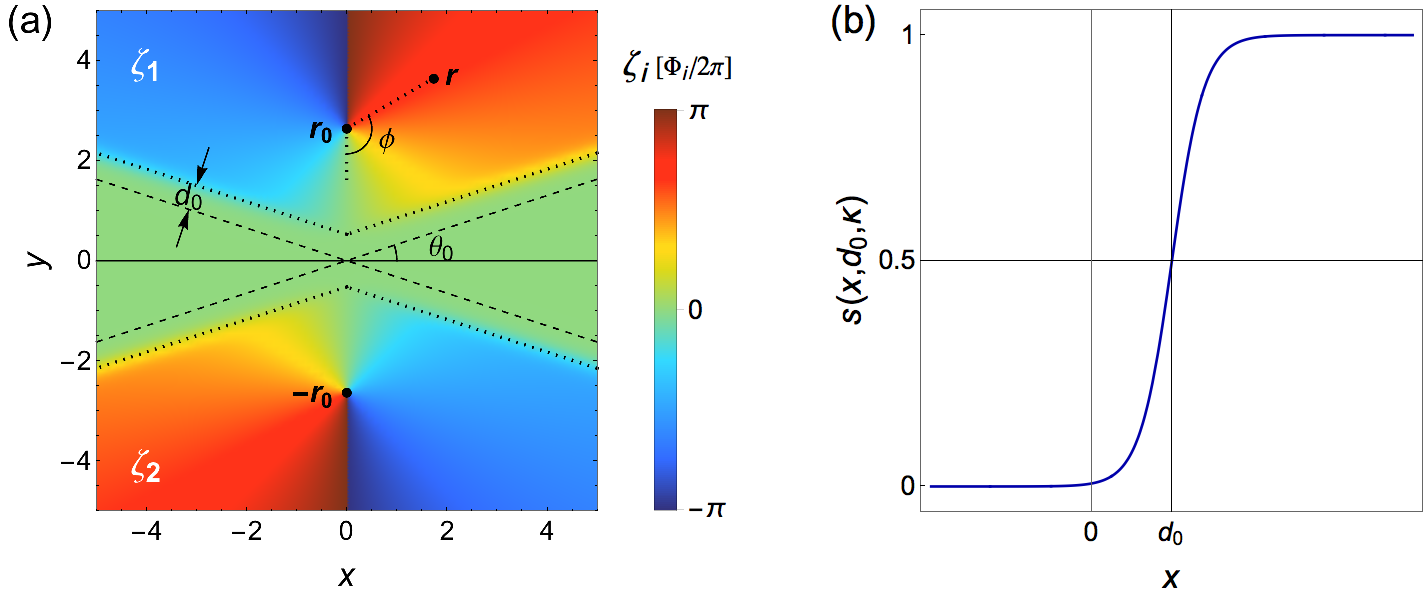}
 \caption{(a) Spatial distributions of the pre-potentials $\zeta_1$ (in the upper half-plane $y>0$) and $\zeta_2$ (in the lower half-plane $y<0$), whose gradients give the corresponding components of the synthetic $\mathrm{SU}(2)$ gauge potential $\mathcal{A}^i=\nabla\zeta_i\ (i=1,2)$. (b) The profile of the sigmoid function $s(x,d_0,\kappa)$ for constructing the cutoff function $S(\mathbf{r})$ in Eq.~(\ref{sigmoid function}).  \label{prepotantial}}
\end{figure}

The pre-potentials with nontrivial topology can be constructed in the
following way:
\begin{equation}
  \zeta_1(\mathbf{r})=\frac{\Phi_1}{2\pi}S(\mathbf{r})\,\phi(\mathbf{r}-\mathbf{r}_0),\quad \zeta_2(\mathbf{r})=\frac{\Phi_2}{2\pi}S(-\mathbf{r})\,\phi(\mathbf{r}+\mathbf{r}_0),
\end{equation}
with the corresponding synthetic gauge potential
\begin{equation}\label{gauge potential in AB system}
\raisebox{7pt}{ $\hat{\mathcal{A}}=\,\left\{\rule{0cm}{1cm}\ \right.$}\begin{aligned}
\mathcal{A}^1\hat{\sigma}_1=&\ \frac{\Phi_1}{2\pi}\nabla\Big(S(\mathbf{r})\,\phi(\mathbf{r}-\mathbf{r}_0)\Big)\hat{\sigma}_1\ =\frac{\Phi_{1}}{2\pi}\,\Big(S(\mathbf{r})\ {\color{Red}\nabla\phi(\mathbf{r}-\mathbf{r}_0)}+\phi(\mathbf{r}-\mathbf{r}_0)\nabla S(\mathbf{r})\Big)\hat{\sigma}_1,\quad\ \  (y>0)\\
\mathcal{A}^2\hat{\sigma}_2=&\,\frac{\Phi_2}{2\pi}\nabla\Big(S(-\mathbf{r})\,\phi(\mathbf{r}+\mathbf{r}_0)\Big)\hat{\sigma}_2=\frac{\Phi_{2}}{2\pi}\Big(S(-\mathbf{r}){\color{Red}\underbrace{\nabla\phi(\mathbf{r}+\mathbf{r}_0)}_{\displaystyle\text{vortex}}}+\phi(\mathbf{r}+\mathbf{r}_0)\nabla S(-\mathbf{r})\Big)\hat{\sigma}_2,\quad (y<0)
\end{aligned}
\end{equation}
where \(\phi(\mathbf{r}\mp\mathbf{r}_0)\) is the polar angle (see Supplementary Figure~\ref{prepotantial}(a)) with
respect to \(\pm\mathbf{r}_0=\pm r_0\,\mathbf{e}_y\) whose gradient represents an irrotational free vortex located at $\pm\mathbf{r}_0$:
\begin{equation}
  \nabla\phi(\mathbf{r}\mp\mathbf{r}_0)=\frac{\mathbf{e}_{\phi}}{\left|\mathbf{r}\mp\mathbf{r}_0\right|}=\frac{-(y\mp r_0)\,\mathbf{e}_{x}+x\,\mathbf{e}_{y}}{\left|\mathbf{r}\mp\mathbf{r}_0\right|^2},
\end{equation}
and \(S(\pm\mathbf{r})\) is a smooth cutoff function such that (i) the two components
\(\mathcal{A}^{1},\mathcal{A}^2\) are gradually compressed
in the upper and lower half-space without overlap, and (ii) $\mathcal{A}^{1,2}$ tends to a free vortex nearby $\pm\mathbf{r}_0$ respectively, i.e.
\begin{equation*}\label{limit AB potential}
  \begin{aligned}
  \text{(i)}\quad  \mathcal{A}^i&\rightarrow 0\qquad\qquad\quad\ \, \Leftrightarrow\  S(\pm\mathbf{r})\rightarrow 0,  \ \ \,\text{as } y\rightarrow 0;\\
 \text{(ii)}\quad  \mathcal{A}^i&=\frac{\Phi_i}{2\pi}\nabla\phi(\mathbf{r}\mp\mathbf{r}_0)\ \Leftrightarrow\  S(\pm\mathbf{r})=1,  \quad\text{as } |\mathbf{r}\mp\mathbf{r}_0|<\delta r.
  \end{aligned}
\end{equation*}
Meanwhile, since $\phi(\mathbf{r}\mp\mathbf{r}_0)$ is multivalued, the branch cut of $\phi(\mathbf{r}\mp\mathbf{r}_0)$ is selected along $\mathbf{r}=\pm(\mathbf{r}_0+y\mathbf{e}_y)\ (y>0)$  as shown in Supplementary Figure~\ref{prepotantial}(a). To ensure the monodromy of $\mathcal{A}^i$, we should also require $S(\pm\mathbf{r})=1$
in a neighborhood of the branch cut of
\(\phi(\mathbf{r}\mp\mathbf{r}_0)\). The form of \(S(\mathbf{r})\) is
not unique, here we adopt an approximate but rather simple expression:
\begin{equation}\label{cutoff function}
  \begin{split}
  S(\mathbf{r})=&\,s(y \cos\theta_0+x\sin\theta_0,\kappa,d_0)\cdot s(y \cos\theta_0-x\sin\theta_0,\kappa,d_0)
  \end{split}
\end{equation}
with the sigmoid function 
 \begin{equation}\label{sigmoid function}
   s(x,k,d_0)=\frac{1}{\exp[\kappa(d_0-x)]+1},
 \end{equation} 
and the values of the parameters $\theta_0$, $\kappa$, $d_0$ do not affect the AB phase factors as long as the asymptotic conditions of $S(\pm\mathbf{r})$ are met. The obtained pre-potentials $\zeta_1,\,\zeta_2$ and the gauge potential $\hat{\mathcal{A}}$ are plotted  in Supplementary Figure~~\ref{g-vector}(a) and in Fig.~2a of the main text respectively. 

\begin{figure}
 \centering
 \includegraphics[width=0.92\columnwidth,clip]{./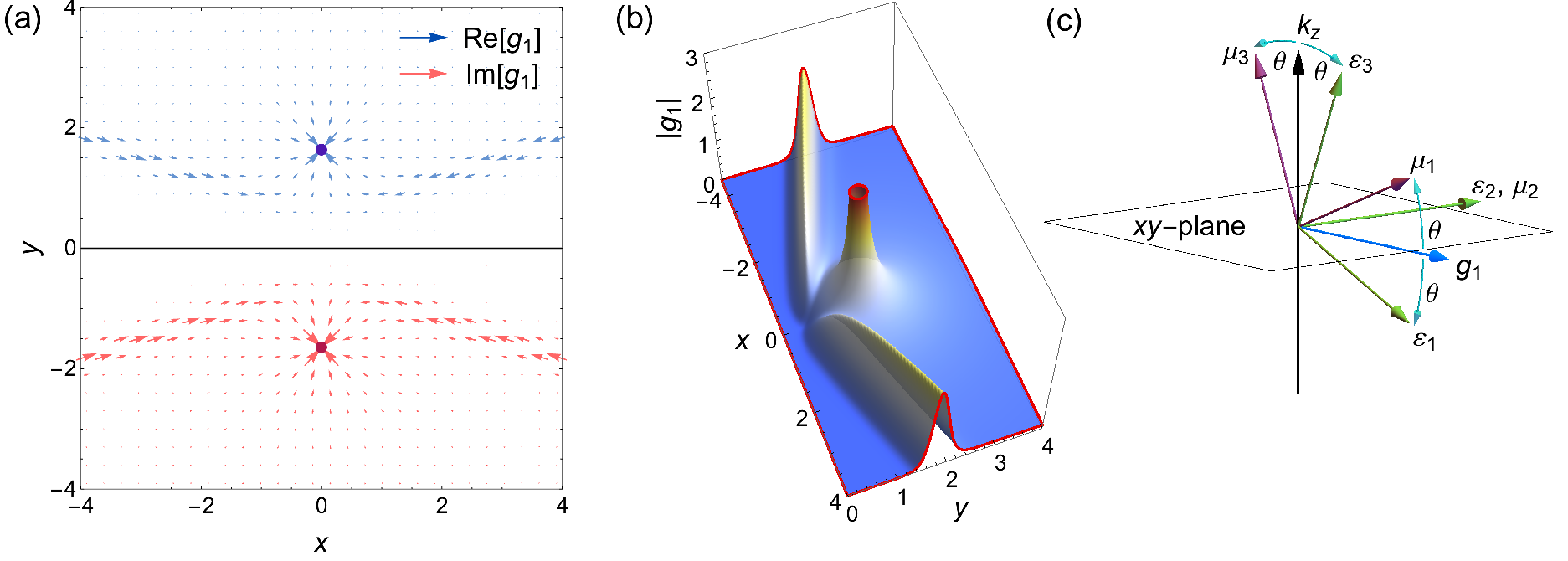}
 \caption{(a) Spatial distribution of  $\mathbf{g}_1=-\mathbf{g}_2^*$ (the off-diagonal term of $\tensor{\varepsilon}$, $\tensor{\mu}$). (b) Distribution of $|\mathbf{g}_1|$, which determines the principal components of $\tensor{\varepsilon}$, $\tensor{\mu}$ (see Eq.~(\ref{principal components})), for the reciprocal anisotropic medium in the upper half-plane. (c) The orientation of the principal frames of $\tensor{\varepsilon}$, $\tensor{\mu}$ for the medium in the upper half-plane.  \label{g-vector}}
\end{figure}

Notably, the non-Abelian holonomy of a loop encircling a vortex depends on the choice of basepoint. Nevertheless, in our system, the gauge
potential is commutative in either upper
or lower half-space. For an arbitary loop $c_1$ ($c_2$) lying entirely in the upper (lower) half-space and encircling the upper (lower) vortex anticlockwise once, its holonomy, $\hat{\mathcal{U}}_{[c_i]}$ ($i=1,2$), is uniquely determined as 
\begin{equation}\label{holonomy generator}
  \hat{\mathcal{U}}_{c_i}=\mathcal{P}\exp\left[i\oint_{c_i} \hat{\mathcal{A}}^i\cdot d\mathbf{r}\,\hat{\sigma}_i\right]=\exp\left[i\oint_{c_i} d\zeta_i\,\hat{\sigma}_i\right]=\exp\left[i\frac{\Phi_i}{2\pi}\oint_{|\mathbf{r}\mp\mathbf{r}_0|<\delta r} d\phi\,\hat{\sigma}_i\right]=\exp\left[i\Phi_i\hat{\sigma}_i\right],\quad (i=1,2)
\end{equation}
where the integral in the second-to-last step is  along a sufficiently small circle enclosing the vortex, and $\Phi_1,\,\Phi_2 $ can be regarded as the flux of the two vortices respectively.
For loops through a basepoint on the $x$-axis $(y=0)$, Eq.~(\ref{holonomy generator}) is still valid. 
Therefore, $\hat{\mathcal{U}}_{c_1}$, $\hat{\mathcal{U}}_{c_2}$ can be used as the two generators to obtain the holonomy group (being homomorphic to the free group $\mathbb{Z}*\mathbb{Z}$) of this AB system.
As shown in Supplementary Figure~\ref{prepotantial}(a) as well as Fig.~2a of the main text, the cutoff function $S(\mathbf{r})$ in Eq.~(\ref{cutoff function}) renders the gauge potential $\hat{\mathcal{A}}$ vanishing in the whole conic region of $|\arctan(y/x)|<\theta_0$. As such, each point on the cross section of an optical beam with a finite width can undergo a unique non-Abelian AB phase factor when arriving at the screen. 

From the synthetic gauge potential in Eq.~(\ref{gauge potential in AB system}) and Table.~1 in the main text, we can inversely construct the desired material parameters. To guarantee $\hat{\mathcal{A}}_0\equiv0$ and $V_0=\text{const.}$, we demand that the diagonal components of $\tensor{\varepsilon}$, $\tensor{\mu}$  obey $\varepsilon_T=\mu_T$, $\varepsilon_z=\mu_z$ and are homogeneous in the whole space, meanwhile, their off-diagonal parts satisfy $\mathbf{g}_1=-\mathbf{g}_2^*$. As such, the synthetic  $\hat{\sigma}_1$-vortex in the upper half-space $(y>0)$ can be made using reciprocal anisotropic metamaterials~\cite{JensenLi2015PRL} with $\mathbf{g}_1=-\mathbf{g}_2=\mathbf{e}_z\times\mathcal{A}^1/k_0$. The constitutive tensors take the following forms:
\begin{equation}\label{AB material upper}
  \tensor{\varepsilon}/\varepsilon_0=\left(\begin{array}{cc|c} \varepsilon_T &  0 & -{\mathcal{A}^1_y}/k_0\\
0 & \varepsilon_T  & {\mathcal{A}^1_x}/k_0 \\\hline
-{\mathcal{A}^1_y}/k_0 & {\mathcal{A}^1_x}/k_0 & \varepsilon_z
\end{array}\right),\qquad
\tensor{\mu}/\mu_0=\left(\begin{array}{cc|c} \varepsilon_T &  0 & {\mathcal{A}^1_y}/k_0\\
0 & \varepsilon_T  & -{\mathcal{A}^1_x}/k_0 \\\hline
{\mathcal{A}^1_y}/k_0 & -{\mathcal{A}^1_x}/k_0 & \varepsilon_z
\end{array}\right).
\end{equation}
Whereas the synthetic $\hat{\sigma}_2$-vortex in the lower half-space $(y<0)$ need to be constructed out of gyrotropic materials with $\mathbf{g}_1=\mathbf{g}_2=i\,\mathbf{e}_z\times\mathcal{A}^2/k_0$, so the relative permittivity and permeability are of entirely equal form 
 \begin{equation}\arraycolsep=3.4pt\def\arraystretch{1.5}\label{AB material lower}
\tensor{\varepsilon}/\varepsilon_0=\tensor{\mu}/\mu_0=\left(\begin{array}{cc|c} \varepsilon_T &  0 & -i{\mathcal{A}^2_y}/k_0\\
0 & \varepsilon_T  & i{\mathcal{A}^2_x}/k_0 \\\hline
i{\mathcal{A}^2_y}/k_0 & -i{\mathcal{A}^2_x}/k_0 & \varepsilon_z
\end{array}\right).
 \end{equation}
The orientation of the off-diagonal vector $\mathbf{g}_1=-\mathbf{g}_2^*$ is plotted in Supplementary Figure~\ref{g-vector}(a).

We further analyze the property of the reciprocal anisotropic medium used in the upper half-plane.
By diagonalizing $\tensor{\varepsilon}$ and $\tensor{\mu}$ given by Eq.~(\ref{AB material upper}) into $\tilde{\varepsilon}/\varepsilon_0=\mathrm{diag}(\varepsilon_1,\varepsilon_2,\varepsilon_3)$ and $\tilde{\mu}/\mu_0=\mathrm{diag}(\mu_1,\mu_2,\mu_3)$, we obtain their principal components:
\begin{equation}\label{principal components}
  \begin{gathered}
  \varepsilon_1=\mu_1=\frac{1}{2}\left(\varepsilon_T+\varepsilon_z-\sqrt{(\varepsilon_T-\varepsilon_z)^2+4 |\mathbf{g}_1|^2}\right),\\
   \varepsilon_2=\mu_2=\varepsilon_T,\\
  \varepsilon_3=\mu_3=\frac{1}{2}\left(\varepsilon_T+\varepsilon_z+\sqrt{(\varepsilon_T-\varepsilon_z)^2+4 |\mathbf{g}_1|^2}\right).
 \end{gathered}
\end{equation}
The distribution of $|\mathbf{g}_1|$ is plotted in Supplementary Figure~\ref{g-vector}(b).
Despite the identical principal components of $\tensor{\varepsilon}$ and $\tensor{\mu}$ , their principal frames do not coincide as shown in Supplementary Figure~\ref{g-vector}(c). For a certain $\mathbf{g}_1=-\mathbf{g}_2$,  we have
\begin{equation}
  \begin{aligned}
  \mathbf{e}_{\varepsilon_1}=&\,\cos\theta\,\mathbf{e}_{\mathbf{g}_1}-\sin\theta\,\mathbf{e}_z,\quad
  \mathbf{e}_{\varepsilon_2}=\mathbf{e}_z\times\mathbf{e}_{\mathbf{g}_1},\quad
  \mathbf{e}_{\varepsilon_3}=\sin\theta\,\mathbf{e}_{\mathbf{g}_1}+\cos\theta\,\mathbf{e}_z,\\
  \mathbf{e}_{\mu_1}=&\,\cos\theta\,\mathbf{e}_{\mathbf{g}_1}+\sin\theta\,\mathbf{e}_z,\quad
  \mathbf{e}_{\mu_2}=\mathbf{e}_z\times\mathbf{e}_{\mathbf{g}_1},\quad
  \mathbf{e}_{\mu_3}=-\sin\theta\,\mathbf{e}_{\mathbf{g}_1}+\cos\theta\,\mathbf{e}_z,
\end{aligned}
\end{equation}
where $\theta=\arctan\left[{\frac{\sqrt{(\varepsilon_T-\varepsilon_z)^2+4|\mathbf{g}_1|^2}+(\varepsilon_T-\varepsilon_z)}{2|\mathbf{g}_1|}}\right]$ is the angle between $\varepsilon_1$-axis and $xy$-plane. In particular, if $\varepsilon_z=\varepsilon_T$, the parameters are reduced to $\varepsilon_1=\mu_1=\varepsilon_T-|\mathbf{g}_1|,\ \varepsilon_2=\mu_2=\varepsilon_T,\ \varepsilon_3=\mu_3=\varepsilon_T+|\mathbf{g}_1|,$ and $\theta=\pi/4$.

\newpage
\section{Non-Abelian AB interference}~\label{A: SDF of AB}\vspace{-35pt}
\subsection*{1. Spin density interference}
In the main text, we have shown that the genuine non-Abelian nature of the AB system is characterized by the nontrivial interference of the two beams $\gamma_\mathrm{I}$, $\gamma_\mathrm{II}$.
In what follows, we will prove that the interfering spin density is always perpendicular to $\Delta\vec{s}=\vec{s}_\mathrm{I}-\vec{s}_\mathrm{II}$  on the screen. The final normalized spinor states of the two beams are given by $|s_i\rangle=\hat{U}_{\gamma_i}|s_0\rangle$ ($i=\mathrm{I},\mathrm{II}$), thus the orientation of the spin density satisfies
\begin{equation}
  \begin{split}
  \vec{s}(y)\propto&\,\left(\langle s_\mathrm{I}|e^{-i\theta_\mathrm{I}(y)}+\langle s_\mathrm{II}|e^{-i\theta_\mathrm{II}(y)}\right)\vec{\hat{\sigma}}\left(e^{i\theta_\mathrm{I}(y)}|s_\mathrm{I}\rangle+e^{i\theta_\mathrm{II}(y)}|s_\mathrm{II}\rangle\right)\\
  \propto&\, \vec{s}_\mathrm{I}+\vec{s}_\mathrm{II}+2\mathrm{Re}\left(e^{i\Delta\theta(y)}\langle s_\mathrm{I}|\vec{\hat{\sigma}}|s_\mathrm{II}\rangle\right).
 \end{split}
\end{equation}
It is obvious that $(\vec{s}_\mathrm{I}+\vec{s}_\mathrm{II})\cdot\Delta\vec{s}=0$, so we only need to prove $\langle s_\mathrm{I}|\vec{\hat{\sigma}}|s_\mathrm{II}\rangle\cdot\Delta\vec{s}=0$. Assuming $|s_\mathrm{I}\rangle=(a,b)^\intercal$, $|s_\mathrm{II}\rangle=(c,d)^\intercal$, we can obtain
\begin{align*}
  \langle s_\mathrm{I}|\vec{\hat{\sigma}}|s_\mathrm{II}\rangle=&\,(a^*d+b^*c)\vec{e}_1+i(-a^*d+b^*c)\vec{e}_2+(a^*c-b^*d)\vec{e}_3,\\
  \vec{s}_\mathrm{I}=&\, 2\,\mathrm{Re}(ab^*)\vec{e}_1+2i\,\mathrm{Im}(ab^*)\vec{e}_2+(|a|^2-|b|^2)\vec{e}_3,\\
  \vec{s}_\mathrm{II}=&\, 2\,\mathrm{Re}(cd^*)\vec{e}_1+2i\,\mathrm{Im}(cd^*)\vec{e}_2+(|c|^2-|d|^2)\vec{e}_3.
\end{align*}
Then, we have
\begin{equation*}
  \langle s_\mathrm{I}|\vec{\hat{\sigma}}|s_\mathrm{II}\rangle\cdot(\vec{s}_\mathrm{I}-\vec{s}_\mathrm{II})
  =(a^*b+b^*d)\left(|a|^2+|b|^2-|c|^2-|d|^2\right)=0.
\end{equation*}
Consequently, the interfering pseudo-spins $\vec{s}(y)$ lie on the great circle perpendicular to the axis $\Delta\vec{s}=\vec{s}_\mathrm{I}-\vec{s}_\mathrm{II}$ on the Bloch sphere. In addition,  the $\mathrm{SU}(2)$ phase factor $\exp\left[i\Phi_i\sigma_i\right]$ of winding around a vortex corresponds to a $SO(3)$ rotation of the pseudo-spin about the $\vec{e}_i$-axis clockwise through an angle $2\Phi_i$, i.e.
$  R_i=\exp\left[-2\Phi_i\vec{e}_i\times\mathbb{I}\right]$. As a result, the final spins of the two beams can be expressed as
\begin{subequations}
  \begin{align}
  \vec{s}_\mathrm{I}=&\,R_{\gamma_\mathrm{I}}\vec{s}_0=R_{\gamma_0}R_2 {R_1^{-1}}\,\vec{s}_0,\\
  \vec{s}_\mathrm{II}=&\,R_{\gamma_\mathrm{II}}\vec{s}_0=R_{\gamma_0}{R_1^{-1}} {R_2}\,\vec{s}_0.
  \end{align}
\end{subequations}
Therefore, the axis $\Delta\vec{s}$ perpendicular to the spin density on screen is determined by the commutator of the two rotation matrices:
\begin{equation}\label{spin axis}
  \begin{split}
  \Delta\vec{s}=&\,R_{\gamma_0}\left[R_2,{R_1^{-1}}\right]\vec{s}_0
  =\,4\sin\Phi_1\,\sin\Phi_2\,\cdot \Big(\vec{u}(\Phi_1,\Phi_2)\times\vec{s}_0-2\vec{u}(\Phi_1,\Phi_2)\cdot(\vec{s}_0\times\vec{e}_3)\vec{e}_3\Big),
\end{split}
\end{equation}
where $R_{\gamma_0}=\mathbb{I}$ in our case, and $\vec{u}(\Phi_1,\Phi_2)$ is a vector defined in the spin space:
 \begin{equation}
\vec{u}=\cos\Phi_1\sin\Phi_2\,\vec{e}_1+\sin\Phi_1\cos\Phi_2\,\vec{e}_2
+\cos\Phi_1\cos\Phi_2\,\vec{e}_3,
\end{equation}
whose norm $|\vec{u}|^2=1-\sin^2\Phi_1\,\sin^2\Phi_2\leq1$.  
If the initial spin satisfies $\vec{u}(\Phi_1,\Phi_2)\times\vec{s}_0-2\vec{u}(\Phi_1,\Phi_2)\cdot(\vec{s}_0\times\vec{e}_3)\vec{e}_3=0$, the spins of the two beams will finally evolve into the same direction $\vec{s}_\mathrm{I}=\vec{s}_\mathrm{II}$, and hence the orientation of the spin density will not fluctuate on the screen. 
Nevertheless, the nontrivial AB effect can still be detected from the phase shift $\delta\theta$ and the contracted amplitude $b<1$ in intensity interference pattern in this situation.

\begin{figure}
 \includegraphics[width=0.6\textwidth]{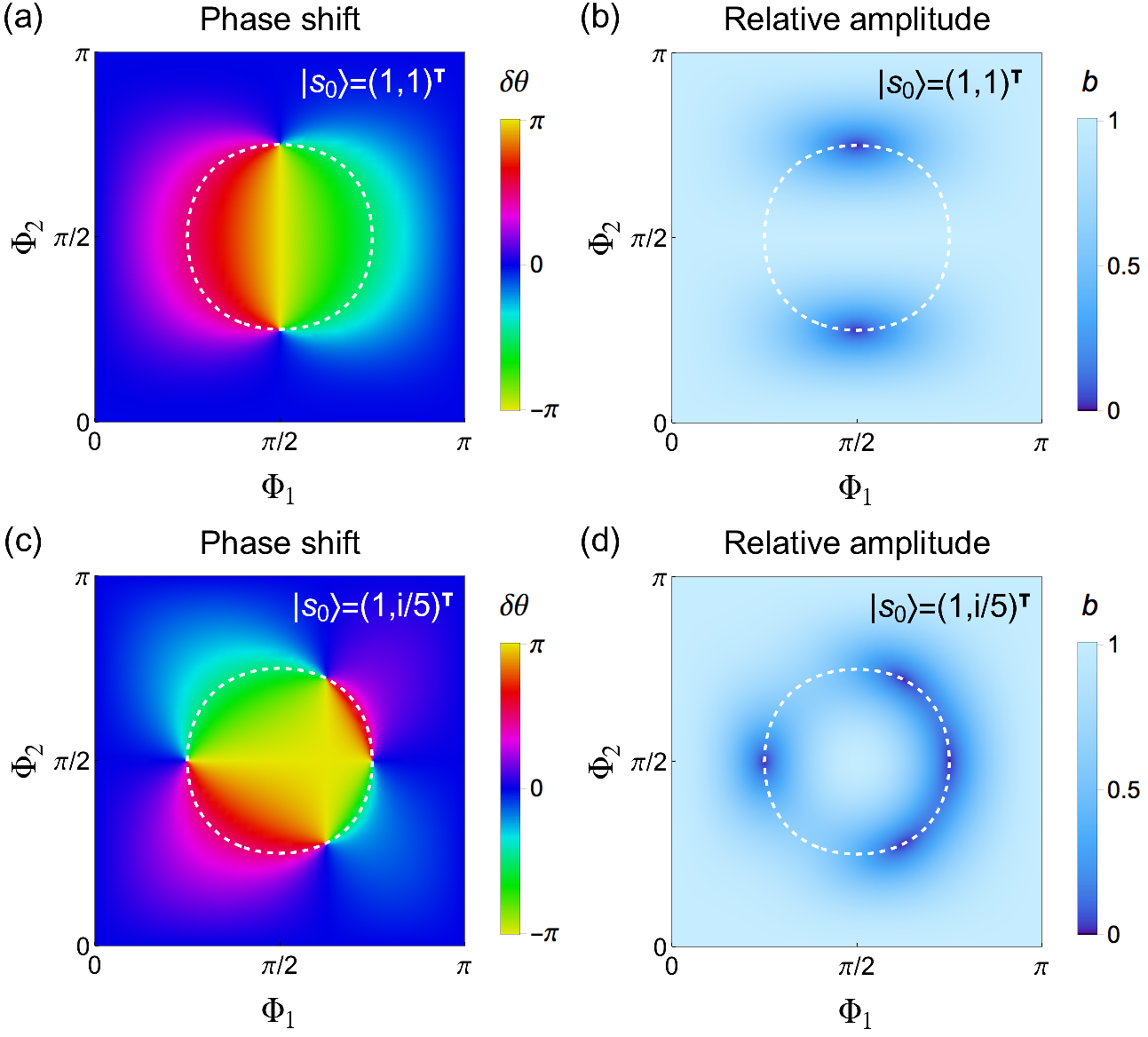}
 \caption{\label{fig5}  Phase shift $\delta\theta$ and  relative amplitude $b$ of the intensity interference fringes varying with the vortex fluxes $\Phi_1$, $\Phi_2$ for two incident spinors: (a,b) $|s_0\rangle=(1,1)^\intercal$, and (c,d) $|s_0\rangle=(1,i/5)^\intercal$. On the white dashed circle, the Wilson loop of the closed path $c_0=\gamma_\mathrm{I}\circ\gamma^{-1}_{\mathrm{II}}$ equals zero, $W(c_0)=0$. For a fixed $|s_0\rangle$, there exist singularities of the phase shift $\delta\theta$ lying on the white circle, and the relative amplitude is reduced to zero, i.e. $b=0$, at such singularities.}
\end{figure}

As for the intensity interference part $|\psi|^2$, the phase shift $\delta\theta$ and the relative amplitude $b$ of the interference pattern can be explicitly written as functions of the vortex fluxes $\Phi_1$, $\Phi_2$ and the initial spin $\vec{s}_0$: 
\begin{align}
&\delta\theta(\Phi_1,\Phi_2,\vec{s}_0)=\arctan\left[\frac{2\sin\Phi_1\sin\Phi_2}{1-2\sin^2\Phi_1 \sin^2\Phi_2}\vec{s}_0\cdot\vec{u}(\Phi_1,\Phi_2)\right],\\[3pt]
&\begin{aligned}
b(\Phi_1,\Phi_2,\vec{s}_0)=&\Big[\big(1-2\sin^2\Phi_1 \sin^2\Phi_2\big)^2+4\sin^2\Phi_1\sin^2\Phi_2 \big(\vec{s}_0\cdot\vec{u}(\Phi_1,\Phi_2)\big)^2\Big]^{1/2}.
\end{aligned}
\end{align}
They are, obviously, periodic with respect to $\Phi_1$ and $\Phi_2$.
In general, the relative amplitude $b\leq1$ with equality if and only if $\vec{s}_0$ is parallel to $\vec{u}$. Therefore, a contracted amplitude ($b<1$) is a signature of the genuine non-Abelian AB interference. 
At the same time, the Wilson loop of  $c_0$ is given by
\begin{equation}
  W(c_0)=\mathrm{Tr}\,\hat{\mathcal{U}}_{c_0}=2b\cos\delta\theta=2-4\sin^2\Phi_1 \sin^2\Phi_2.
\end{equation}
In particular, when $\sin^2\Phi_1\sin^2\Phi_2=1/2$, the phase shift converges to two discrete numbers: $\delta\theta=\pm\pi/2$, and the relative amplitude is reduced to $b(\Phi_1,\Phi_2)=
\sqrt{2}\left| \vec{s}_0\cdot\vec{u}(\Phi_1,\Phi_2)\right|$, and the Wilson loop is reduced to zero: $W(c_0)=b\cos\delta\theta=0$. Furthermore, if $\sin^2\Phi_1\sin^2\Phi_2=1/2$ and $\vec{s}_0\cdot\vec{u}=0$ are achieved simultaneously, the relative amplitude is reduced to zero $b=0$, as a result, the spatial fluctuation of the quasi-intensity $|\psi|^2$ vanishes. According to Eq.~(25) in the main text, $b=0$ implies that the finial spinors of the two beams are orthogonal: $\langle\psi_\mathrm{I}(y)|\psi_\mathrm{II}(y)\rangle=a(y)^2\langle s_\mathrm{I}| s_\mathrm{II}\rangle=0$, and hence their superposition leads to  no intensity fluctuation but a pure spin rotation around the axis $\Delta\vec{s}=2\vec{s}_\mathrm{I}$ on the screen.


Supplementary Figure~\ref{fig5} shows the non-Abelian-honolomy induced phase shift $\delta\theta$  and relative amplitude   $b$ as functions of $\Phi_1$ and $\Phi_2$ for two fixed incident spinors.  With the initial spinor $|s_0\rangle=(1,1)^\intercal$, we observe, in Supplementary Figure~\ref{fig5}(a), two  singularities  at $\{\Phi_1,\Phi_2\}=\{\pi/2,\pi/4\}$ and $\{\Phi_1,\Phi_2\}=\{\pi/2,3\pi/4\}$, where the phase shift $\delta\theta(\Phi_1,\Phi_2)$ is ill-defined. Meanwhile, the interference amplitude  $b$ drops to zero at the singularities, as shown in Supplementary Figure~\ref{fig5}(b). This result is supported by examining the spinors of the two beams, which evolve eventually into the pair of orthogonal states $|\mathord{\uparrow}\rangle=(1,0)^\intercal$, $|\mathord{\downarrow}\rangle=(0,1)^\intercal$ for both singularities.
Similarly, with the incident spinor $|{s}_0\rangle=(1,i/5)^\intercal$, there are four singularities of $\delta\theta(\Phi_1,\Phi_2)$ corresponding to the zero points of $b(\Phi_1,\Phi_2)$ as  shown in Supplementary Figure~\ref{fig5} (c,d). 
All of such singularities of $\delta\theta(\Phi_1,\Phi_2)$ lie on the curve of $\sin^2\Phi_1\sin^2\Phi_2=1/2$ for any initial spin $\vec{s}_0$.

\subsection*{2. Spin projected interference}

Apart from directly observing the total intensity $|\psi|^2(y)$ on the screen discussed in the main text,  the nontrivial interference effect can alternatively be detected via measuring the projected intensity onto a certain spin direction with a spin filter.
For a certain spin direction $\vec{n}$, the corresponding projection operator reads $|n\rangle\langle n|=\frac{1}{2}(\mathbb{I}+\vec{n}\cdot\vec{\hat{\sigma}})$. The projected intensity  on this  spin direction is given by
\begin{equation}\label{projection}
\begin{split}
\left|\langle n\,|\,{\psi}(y)\rangle\right|^2 &=\langle{\psi(y)}\left|n\rangle\langle n\right|{\psi(y)}\rangle=\left\langle{\psi}(y)\left|\frac{1}{2}\left(\mathbb{I}+\vec{n}\cdot\vec{\hat{\sigma}}\right)\right|{\psi}(y)\right\rangle  \\
&=\frac{1}{2}|{\psi}|^2(y)\Big(1+\vec{n}\cdot\vec{s}(y)\Big),
\end{split}
\end{equation}
where $|\psi\rangle=|\psi_\mathrm{I}\rangle+|\psi_\mathrm{II} \rangle$, and $\vec{s}= \langle{{\psi} |\vec{\hat{\sigma}}| {\psi}}\rangle/|\psi|^2$.  In general, a projected interference pattern can exhibit the nontrivial information of the intensity fringes $|\psi|^2(y)$ and the spin fluctuation $\vec{s}(y)$ simultaneously. If $\vec{n}$ is parallel to $\Delta\vec{s}$ (see Eq.~(\ref{spin axis}), we have $\vec{n}\cdot\vec{s}(y)\equiv0$, and hence the projected interference pattern would be in the same shape as the total intensity $|\psi|^2(y)$. 

 \begin{figure}[t]
 \includegraphics[width=1\textwidth]{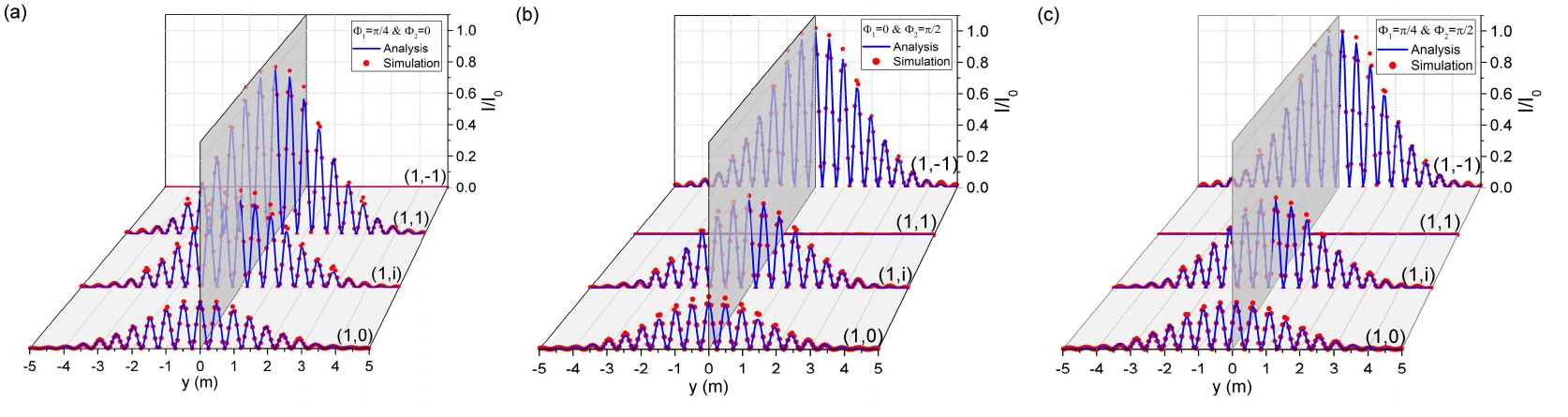}
 \caption{\label{fig4cd}   Spin projected interference patterns for the non-Abelian AB system shown in Fig.2 of the main text with the vortex fluxes (a) $\{\Phi_1,\Phi_2\}=\{\pi/4,0\}$, (b) $\{\Phi_1,\Phi_2\}=\{0,\pi/2\}$, and (c) $\{\Phi_1,\Phi_2\}=\{\pi/4,\pi/2\}$.
 In the three cases, the incident spinor  is given by $|s_0\rangle=(1,1)^\intercal$,  the spin-filter states $|n\rangle$ for the projection are selected as  $(1,0)^\intercal$, $(1,i)^\intercal$, $(1,1)^\intercal$ and $(1,-1)^\intercal$.
 The red dotes and the blue lines correspond to the full-wave simulation and the theoretical results using \Eq{projection} respectively. The grey reference planes located at $y$=0 marks the original position of the central peak without phase shift. In (a,b), the central peak is coincident with the grey plane, thus there is no phase shift $\delta\theta=0$, while in (c) the phase shift is $\delta\theta=\pi/2$.}
 \end{figure}

We carry out the numerical simulation of the interference in our designed non-Abelian AB system where the finial states of the two beams $\gamma_\mathrm{I},\ \gamma_\mathrm{II}$ are projected onto certain spin directions  on the detection plane. First, we study the situation of a single vortex by setting either $\Phi_1=0$ or $\Phi_2=0$.  
In this case, the system is reduced to being essentially Abelian. Because of the same AB phase factor for the two beams, the superposed spin density $|\psi|^2(y)\vec{s}$ are uniformly oriented on the screen, and there is no phase shift ($\delta\theta=0$) and amplitude contraction ($b=1$). Thus, the projected interference onto any direction $\vec{n}$ is consistent with the total intensity up to a scaling parameter $(1+\vec{n}\cdot\vec{s})/2$.
In Supplementary Figure~\ref{fig4cd}(a), the projected interference patterns onto 4 different directions are plotted, where only the $\hat{\sigma}_1$ vortex exists with the flux $\Phi_1=\pi/4$  and the incident spinor reads $|s_0\rangle=(1,1)^\intercal$.  As $|s_0\rangle$ is an eigenstate of $\hat{\sigma}_1$, the spin is conserved during propagation. Accordingly, the projection onto $(1,1)^\intercal$ are identical to the total intensity $|\psi|^2(y)$, while the projection onto $(1,-1)^\intercal$ vanishes as shown in Supplementary Figure~\ref{fig4cd}(a), and the projected intensities onto $(1,0)^\intercal$ and  $(1,i)^\intercal$ are exactly halved. 
Supplementary Figure~\ref{fig4cd}(b) shows the projected interference patterns for a single $\hat{\sigma}_2$ vortex carrying the flux $\Phi_2={\pi}/{2}$.  As  the incident spinor $|s_0\rangle=(1,1)^\intercal$ is not an eigenstate of $\hat{\sigma}_2$, the spins of the two beams rotate along the trajectories and finally convert to $|s_\mathrm{I}\rangle=|s_\mathrm{II}\rangle=(1,-1)^\intercal$, evidenced by the vanishing projection onto $(1,1)^\intercal$.


In Supplementary Figure~\ref{fig4cd}(c), we illustrate the projected interference patterns in a genuine non-Abelian AB system with two noncommutative vortices. And the specific fluxes $\{\Phi_1,\Phi_2\}=\{\pi/4,\pi/2\}$ indicate a zero Wilson loop $W(c_0)=0$. And since $\vec{u}=\sqrt{2}/2\vec{e}_1$, $\vec{s}_0=\vec{e}_1$ satisfy $\vec{u}\times\vec{s}_0-2\vec{u}\cdot(\vec{s}_0\times\vec{e}_3)\vec{e}_3=0$, the finial spins of the two beams fall into a uniform direction $\vec{s}_\mathrm{I}=\vec{s}_\mathrm{II}=-\vec{e}_1$, leading to the vanishing projection onto $(1,1)^\intercal$. Therefore, the projected intensity on any direction $\vec{n}$ is always proportional to the total intensity: $|\langle n|\psi(y)\rangle|^2=\frac{1}{2}|\psi(y)|^2(1-\vec{n}\cdot\vec{e}_1)$. Meanwhile, the phase shift and relative amplitude for all the projected interference patterns are fixed as $\delta\theta=\pi/2$, $b=\sqrt{2}\,|\vec{s}_0\cdot\vec{u}|=1$, which are confirmed by the numerical results in Supplementary Figure~\ref{fig4cd}(c).

\section{\hspace{-4pt}Designing non-Abelian AB system with gyroelectric materials}
Although the genuine non-Abelian AB system designed in the main text has a rather simple geometry, the background materials  is required to support both gyroelectric and gyromagnetic responses, meanwhile, their electric and magnetic gyrotation vectors should obey a rigorous relation. In this section, we offer an alternative design of the non-Abelian AB system using gyroelectric materials without gyromagnetic response which would be more easily realized in practice.

 \begin{figure}[h]
\includegraphics[width=0.33\columnwidth,clip]{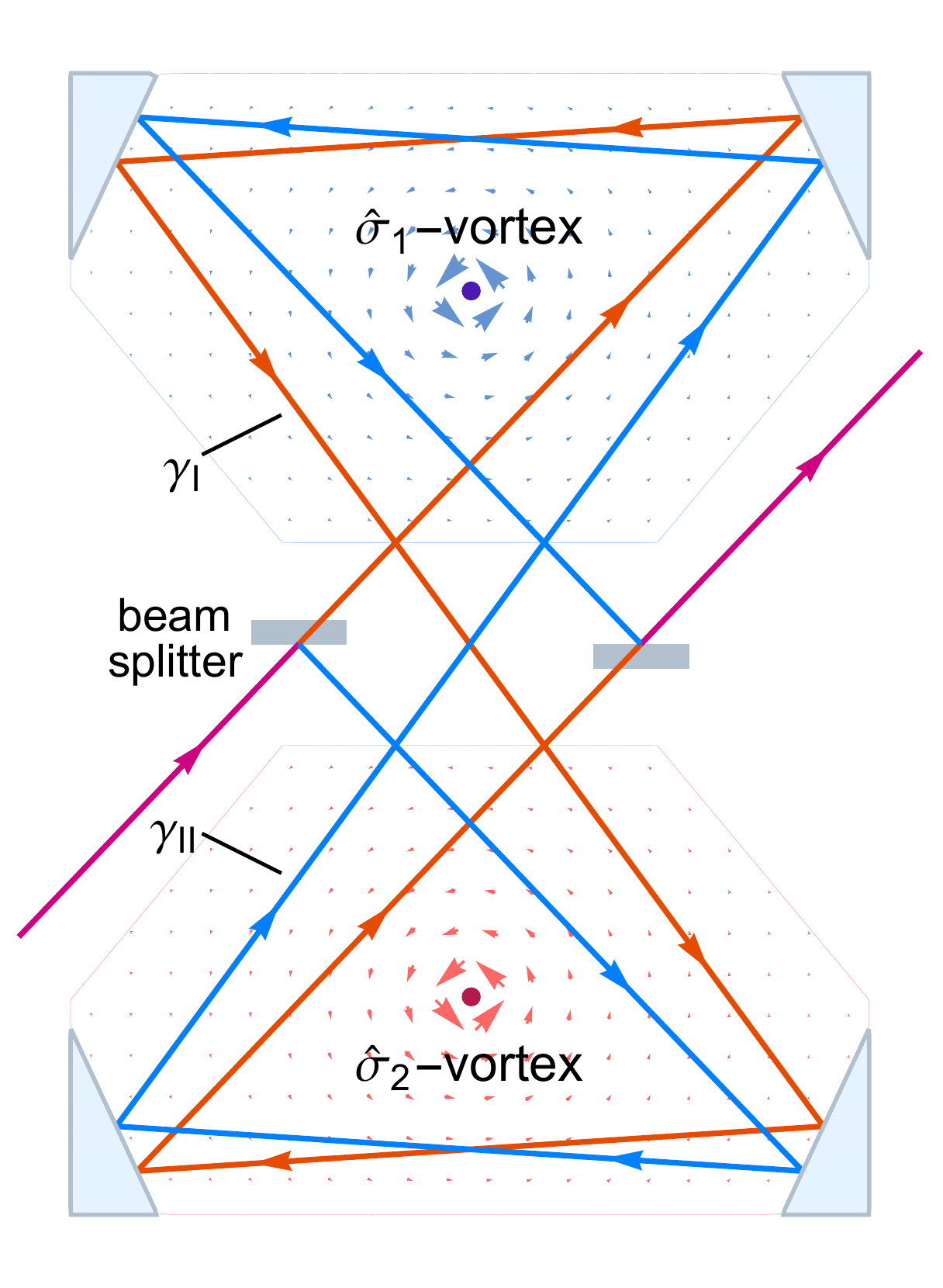}
\caption{Alternative design of the non-Abelian AB system with two interfering optical paths $\gamma_\mathrm{I}$, $\gamma_\mathrm{II}$, where the background light blue (red) arrows denote the $\hat{\sigma}_1$ ($\hat{\sigma}_2$) component $\mathcal{A}^1$ ($\mathcal{A}^2$) of the synthetic non-Abelian vector potential. The media used to imitate $\hat{\sigma}_1$-vortex and $\hat{\sigma}_1$-vortex  in the upper and lower half-spaces are, respectively, an inhomogeneous reciprocal anisotropic material and an inhomogeneous gyroelectric material.
\label{reducedABsystem}}
\end{figure}

As shown in Supplementary Figure~\ref{reducedABsystem}, the lower half-plane contains a synthetic non-Abelian vortices made up of gyroelectric materials:
\begin{equation}\label{gyroelectric}
  \tensor{\varepsilon}/\varepsilon_0=\left(\begin{array}{c|c} \varepsilon_T \tensor{I}_{2\times 2} & {i\,}\mathbf{g}_1\\\hline
-i\,\mathbf{g}_1 & \varepsilon_z
\end{array}\right),\qquad \tensor{\mu}/\mu_0=\left(\begin{array}{c|c} \mu_T \tensor{I}_{2\times 2} & 0\\\hline
0 & \mu_z
\end{array}\right),
\end{equation}
where $\varepsilon_T=\alpha\,\mu_T\ (\alpha>0)$ is supposed to be homogeneous, while $\mathbf{g}_1$, $\varepsilon_z$, and $\mu_z$ are real-valued functions of coordinates $x,\,y$. By rescaling the vacuum permittivity $\varepsilon'_0=\alpha\,\varepsilon_0$, we obtain the synthetic vector and scalar gauge potentials:
\begin{equation}
  \hat{\mathcal{A}}=k_0\frac{\mathbf{g}_1\times\mathbf{e}_z}{2\sqrt{\alpha}}\,\hat{\sigma}_2,\quad
  \hat{\mathcal{A}}_0=\frac{k_0}{2\sqrt{\alpha}}\mathbf{e}_z\cdot\left(\nabla\times\mathbf{g}_1\right) \hat{\sigma}_1\, + \, \frac{{k_0}^2}{2}\left(\varepsilon_z-\alpha\mu_z-\frac{|\mathbf{g}_1|^2}{\alpha}\right)\hat{\sigma}_3,\quad
V_0=\frac{k_0^{\,2}}{2}\left[\frac{|\mathbf{g}_1|^2}{2\alpha}-\varepsilon_z-\alpha\mu_z\right].
\end{equation}
Since $\mathbf{g}=\mathbf{g}_1\times\mathbf{e}_z$ is precisely the gyration vector of the gyroelectric materials, $\mathcal{A}_2$ is parallel to the gyration vector everywhere.
In order to eliminate the synthetic non-Abelian magnetic fields $\hat{\mathcal{B}}=\nabla\times\hat{\mathcal{A}}_2\hat{\sigma}_2=0$ in the whole medium (except for a small domain which is simplified as a singularity), we let ${\mathcal{A}}_2$ be an irrotational vortex with the center at $-\mathbf{r}_0=-r_0\mathbf{e}_y$:
\begin{equation}
  {\mathcal{A}}_2=\frac{\Phi_2}{2\pi}\nabla \arctan\left(\frac{x}{y+r_0}\right)=\frac{\Phi_2}{2\pi |\mathbf{r}+\mathbf{r}_0|^2}\left[-(y+r_0)\mathbf{e}_x+x\mathbf{e}_y\right].
\end{equation}
As such, the off-diagonal term of $\tensor{\varepsilon}$ is given by
\begin{equation}
  \mathbf{g}_1=\frac{2\sqrt{\alpha}}{k_0}\,\mathbf{e}_z\times\hat{\mathcal{A}}_2=-\frac{\Phi_2\sqrt{\alpha}}{\pi k_0|\mathbf{r}+\mathbf{r}_0|^2}\left[x\mathbf{e}_x+(y+r_0)\mathbf{e}_y\right].
\end{equation}
In order to extinguish non-Abelian and Abelian electric fields $\hat{\mathcal{E}}=\nabla\hat{\mathcal{A}}_0+i[\hat{\mathcal{A}}_0,\hat{\mathcal{A}}]$ and $\mathsf{E}=-\nabla V_0$, the scalar potentials should satisfy $\hat{\mathcal{A}_0}=0$ and $V_0=\mathrm{const.}$.
The benefit of adopting the irrotational vortex to form $\mathcal{A}_2$ is that  $\mathcal{A}_0^1\propto \mathbf{e}_z\cdot\left(\nabla\times\mathbf{g}_1\right)=0$ can be simultaneously satisfied. Therefore, the requirements to the scalar potentials can be met providing that the $z$-components of permittivity and permeability satisfy
\begin{subequations}
  \begin{align}
  \varepsilon_z&=\frac{1}{2}\left(\frac{|\mathbf{g}_1|^2}{\alpha}+\mathrm{const.}\right)=\frac{1}{2}\left(\frac{\Phi_2^{\,2}}{\pi^2k_0^{\,2}}|\mathbf{r}+\mathbf{r}_0|^{-2}+\mathrm{const.}\right),\\
    \mu_z&=\frac{1}{2\alpha}\left(-\frac{|\mathbf{g}_1|^2}{\alpha}+\mathrm{const.}\right)=\frac{1}{2\alpha}\left(-\frac{\Phi_2^{\,2}}{\pi^2k_0^{\,2}}|\mathbf{r}+\mathbf{r}_0|^{-2}+\mathrm{const.}\right).
\end{align}
\end{subequations}
To realize this kind of gyroelectric materials, a promising approach is to design suitable gyroelectric metamaterials with either passive magneto-optic composites~\cite{sadatgol2016enhanced} or active structures~\cite{wang2012gyrotropic} which have the advantages of strong gyrotropic response and turnable anisotropy. In particular, if the gyroelectricity is induced by magneto-optic effect, the gyration vector $\mathbf{g}$ should be parallel and proportional to the external magnetic field. For the present case, if a line current alone $z$-axis is placed at the $-\mathbf{r}_0$, the generated magnetic field  forms an irrotational vortex, and the induced gyration vector gives exactly the $\hat{\sigma}_2$-vortex.

In the upper half-plane, the synthetic $\hat{\sigma}_1$-vortex with the center at $\mathbf{r}_0=r_0\mathbf{e}_y$ can be constructed by reciprocal anisotropic materials
\begin{equation}
  \tensor{\varepsilon}/\varepsilon_0=\left(\begin{array}{c|c} \varepsilon_T \tensor{I}_{2\times 2} & \mathbf{g}'_1\\\hline
\mathbf{g}'_1 & \varepsilon_z
\end{array}\right),\qquad \tensor{\mu}/\mu_0=\left(\begin{array}{c|c} \mu_T \tensor{I}_{2\times 2} & 0\\\hline
0 & \mu_z
\end{array}\right),
\end{equation}
with $\varepsilon_T=\alpha\,\mu_T\ (\alpha>0)$ being homogeneous and $\mathbf{g}'_1$, $\varepsilon_z$, $\mu_z$ being real-valued functions of $x,\,y$.
Similarly to the lower half-space, we let $\hat{\mathcal{A}}=\mathcal{A}_1\hat{\sigma}_1$ form an irrotational vortex in the anisotropic material:
\begin{equation}
  {\mathcal{A}}_1=\frac{\Phi_1}{2\pi}\nabla \arctan\left(\frac{x}{y-r_0}\right)=\frac{\Phi_1}{2\pi |\mathbf{r}-\mathbf{r}_0|^2}\left[-(y-r_0)\mathbf{e}_x+x\mathbf{e}_y\right],
\end{equation}
and let $\hat{\mathcal{A}_0}=0$ and $V_0=\mathrm{const.}$. Then the material parameters are obtained as follows
\begin{subequations}
\begin{align}
  \mathbf{g}'_1 &=-\frac{\Phi_1\sqrt{\alpha}}{\pi k_0|\mathbf{r}-\mathbf{r}_0|^2}\left[x\mathbf{e}_x+(y-r_0)\mathbf{e}_y\right],\\
  \varepsilon_z&=\frac{1}{2}\left(\frac{\Phi_1^{\,2}}{\pi^2k_0^{\,2}}|\mathbf{r}-\mathbf{r}_0|^{-2}+\mathrm{const.}\right),\\
    \mu_z&=\frac{1}{2\alpha}\left(-\frac{\Phi_1^{\,2}}{\pi^2k_0^{\,2}}|\mathbf{r}-\mathbf{r}_0|^{-2}+\mathrm{const.}\right).
\end{align}
\end{subequations}

In comparison with the design in the main text, a disadvantage of the reduced non-Abelian AB system is that the sudden change of the synthetic gauge potentials will induce nonzero non-Abelian gauge flux at the boundaries of the materials. 
In order to prevent the boundary flux from affecting the total flux enclosed by the optical paths, the optical paths encircling the non-Abelian vortices should form closed loops before traversing the boundaries of the upper or lower pieces of materials. Thus we designed a new optical path diagram shown in Supplementary Figure~\ref{reducedABsystem}. The new AB system can give rise to the identical interfering results as the original  non-Abelian system in the main text, namely the optical beams passing through the two paths $\gamma_\mathrm{I}$ and $\gamma_\mathrm{II}$ can gain the non-Abelian phase factors $\hat{U}_{\gamma_\mathrm{I}}=\hat{U}_2^{-1}\hat{U}_1$ and $\hat{U}_{\gamma_\mathrm{II}}=\hat{U}_1\hat{U}_2^{-1}$ respectively, where $\hat{U}_i=\exp\left[i\Phi_i\hat{\sigma}_i\right]$ ($i=1,2$). Since $\hat{U}_1$ and $\hat{U}_2$ do not commute with each other, the opposite sequences of winding around the two vortices lead to different non-Abelian phase factors $\hat{U}_{\gamma_\mathrm{I}}\neq\hat{U}_{\gamma_\mathrm{II}}$.

\newpage
\def\bibsection{\section*{Supplementary References}}
\bibliographystyle{naturemag}
\bibliography{reference}